\newtheorem{fact}{Fact}
\newcommand\Osq{\mathbin{\text{\scalebox{.84}{$\square$}}}}
\newcommand{\paren}[1]{\left(#1\right)}
\newcommand{\set}[1]{\left\{#1\right\}}
\newcommand{\Z}{\mathbb{Z}}
\newcommand{\R}{\mathbb{R}}
\newcommand{\C}{\mathbb{C}}
\renewcommand{\vec}[1]{\mathbf{#1}}
\newcommand{\CC}{\mathrm{CC}}
\DeclareMathOperator{\BW}{BW}
\newcommand{\one}{\mathbbm{1}}
\DeclareMathOperator{\Aut}{Aut}
\DeclareMathOperator{\SPEC}{spec}
\DeclareMathOperator{\SPAN}{span}
\DeclareMathOperator{\CCC}{CCC}
\DeclareMathOperator{\DV}{DataVortex}
\DeclareMathOperator{\PT}{PT}
\DeclareMathOperator{\DF}{DragonFly}
\DeclareMathOperator{\BF}{Butterfly}
\DeclareMathOperator{\diam}{diam}
\DeclareMathOperator{\SF}{SlimFly}
\newcommand{\conn}{\textrm{\reflectbox{$\rightsquigarrow$}}}
\newcommand{\spec}[1]{\SPEC\!\paren{#1}}
\newcommand{\linspan}[1]{\SPAN\!\set{#1}}
\newcommand{\inner}[3][]{\ensuremath{\left< #2 , #3 \right>_{#1}}}
\newcommand{\norm}[2][]{\ensuremath{\left\| #2 \right\|_{#1}}}
\newcommand{\size}[1]{\left| #1 \right|}
\newcommand{\order}[1]{\left\| #1 \right\|}
\newcommand{\hidden}[1]{}
\newcommand{\bigOh}[1]{\mathcal{O}\!\paren{#1}}
\newcommand{\lilOh}[1]{\ensuremath{\it{o}\!\paren{#1}}}
\newcommand{\floor}[1]{\left\lfloor #1 \right\rfloor}
\newcommand{\ceil}[1]{\left\lceil #1 \right\rceil}
\newcommand{\abs}[1]{\left|#1\right|}
\begin{document}

\title{Ramanujan Graphs and the Spectral Gap of Supercomputing Topologies
}


\author{Sinan G. Aksoy  \and Paul Bruillard \and Stephen~J.~Young \and
Mark Raugas}


\institute{Sinan G. Aksoy\at
	Pacific Northwest National Laboratory \\
	Richland, WA 99352 \\
              \email{sinan.aksoy@pnnl.gov}           
           \and
           Paul Bruillard \at
              \email{bruillardp@gmail.com}           
           \and
           Stephen J. Young \at
	Pacific Northwest National Laboratory \\
	Richland, WA 99352 \\
              \email{stephen.young@pnnl.gov}           
              \and
              Mark Raugas \at
              Pacific Northwest National Laboratory \\
              Seattle, WA 98109 \\
              \email{mark.raugas@pnnl.gov}
}

\date{\today} 

\maketitle

\begin{abstract}
Graph eigenvalues play a fundamental role in controlling structural
properties which are critical considerations in the design of
supercomputing interconnection networks, such as bisection bandwidth, diameter, and fault
tolerance. This motivates considering graphs with optimal spectral expansion, called {\it Ramanujan graphs}, as potential candidates for interconnection networks. In this work, we explore this possibility by comparing Ramanujan graph properties against those of a wide swath of current and proposed supercomputing topologies. We derive analytic expressions for the spectral gap, bisection bandwidth, and diameter of these topologies, some of which were previously unknown. We find the spectral gap of existing topologies are well-separated from the optimal achievable by Ramanujan topologies, suggesting the potential utility of adopting Ramanujan graphs as interconnection networks. \\
\keywords{Ramanujan graphs \and expander graphs \and supercomputing topologies \and interconnection networks}
\end{abstract}

\section{Introduction}



One of the significant challenges in the use of modern cluster-based
supercomputers is efficiently, robustly, and quickly handling the necessary communication between nodes in the cluster. Both the current and next-generation supercomputer designs use highly structured network topologies, such as the low-dimensional torus, the flattened butterfly, or the dragonfly topology in order to have a straightforward routing scheme while attempting to mitigate the traffic congestion in high communication applications. However, ``preliminary experiments on Edison, a Cray XC30 at {NERSC}, have shown that for communications-heavy applications, inter-job interference and thus network congestion remains an important factor" \cite{Bhatele:Dragonfly}. {Indeed, recent research \cite{Prieto-Castrillo2014} further attests to the impact of network structure on performance metrics.} In fact, even with a relatively low utilization (40-50\%), communication patterns can cause an exponential explosion in latency \cite{Kim2008}. As a consequence of the interaction between the structure of internode communication in various classes of algorithms and the underlying network topologies, certain supercomputers gain a reputation for being more or less suited to a certain class of problems.

In this regard, the evolution of supercomputing interconnection topologies stands in contrast to the surprising success of the ``evolved" topology of the Internet. Specifically, despite having no global design, the Internet structure has unexpectedly \cite{InternetCollapse} ended up as a robust, general purpose, and relatively low-latency system for its size. In the last few decades, a consensus has developed that the primary explanation for the good performance of the internet topology is that the internet topology belongs to a class of graphs known as {\it expanders}. That is, if a graph is a sufficiently high-quality expander then there exists efficient, distributed, online, local, and low-congestion algorithms to route information among the vertices of the graph \cite{Chung:RoutingPermutations,Chung:spectral,Frieze:DisjointExpander,Mihail:congestion,Kleinberg:ShortExpander,Vazirani:approx}. 


This view point leads naturally to considering optimal expanders,
known as Ramanujan graphs, as potential supercomputing topologies.  In
this work, we explore the potential benefits of adopting Ramanujan
graphs by conducting an analysis of current and proposed
supercomputing topologies. The paper is organized as follows. In
Section \ref{sec:back}, we provide the necessary preliminaries on
spectral graph theory, as well as survey results showing eigenvalues control a number of critical properties pertinent to interconnection design, such as bisection bandwidth, diameter, and fault tolerance. Second, in Section \ref{sec:raman} we define the Ramanujan property of graphs, and review explicit constructions of Ramanujan graphs. In Section \ref{sec:spec}, we survey variety of supercomputing topologies and derive analytic expressions for their spectral expansion, bisection bandwidth, and diameter. Across the topologies surveyed, we find some or all of these properties are well separated from those of Ramanujan topologies. Consequently, our results suggest transition to Ramanujan topologies may have the potential to significantly improve metrics for facility of communication.

%
%
%
%
%
%
%
%
%
%


\section{Preliminaries}\label{sec:back}

Before proceeding with our discussion of expanders and Ramanujan graphs, we first recall some relevant terminology and results from graph theory. A graph $G=(V,E)$ is a set of vertices  $V$ edges $E$, where each edge is an unordered pair of vertices. The number of edges incident to a vertex is called its {\it degree}; if every vertex has degree $k$, the graph is called {\it $k$-regular}. Spectral graph theory is the study of eigenvalues and eigenvectors of matrices associated with graphs. The {\it adjacency matrix} $A$ of an $n$-vertex graph is an $n \times n$ matrix where
\[
A_{ij}=\begin{cases} 1 & \mbox{ if } \{i, j\} \in E \\ 0 & \mbox{ otherwise} \end{cases}.
\]
As $A$ is symmetric, its eigenvalues are real, which we denote
\[
\lambda_1 \geq \lambda_2 \geq \dots \geq \lambda_n.
\] 
For a connected graph, the largest eigenvalue $\lambda_1=k$ if and only if the graph is $k$-regular; furthermore, if $G$ is connected, $\lambda_1 -\lambda_2 > 0$, and the quantity $\lambda_1-\lambda_2$ is referred to as the {\it spectral gap} of $G$. 
\begin{figure}
\centering
\includegraphics[scale=0.34]{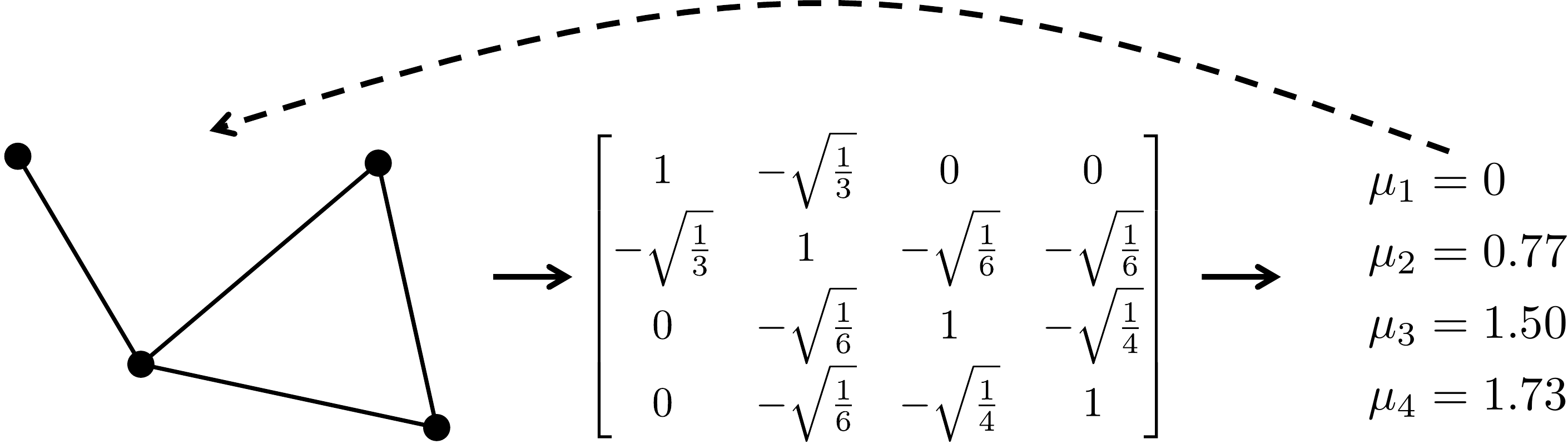}
\caption{{How eigenvalues are obtained from a graph, represented by the solid lines. Spectral graph theory analyzes eigenvalues to deduce properties of the graph, represented by the dashed line. }}\label{specTh}
\end{figure}

Two other graph matrices that whose spectra is often studied are the {\it Laplacian matrix} $L =D-A$ and {\it normalized Laplacian matrix} $\mathcal{L}=D^{-1/2}LD^{-1/2} $, where $D$ denotes the diagonal matrix with the vertex degrees on the diagonal. Unlike the adjacency matrix, both of these matrices are necessarily positive semi-definite, and their spectra characterizes a number of properties which are not captured by adjacency eigenvalues. Due to its intimate connection to random walks and stochastic processes on graphs, the normalized Laplacian matrix is perhaps the most appropriate matrix for characterizing expansion properties of graphs, particularly for irregular graphs. However, we note that if a graph $G$ is $k$-regular (as is the case for a number of supercomputing topologies), then $\mathcal{L}=I-\tfrac{1}{k}A$, from which it is clear that the spectra of all three matrices are related by trivial shifts and scalings by $k$ (and hence functionally the same). We denote the spectrum of the Laplacian matrix $L$ by
\[
0=\rho_1 \leq \rho_2 \leq \dots \leq \rho_n,
\]
and that of the normalized Laplacian $\mathcal{L}$ by
 \[
 0=\mu_1 \leq \mu_2 \leq \dots \leq \mu_n \leq 2.
 \]
 {We give an example of a graph, its normalized Laplacian, and associated eigenvalues in Figure \ref{specTh}}. As we will later see, the eigenvalues $\lambda_2, \rho_2$, and $\mu_2$ play a critical role in controlling expansion properties and defining Ramanujan graphs. In particular, the eigenvalue $\rho_2$ is called the {\it algebraic connectivity} of a graph. Due to its prevalence in the literature (see for instance \cite{Brouwer:SpectraGraphs,Biggs:AGT,Godsil:AGT}), we will choose to present our results in terms of this spectrum, keeping in mind that if $G$ is $k$-regular, then
\[
\rho_2=k\cdot \mu_2 = k - \lambda_2.
\]

Before proceeding, we describe the spectra of two graphs: the path and the cycle graph. We highlight these graphs as they are frequently elemental to the design of fundamental topologies (e.g. the torus, mesh, and hypercube are all obtained via graph products of cycles or paths). Unsurprisingly, their spectra is highly structured. 

\begin{itemize}

\item The path of length $n-1$, denoted $P_{n}$, has $n-1$ edges and $n$ vertices, and adjacency spectrum
\[
2\cos\left(\frac{\pi j }{n+1}\right) \mbox{ for } j \in \set{1,\dots,n}.
\]
\item If the path of length $n-1$ is modified to add self-loops at each of the endpoints, denoted $P'_{n}$, the adjacency spectrum becomes 
\[
2\cos\left(\frac{\pi j }{n}\right) \mbox{ for } j \in \set{0,\dots,n-1}.
\]
\item The cycle of length $n$, denoted $C_n$, has $n$ edges and vertices, and adjacency spectrum
\[
2\cos\left(\frac{2 \pi j}{n}\right) \mbox{ for } j \in \set{0,\dots,n-1}. 
\]
\end{itemize}

Finally, we use standard asymptotic notation: a function $f(n)=O(g(n))$ if for all sufficiently large values of $n$ there exists a positive constant $c$ such that $|f(n)| \leq c \cdot |g(n)|$; similarly, we write $f(n)=\Omega(g(n))$ if $g(n) = O(f(n))$, and $f(n)=\Theta(g(n))$ if both $f(n)=O(g(n))$ and $f(n)=\Omega(g(n))$. Lastly, $f(n)=o(g(n))$ if $\lim_{n \to \infty} \frac{f(n)}{g(n)}=0$.

\subsection{Network Properties}
Graph eigenvalues are deeply related to a number of fundamental network properties. In the case of supercomputing topologies, two such properties linked to communications performance are graph {\it diameter} and {\it bisection bandwidth}. Diameter (the maximum distance between vertices) is critical for latency, while bisection bandwidth (the minimum number of edges crossing a balanced bipartition of the vertices) measures the networks ``bottleneckedness", impacting all-to-all communication performance.

A plethora of work has shown both of these core network properties to be bounded and thus controlled by graph eigenvalues \cite{Chung1989}. In particular, the eigenvalues of interest are the {\it spectral gap}, the difference in the largest two adjacency eigenvalues, or the {\it algebraic connectivity}, the second smallest Laplacian eigenvalue. For example, Alon and Milman \cite{Alon1985} showed that the diameter is at most roughly $C \cdot \log{n}$, where $C$ depends on algebraic connectivity and the maximum degree. More precisely: 

\begin{theorem}[Alon, Milman 1985] \
Let $G$ be an $n$-vertex graph with algebraic connectivity $\rho_2$ and maximum degree $\Delta$. Then
\[
\diam(G)\leq 2 \left \lceil \sqrt{\frac{2\Delta}{\rho_2}} \log_2{n} \right\rceil
\]
\end{theorem}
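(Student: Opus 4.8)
The plan is to bound the diameter via a random-walk / Chebyshev-polynomial argument on the normalized Laplacian, which is the standard route to this type of logarithmic bound. First I would normalize: write $\mathcal{L} = D^{-1/2} L D^{-1/2}$ and observe that its second-smallest eigenvalue $\mu_2$ satisfies $\mu_2 \ge \rho_2/\Delta$, since $\rho_2 = \min_{x \perp \vec{1}_D} \frac{x^\top L x}{x^\top x}$ while $\mu_2$ has the same Rayleigh quotient with $x^\top x$ replaced by $x^\top D x \le \Delta\, x^\top x$. Thus it suffices to produce a diameter bound of the form $\diam(G) \le 2\lceil \sqrt{2/\mu_2}\,\log_2 n\rceil$ in terms of $\mu_2$ alone.

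Next I would set up the spectral distance criterion. Fix two vertices $u, v$ and suppose $\mathrm{dist}(u,v) > 2t$. Let $B_u$ be the ball of radius $t$ about $u$ and $B_v$ the ball of radius $t$ about $v$; these are disjoint, and moreover no edge joins $B_u$ to $B_v$. Consider the vector $f$ which is a suitably chosen eigenfunction-orthogonalized indicator: take $g = \one_{B_u}/\sqrt{\mathrm{vol}(B_u)} - \one_{B_v}/\sqrt{\mathrm{vol}(B_v)}$ after projecting off the Perron vector $D^{1/2}\vec{1}$, so that $g \perp \ker \mathcal{L}$. The key identity is that $\langle g, \mathcal{L}^s g\rangle$ can be analyzed two ways: spectrally it is at most $(1-\mu_2)^{?}$-type quantities — more precisely, using the polynomial $p(x) = (1 - x)^s$ or better a Chebyshev polynomial of degree $t$, one gets $\langle \tilde{g}, p(\mathcal{L}) \tilde{g}\rangle$ exponentially small once $t \gtrsim \sqrt{1/\mu_2}\log n$ — while combinatorially $p(\mathcal{L})$ of degree $t$ only connects vertices at distance $\le t$, so $\langle \one_{B_u}, p(\mathcal{L}) \one_{B_v}\rangle = 0$, forcing the projected inner product to equal exactly the (large) contribution from the Perron component $\frac{\sqrt{\mathrm{vol}(B_u)\mathrm{vol}(B_v)}}{\mathrm{vol}(G)}$. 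Comparing the two estimates yields a contradiction unless $t$ is small, and chasing the constants — using $\mathrm{vol}(B_u), \mathrm{vol}(B_v) \ge 1$ and $\mathrm{vol}(G) \le n\Delta$ — produces exactly the stated bound with the factor $\sqrt{2\Delta/\rho_2}$ and the $\log_2 n$.

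The main obstacle, and where care is needed, is getting the constants sharp enough to land on precisely $2\lceil \sqrt{2\Delta/\rho_2}\,\log_2 n\rceil$ rather than something weaker by a constant factor. This is exactly what the Chebyshev polynomial (as opposed to the crude $(1-x)^s$) buys: the degree-$t$ Chebyshev polynomial shifted to the interval $[\mu_2, 2]$ grows like $\exp(2t\sqrt{\mu_2/2})$ at the Perron eigenvalue while staying bounded by $1$ on $[\mu_2,2]$, and solving $\exp(2t\sqrt{\mu_2/2}) \ge \mathrm{vol}(G) \ge $ (the volume ratio) for $t$ gives $t \le \sqrt{2/\mu_2}\cdot \tfrac{1}{2}\log(\cdot)$, whence the factor-of-two doubling ($\diam \le 2t$) and the conversion $\log \to \log_2$ reproduce the theorem. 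I would also need to handle the degenerate regularity reductions (the factor $\Delta$ in place of the degree) and note the ceiling accounts for $t$ being an integer.
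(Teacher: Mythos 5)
The paper does not prove this statement: it is quoted as a known result of Alon and Milman with a citation, so there is no in-paper argument to compare yours against. Judged on its own terms, your outline follows the Chung--Faber--Manteuffel polynomial method rather than Alon and Milman's original isoperimetric/test-function argument, and that route is legitimate: a degree-$t$ polynomial $p$ with $p(0)=1$ has $(p(\mathcal{L}))_{uv}=0$ whenever $\mathrm{dist}(u,v)>t$, while spectrally that entry equals the Perron contribution $\sqrt{d_ud_v}/\mathrm{vol}(G)$ plus an error controlled by $\max_{\lambda\in[\mu_2,2]}|p(\lambda)|$; the shifted Chebyshev polynomial makes that maximum decay like $e^{-t\sqrt{2\mu_2}}$, and combining with $\mu_2\geq\rho_2/\Delta$ and $\mathrm{vol}(G)\leq n\Delta$ gives a bound that is in fact \emph{stronger} than the stated one for large $n$, so the constants are not the obstacle you fear. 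Two soft spots to tighten. First, your justification of $\mu_2\geq\rho_2/\Delta$ is not quite right as written: $\rho_2$ minimizes over $x\perp\one$ while $\mu_2$ minimizes over $x\perp D\one$, so you cannot simply replace $x^Tx$ by $x^TDx$ in the same Rayleigh quotient; the clean fix is the Courant--Fischer min--max over two-dimensional subspaces, $\mu_2=\min_{\dim S=2}\max_{x\in S\setminus\{0\}}\frac{x^TLx}{x^TDx}\geq\frac{1}{\Delta}\min_{\dim S=2}\max_{x\in S\setminus\{0\}}\frac{x^TLx}{x^Tx}=\frac{\rho_2}{\Delta}$. Second, you are mixing two arguments: the ball-growth setup (disjoint radius-$t$ balls around $u$ and $v$, which is Alon--Milman's framing and naturally produces the factor $2$ via $\diam\leq 2t$) and the matrix-entry argument (which applies $p(\mathcal{L})$ directly to $e_u,e_v$ and needs no balls at all). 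Either works, but as written the claim that $\langle\one_{B_u},p(\mathcal{L})\one_{B_v}\rangle=0$ for a degree-$t$ polynomial requires $\mathrm{dist}(B_u,B_v)>t$, i.e.\ $\mathrm{dist}(u,v)>3t$, not $2t$; commit to one framing and the bookkeeping resolves itself.
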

A lower bound on graph diameter in terms of algebraic connectivity may also be obtained. For example, McKay \cite{Mohar:Eigenvalues} showed  $\mbox{diam}(G)\geq \tfrac{4}{n\rho_2}$. In addition to these bounds on the maximum distance between vertices, {\it average} distance is upper and lower bounded in terms of algebraic connectivity as well; see \cite{Mohar:Eigenvalues}. Next, algebraic connectivity provides guarantees on minimum bisection bandwidth, as shown by Fiedler \cite{fiedler1973algebraic}.

\begin{theorem}[Fiedler 1975] \label{T:lower_bw}
Let $G$ be an $n$-vertex graph with algebraic connectivity $\rho_2$ and bisection bandwidth $\mbox{BW}(G)$. Then
\[
\BW(G)\geq \frac{\rho_2 n}{4}.
\]
\end{theorem}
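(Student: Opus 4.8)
The plan is to derive the bound from the variational (Courant--Fischer) characterization of the algebraic connectivity applied to a test vector manufactured from an optimal bisection. Since $L = D-A$ is symmetric and positive semidefinite with $L\one = 0$, restricting the Rayleigh quotient to the orthogonal complement of $\one$ gives
\[
\rho_2 = \min_{x \neq 0,\, x \perp \one} \frac{x^\top L x}{x^\top x}, \qquad x^\top L x = \sum_{\set{i,j}\in E}(x_i-x_j)^2,
\]
the second identity being the standard expression for the Laplacian quadratic form. First I would fix a balanced bipartition $(S,\bar S)$ of $V$ attaining $\BW(G)$ and write $s = \size S$ and $\bar s = \size{\bar S} = n-s$.

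Next I would feed in the test vector $x$ defined by $x_i = \bar s$ for $i \in S$ and $x_i = -s$ for $i \in \bar S$. This vector satisfies $\sum_i x_i = s\bar s - \bar s s = 0$, so $x \perp \one$ and is admissible in the minimization above. A direct computation gives $x^\top x = s\bar s^2 + \bar s s^2 = s\bar s\, n$, while $x_i - x_j$ vanishes on every edge inside $S$ or inside $\bar S$ and equals $\pm(s+\bar s) = \pm n$ on each of the $\BW(G)$ edges crossing the cut, so $x^\top L x = n^2\,\BW(G)$. Substituting into $\rho_2 \leq x^\top L x / x^\top x$ and rearranging yields $\BW(G) \geq \rho_2\, s\bar s / n$.

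Finally, because the bipartition is balanced, $\set{s,\bar s} = \set{\floor{n/2},\ceil{n/2}}$, so $s\bar s = n^2/4$ when $n$ is even, which delivers $\BW(G) \geq \rho_2 n/4$ exactly; when $n$ is odd one has $s\bar s = (n^2-1)/4$ and the identical argument gives $\BW(G)\geq \rho_2(n^2-1)/(4n)$, matching the stated bound up to the unavoidable $O(1/n)$ parity correction (which can be absorbed using integrality of $\BW(G)$ when $\rho_2>0$). I do not expect a genuine obstacle: the only two points needing care are the justification of the minimum form of $\rho_2$ (immediate from the spectral theorem on $\one^\perp$) and the choice of the weights $\bar s$ and $-s$ in the test vector, which is precisely what forces orthogonality to $\one$ irrespective of any imbalance; the odd-$n$ case should be flagged rather than glossed over.
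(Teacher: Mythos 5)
The paper never proves this statement---it is quoted from Fiedler's 1973/75 work with a citation and no argument---so there is no in-text proof to compare against. Your Rayleigh-quotient argument is the standard one and is essentially correct: the test vector $x$ with $x_i=\bar s$ on $S$ and $x_i=-s$ on $\bar S$ is orthogonal to $\one$, the computations $x^\top L x = n^2\,\BW(G)$ and $x^\top x = s\bar s\,n$ are right, and they yield $\BW(G)\ge \rho_2\, s\bar s/n$, which equals $\rho_2 n/4$ exactly when $n$ is even.

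One remark in your closing paragraph is wrong, however: the odd-$n$ deficit cannot be absorbed by integrality of $\BW(G)$. Take $G=K_3$: then $\rho_2=3$ and $\BW(K_3)=2$, while $\rho_2 n/4 = 9/4 > 2$. More generally, $K_n$ with $n$ odd has $\BW=(n^2-1)/4$, strictly below $\rho_2 n/4=n^2/4$. So the inequality as literally stated in the paper fails for odd $n$; what your argument actually establishes (and what is true) is $\BW(G)\ge \rho_2\,\lfloor n/2\rfloor\lceil n/2\rceil/n$. You were right to insist the odd case be flagged, but the correct resolution is to weaken the constant to $(n^2-1)/(4n)$ (or restrict to even $n$), not to appeal to integrality---$\rho_2 n/4$ need not be an integer, so knowing $\BW(G)$ exceeds it minus a quantity less than one does not force $\BW(G)$ to exceed it.
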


By considering Cheeger's inequality \cite{Sokal:gap,Jerrum:gap} one can also obtain upper bounds on the bisection bandwidth in terms of $\rho_2$ for regular graphs. 
\begin{theorem}
For a connected $k$-regular, $n$-vertex graph $G$ with algebraic connectivity $\rho_2$, the bisection bandwidth satisfies
\[
\BW(G) \leq \frac{\sqrt{2k\rho_2} \cdot kn}{2}.
\]
\end{theorem}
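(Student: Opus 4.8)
The plan is to extract a sparse vertex cut from Cheeger's inequality and then repair it into a balanced one. First I would record the regular‑graph form of Cheeger's inequality. Since $G$ is $k$‑regular we have $\mathcal{L}=I-\tfrac1kA$, so the normalized Laplacian eigenvalue satisfies $\mu_2=\rho_2/k$, and the conductance $\Phi(G)=\min_{0<|S|\le n/2}\tfrac{e(S,\overline S)}{k|S|}$ obeys the difficult direction of Cheeger's inequality, $\Phi(G)\le\sqrt{2\mu_2}=\sqrt{2\rho_2/k}$. Multiplying by $k$, the edge‑isoperimetric number $i(G)=k\,\Phi(G)=\min_{0<|S|\le n/2}\tfrac{e(S,\overline S)}{|S|}$ satisfies $i(G)\le k\sqrt{2\rho_2/k}=\sqrt{2k\rho_2}$. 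Hence there is a set $S$ with $1\le|S|\le n/2$ and $e(S,\overline S)\le\sqrt{2k\rho_2}\,|S|$.

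Second I would turn $S$ into a balanced bipartition. If $|S|=\lfloor n/2\rfloor$ we are already done, since $\BW(G)\le e(S,\overline S)\le\sqrt{2k\rho_2}\cdot\tfrac n2\le\tfrac{kn}{2}\sqrt{2k\rho_2}$ as $k\ge1$. Otherwise move $\lfloor n/2\rfloor-|S|$ vertices from $\overline S$ into $S$ one at a time; relocating a single vertex changes the number of cut edges by at most its degree $k$, so the resulting balanced partition $(T,\overline T)$ satisfies $e(T,\overline T)\le\sqrt{2k\rho_2}\,|S|+k\bigl(\tfrac n2-|S|\bigr)$. It then suffices to verify $\sqrt{2k\rho_2}\,|S|+k(\tfrac n2-|S|)\le\tfrac{kn}{2}\sqrt{2k\rho_2}$; writing $u=\sqrt{2k\rho_2}$ and $s=|S|$ this rearranges to $s(u-k)\le\tfrac{kn}{2}(u-1)$, which is a short computation when $u\ge1$: if $u\ge k$ it follows from $s\le n/2$ and $k\ge1$, and if $1\le u<k$ the left side is nonpositive while the right side is nonnegative.

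The step I expect to be the real obstacle is the remaining range $u<1$, i.e. $\rho_2<\tfrac1{2k}$, where the crude rebalancing term $k(\tfrac n2-|S|)$ can nominally swamp $\tfrac{kn}{2}u$. I would handle it by exploiting how constrained a $k$‑regular graph with so small a spectral gap is: such a graph is forced to be ``stringy,'' and I would argue that then its sparsest cut is itself already nearly balanced (so the rebalancing cost is negligible), using the bound $\rho_2\ge\tfrac{4}{n\,\diam(G)}\ge\tfrac{4}{n(n-1)}$ noted above to pin down the admissible window of $\rho_2$. Alternatively, and more robustly, I would bypass rebalancing altogether by running the Cheeger sweep directly on the Fiedler vector $f$ (with $Lf=\rho_2 f$, $f\perp\mathbf{1}$): order the vertices by $f$‑value, restrict attention to the threshold cuts that split the vertex set in half, and bound the total edge variation $\sum_{ij\in E}|f(i)-f(j)|$ by Cauchy--Schwarz via $\sqrt{|E|}\cdot\bigl(\sum_{ij\in E}(f(i)-f(j))^2\bigr)^{1/2}=\sqrt{|E|\,\rho_2\,\|f\|^2}$, which after normalizing $f$ yields a balanced cut of exactly the advertised shape $\tfrac{kn}{2}\sqrt{2k\rho_2}$ (up to the care needed to control the behavior of the cut near the median, which is where I expect the bookkeeping to be delicate).
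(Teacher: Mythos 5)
Your Cheeger-plus-rebalancing argument is correct exactly where you say it is: with $u=\sqrt{2k\rho_2}$, the sweep set $S$ with $e(S,\overline{S})\le u|S|$ plus the crude repair (each moved vertex costs at most $k$ cut edges) proves the bound for $u\ge 1$, and for $u\ge\tfrac{1}{2}$ the claim is already essentially implied by the paper's first-moment bound $\BW(G)\le\tfrac{m}{2}=\tfrac{kn}{4}\le\tfrac{kn}{2}u$. The obstacle you isolate at $u<1$ is, however, not closable by either of your proposed repairs, because the statement itself fails in that regime. Take two $k$-regular expanders $H_1,H_2$ on $n_1\ll n_2$ vertices with $\rho_2(H_i)\ge\delta k$, delete one edge from each and add two cross edges to restore $k$-regularity. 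The two-valued test vector constant on each block gives $\rho_2(G)\le 8/n_1$, so the claimed bound is $O\!\paren{k^{3/2}n/\sqrt{n_1}}$; yet every balanced bipartition must split $H_2$ nearly in half and therefore cuts $\Omega(\delta k n)$ edges by the expansion of $H_2$. Taking $n_1$ larger than a constant multiple of $k/\delta^2$ violates the inequality.

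The same example defeats both of your fallback plans. Fix (a) rests on a false premise: a tiny $\rho_2$ does not force the sparsest cut to be nearly balanced --- here the Cheeger cut is $(V(H_1),V(H_2))$, which is maximally unbalanced, while all balanced cuts are expensive. Fix (b) founders on precisely the median issue you flagged: the Fiedler vector is essentially two-valued, so $\sum_{ij\in E}|f(i)-f(j)|$ controls only the \emph{integral} of $e(S_t,\overline{S}_t)$ over thresholds $t$, and the $t$-interval straddling the median lies inside the nearly constant block and has negligible length, so no bound on the median threshold cut can be extracted (note also that the Cauchy--Schwarz total is $\sqrt{kn\rho_2/2}$, which does not match the ``advertised shape'' $\tfrac{kn}{2}\sqrt{2k\rho_2}$ without dividing by that vanishing gap). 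For what it is worth, the paper offers no proof of this theorem beyond the remark that it follows from Cheeger's inequality, and that implicit argument has exactly the hole you found: Cheeger controls the sparsest quotient cut, not the minimum balanced cut. A spectral upper bound on bisection width of the right shape must instead use the top of the Laplacian spectrum, e.g.\ $\BW(G)\le\tfrac{n}{4}\rho_n$ from evaluating the quadratic form at balanced $\pm 1$ vectors. So the gap you identified is real, but it is a gap in the theorem, not merely in your proof.
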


We note that when $\rho_2$ is large this upper bound is quite loose.
In fact, if $G$ has $m$ edges, an easy application of the first moment
method~\cite{Alon:prob_meth} shows the bisection bandwidth is at most $\frac{m}{2}$.  Note
that if $G$ is $k$-regular and $\rho_2$ is asymptotically $k$, then
this first moment calculation shows that Theorem
\ref{T:lower_bw} is essentially tight and the bisection bandwidth is
$\frac{kn}{4}\paren{1+\lilOh{1}}$. Consequently, it can be shown that
Ramanujan graphs (defined in Section \ref{sec:raman}) have nearly
optimal bisection bandwidth among all $k$-regular graphs.

Lastly, we note that algebraic connectivity provides bounds on edge and vertex {\it connectivity}, the minimum number of edges and vertices that must be deleted in order to disconnect the graph, respectively. In the context of computer interconnection networks, vertex connectivity is often referred to as {\it fault tolerance} (e.g., \cite{Akers1987a}); more precisely, fault tolerance is defined as one less than vertex connectivity. Denoting vertex and edge connectivity as $\kappa(G), \kappa'(G)$ respectively, it is obvious that $\kappa(G) \leq \kappa'(G) \leq \Delta(G)$. Fiedler \cite{fiedler1973algebraic} proved
\[
\kappa(G)\geq \rho_2,
\]
hence, larger algebraic connectivity guarantee more robust fault tolerance. For more spectral bounds on vertex and edge connectivity, the reader is referred to \cite{Abiad2018} and for further a more complete survey of the relationship between algebraic connectivity and numerous graph invariants, see \cite{mohar1991laplacian}. Such spectral bounds have practical utility: for a number of graph topologies, exact diameter, bisection bandwidth, etc, may be unknown or difficult to compute and hence eigenvalues may serve as a proxy. In summary, the bounds we've reviewed motivate algebraic connectivity as a key parameter of interest intimately related to a plethora of structural properties important to interconnection network design. In the next section, we define graphs with optimal spectral gap, known as {\it Ramanujan graphs}, and discuss their expansion properties. 

%
%




\section{Ramanujan Graphs} \label{sec:raman}

\begin{figure}
\centering
\begin{minipage}{.4\textwidth}
  \centering
\includegraphics[width=.5\linewidth]{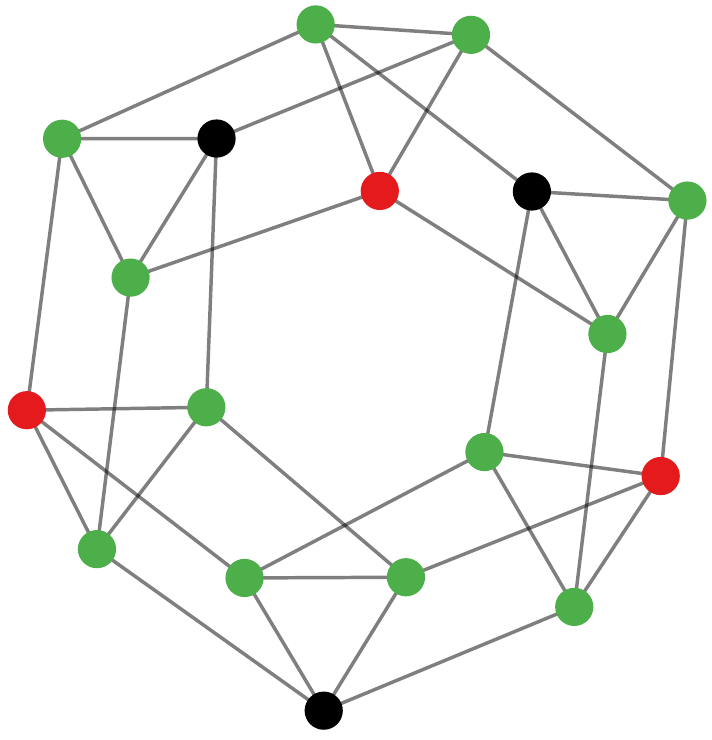}
  \captionof{figure}*{{\bf (a)}}
  \label{fig:test1}
\end{minipage}%
\begin{minipage}{.4\textwidth}
  \centering
\includegraphics[width=0.5\linewidth]{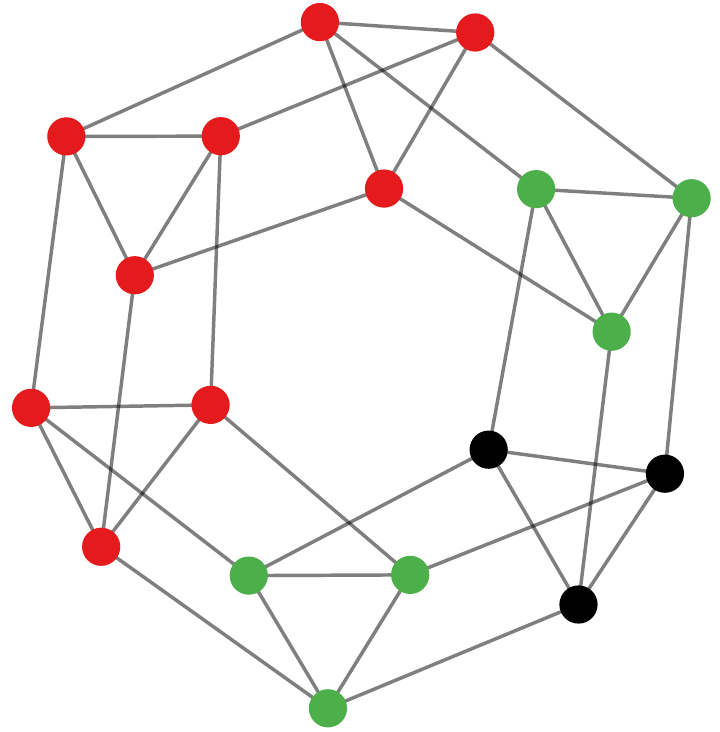}
  \captionof{figure}*{{\bf (b)}}
  \label{fig:test2}
\end{minipage}
\caption{{Examples of a vertex set $X$, in red, with vertex boundary $\partial X$, in green. For (a), the set has vertex isoperimetric ratio 4. For (b), this ratio is $\nicefrac{2}{3}$, which is the minimum over all subsets whose size doesn't exceed half the vertices, and thus is the graph's vertex isoperimetric number.}}\label{iso_example}
\end{figure}

Ramanujan graphs are regular graphs with nearly optimal {expansion} properties. Loosely speaking, expansion means that every ``not too large" set of vertices has a ``not to small" set of neighbors. One way of measuring such expansion is the {\it vertex isoperimetric number} of a graph, given by
\[
h(G)=\min_{\substack{X \subseteq V(G) \\ 2|X|\leq |V(G)|}}\frac{|\partial X|}{|X|},
\]
where $\partial X$ denotes the neighbors of vertices in $X$ that are not in $X$. {We illustrate examples of vertex isoperimetric ratios of sets in Figure \ref{iso_example}.} This notion of expansion, as well as others such as the {\it edge isoperimetric constant}, have been shown to be intimately related to the second largest adjacency eigenvalue of a graph. For example, Tanner \cite{Tanner1984} proved a lower bound on $h(G)$ in terms of this eigenvalue $\lambda_2$, for a $k$-regular graph; namely,
\[
h(G) \geq 1- \frac{k}{2k-2\lambda_2}.
\]
Conversely, Alon and Milman \cite{Alon1985} proved an upper bound on $\lambda_2$ in terms of $h(G)$:
\[
k-\lambda_2 \geq \frac{h(G)^2}{4+2\cdot h(G)^2}.
\]
Putting these two bounds together, it is clear that smaller values of $\lambda_2$ yield larger values of $h(G)$ and hence better expansion. Other bounds, such as Cheeger's inequality and Buser's inequality \cite{Buser1982}, similarly tie eigenvalues to other notions of expansion, like the Cheeger constant. Given the breadth of expansion properties reflected through eigenvalues, it is natural to measure expansion directly in terms of the spectra itself. Accordingly, researchers have sought {\it spectral expanders}, families of graphs with small $\lambda_2$. The most well-known such family are called {\it Ramanujan graphs}.

\begin{definition} 
A $k$-regular graph $G$ is called Ramanujan if 
\[
\lambda(G) \leq 2 \sqrt{k-1},
\]
where $\lambda(G)$ denotes the largest magnitude adjacency eigenvalue of $G$ not equal to $\pm k$. 
\end{definition}

Ramanujan graphs are, in a sense, optimal spectral expanders since they achieve the asymptotic theoretical minimum given by Alon-Boppana theorems. The Alon-Boppana theorem \cite{Alon:EigenvaluesExpanders,Nilli:AlonBoppana} states that for a $k$-regular graph with second largest (in magnitude) adjacency eigenvalue $\lambda$ and diameter $D$, we have
\[
\lambda \geq 2\sqrt{k-1}\left(1-\frac{2}{D}\right)-\frac{2}{D}.
\]
As an immediate corollary, if $(G_i)_{i=1}^{\infty}$ is a family of connected, $k$-regular, $n$-vertex graphs with $n \to \infty$ as $i \to \infty$, then,
\[
\liminf_{i \to \infty} \lambda(G_i) \geq 2 \sqrt{k-1}.
\] 
Hence, we see that Ramanujan graphs attain the theoretical asymptotic optimum spectral expansion. While the Alon-Boppana theorem pertains to regular graphs, variants of the theorem have been proposed for the case of irregular graphs, see \cite{chung2016generalized,Hoory:AlonBoppana,Young:AlonBoppana}.

As a consequence of their optimal spectral expansion, Ramanujan graphs possess beneficial structural properties via the bounds mentioned in Section \ref{sec:back}.  In particular, not only does the Ramanujan property guarantee at least nearly optimal bisection bandwidth, but also controls the number of edges between {\it any} collection of vertices, not just bisections. This stronger property is known as the discrepancy property~\cite{Chung:spectral}.  Specifically, using tools of spectral graph theory, if $G$ is an $n$-vertex $k$-regular Ramanujan graph we have that for any two sets of vertices $X$ and $Y$, 
\[ \size{ e(X,Y) - \frac{k}{n}\size{X}\size{Y}} \leq \frac{2\sqrt{k-1}}{n}\sqrt{\size{X}(n-\size{X})\size{Y}(n-\size{Y})}, \] 
where $e(X,Y)$ is the number of edges between the sets $X$ and $Y$.
Roughly speaking, this says that in any Ramanujan topology the number
of edges between two sets scales roughly like the expected number of
edges between two sets in a similarly dense random graph.  In particular, if a process is active on $\alpha n$ fraction of the nodes of the supercomputing topology, then bisection bandwidth on the active nodes is at least \[ \frac{\alpha kn}{2} \paren{ \frac{\alpha}{2} - \frac{2\sqrt{k-1}}{k} \paren{1-\frac{\alpha}{2}}} \] independently of which $\alpha n$ nodes are chosen.

\subsection{Ramanujan Constructions}


Providing explicit constructions of Ramanujan graphs is challenging. The first explicit constructions of Ramanujan graphs were given by Lubotzky, Phillips, and Sarnak \cite{Lubotzky1988}, as well as independently by Margulis \cite{margulis1988explicit}. Both constructions are Cayley graphs that rely heavily on number-theoretic methods; indeed, the name ``Ramanujan graph" was derived due to the application of the Ramanujan--Petersson conjecture from number theory in the aforementioned construction \cite{Lubotzky1988}. Below, we briefly describe and compare some of these constructions.

\subsubsection{Lubotzky, Phillips, Sarnak Construction}

\begin{definition}[LPS Graphs]
The LPS graph $X^{p,q}$ is a $(q+1)$-regular Cayley graph, defined for distinct primes $p$ and $q$ such that $p,q \equiv 1 \pmod{4}$. Letting $i$ be any integer such that $i^2 \equiv -1 \pmod{p}$, the generating set $S$ of $X^{p,q}$ is given by
\[
S=\left\{ \begin{bmatrix} \alpha_0+i\alpha_1 & \alpha_2 + i \alpha_3 \\
-\alpha_2 + i \alpha_3 & \alpha_0-i\alpha_1
 \end{bmatrix} \  \middle \vert \begin{array}{l} (\alpha_0,\alpha_1,\alpha_2,\alpha_3) \mbox{ is a solution of } \alpha_0^2 + \alpha_1^2 + \alpha_2^2+\alpha_3^2 = q, \\
 \alpha_0>1 \mbox{ is odd, and } \alpha_1,\alpha_2,\alpha_3 \mbox{ are even.} \end{array}  \right\},
\]
and the group $G$ of $X^{p,q}$ is
\[
G=
\begin{cases}
\mathrm{PSL}(2,\mathbb{F}_p) & \mbox{ if } \left(\frac{q}{p}\right)=1 \vspace{1mm}  \\ 

\mathrm{PGL}(2,\mathbb{F}_p) & \mbox{ if } \left(\frac{q}{p}\right)=-1
\end{cases},
\]
where $(\tfrac{q}{p})$ is the Legendre symbol. 
\end{definition}

We note that in the former case, the Cayley graph of $\mathrm{PSL}(2,\mathbb{F}_p)$ with generating $S$ has $\tfrac{p(p^2-1)}{2}$ vertices and is non-bipartite, while in the latter case, the Cayley graph of  $\mathrm{PGL}(2,\mathbb{F}_p)$ with generating set $S$ is bipartite with $p(p^2-1)$ vertices. 

Using advanced number-theoretic techniques, Lubotzky, Phillips, Sarnak showed their construction has largest nontrivial adjacency eigenvalue at most $2\sqrt{q}$ and hence is Ramanujan.  Additionally, they also showed their construction has other extremal combinatorial properties, such as having girth (i.e. the length of the shortest cycle) of $\Omega(\log_q{n})$. From a computational standpoint, the LPS construction allows for explicit querying of vertex-neighborhoods, which is a desirable property for analyzing exponentially large graphs.

The LPS construction may be used to generate infinite families of $(q+1)$-regular Ramanujan graphs; however, only for $q$ prime with $q\equiv 1 \pmod{4}$ and $n$ as function of $p$ as given above. That is, despite having outstanding properties, the LPS construction is limited to Ramanujan graphs only of a certain degree -- and for each such particular degree, only to a certain number of vertices $n$. In 1994, Morgenstern \cite{Morgenstern1994} partially ameliorated this restriction by extending the LPS construction to accommodate any prime power $q$, while showing this extended construction is still Ramanujan and satisfies all other combinatorial properties of the LPS graphs. Nonetheless, this still left open the general case of a given degree $k$ and size $n$.

\subsubsection{Marcus, Spielman, Snivrasta construction}

In 2013 and 2015, Marcus, Spielman, Snivrasta gave new constructions of Ramanujan graphs using a new technique called the method of interlacing polynomials. Unlike the LPS construction, Marcus, Spielman, and Snivrasta's first construction \cite{Marcus2013} is valid for any given degree $k$, and second construction \cite{Marcus2015} is valid both for any $k$ and number of vertices $n$. In both cases, their constructions can only be used to generate {\it bipartite} Ramanujan graphs. 

While their interlacing family method implicitly suggests an algorithm to find an MSS graph, such an algorithm would require computing partially specified expected characteristic polynomials, for which no known polynomial time algorithms are known \cite{Cohen2016}. However, in \cite{Cohen2016}, Cohen provided a polynomial time algorithm for computing such polynomials, thereby giving a deterministic algorithm that, for given degree $k$ and even positive integer $n$, returns a bipartite Ramanujan graph, according to the construction given in \cite{Marcus2015}, in polynomial time.

\subsection{{Related work in high-performance computing}}

{
 Due to the aforementioned relationships between graph expansion and other properties important in network design, many proposed HPC network topologies consider graph expansion {\it implicitly}, making a comprehensive review of related work difficult. Before proceeding, we briefly survey related work that {\it explicitly} considers Ramanujan graphs or related expanders as network topologies in contexts pertinent to supercomputing. Perhaps most notably, in the context of datacenter architecture design, Valadarsky et al. \cite{Valadarsky2016} propose ``Xpander",  which utilizes LPS graphs and the theory of graph lifts \cite{Bilu2006}. They evaluate Xpander theoretically, via simulation, and using a network emulator, finding that Xpander outperformed traditional data-center designs; see \cite{xpanderProjectPage} for more. In the early 1990s, Upfal \cite{Upfal1992} applied the theory of $(\alpha,\beta,n,d)$-expander graphs to construct so-called ``multibutterfly" networks. Later, Brewer, Chong, and Leighton \cite{Brewer1994} proposed a hierarchical expander construction, as a means to mitigate wiring complexity; they analyzed the fault tolerance of their so-called ``metabutterfly" topology through simulation against the aforementioned multibutterfly. In optical network design, Paturi et al. \cite{Paturi1991} proposed using expander graphs to interconnect processors, and subsequently analyze parallel algorithms for sorting, routing, associative memory, and fault-tolerance. Lastly, in the context of sensor networks, Kar and Moura \cite{kar2006ramanujan} propose using Ramanujan LPS graphs as communication networks supporting distributed decision making, and test their performance on the convergence speed of distributed consensus. 
}

%

\section{Spectral Gap in Supercomputing Topologies}\label{sec:spec}
Here, we survey a variety of supercomputing topologies.  In addition
to giving formal, and in some cases new or generalized, descriptions
of the underlying graphs, we focus on analyzing their spectral gap,
bisection bandwidth, and diameter. We first consider  grid-like and
grid variant topologies: the hypercube, generalized grid, torus,
butterfly, cube connected cycles and Data Vortex. Then, we consider
several miscellaneous topologies: the CLEX, DragonFly,
$G$-connected-$H$, and SlimFly topologies. Our results on algebraic
connectivity and bisection bandwidth are summarized in Table
\ref{T:boundsBW}.

Before proceeding, we first establish a key algebraic tool that we utilize frequently, allowing us to compute subsets of a given graphs spectra through that of a simpler, ``reduced'' graph. 

\begin{lemma}[Reduction Lemma]
Let $G$ be a graph and let $\Gamma$ be a subgroup of $\Aut(G)$, {the automorphism group of $G$}.  Let
$H$ be a weighted, directed, looped graph with vertex set given by 
the orbits of $\Gamma$ over $G$ and where the weight of edge from
orbit $\sigma$ to orbit $\tau$ is the total weight of an arbitrary
vertex $v$ in the orbit $\sigma$ to the orbit $\tau$.  The spectrum of $H$ is a subset of
the spectrum of $G$.  Furthermore, any eigenpair $(\lambda,v)$ of
$G$ such that $\lambda$ is not an eigenvalue of $H$ has the property
that $v$ sums to zero along orbits of $\Gamma$.
\end{lemma}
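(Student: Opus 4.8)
The plan is to recognize that the orbit partition of a group of automorphisms is an \emph{equitable partition} and then run the standard quotient-matrix (divisor) argument. First I would fix the (weighted) adjacency matrix $A$ of $G$ and check that $H$ is well defined: if $v,v'$ lie in the same orbit $\sigma$, choose $g\in\Gamma$ with $g(v)=v'$; since $g$ permutes each orbit of $\Gamma$ and $A$ is $g$-invariant, the total weight from $v$ into $\tau$ equals the total weight from $v'$ into $\tau$ for every orbit $\tau$. Hence the $m\times m$ matrix $B$ (with $m$ the number of orbits) defined by letting $B_{\sigma\tau}$ be the total weight from a vertex of $\sigma$ into $\tau$ — that is, the adjacency matrix of $H$ — is well defined. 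Next introduce the $n\times m$ characteristic matrix $P$ with $P_{u,\sigma}=1$ if $u\in\sigma$ and $0$ otherwise. Computing both sides entrywise yields the intertwining relation $AP=PB$: when $v\in\sigma$ we have $(AP)_{v,\tau}=\sum_{u\in\tau}A_{vu}=B_{\sigma\tau}=(PB)_{v,\tau}$.

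The containment $\spec{H}\subseteq\spec{G}$ is then immediate: if $Bx=\lambda x$ with $x\neq 0$, then $A(Px)=PBx=\lambda(Px)$, and $Px\neq 0$ since the columns of $P$ are indicator vectors of disjoint nonempty sets, so $\lambda$ is an eigenvalue of $A$. One may additionally observe that $B=D^{-1}P^{\mathsf{T}}AP$ with $D=P^{\mathsf{T}}P$ the diagonal matrix of orbit sizes, so $B$ is similar to the symmetric matrix $D^{-1/2}P^{\mathsf{T}}APD^{-1/2}$; in particular $\spec{H}$ is real and $B$ is diagonalizable, which makes the notion $\spec{H}$ unambiguous even though $H$ is directed.

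For the second assertion, let $(\lambda,v)$ be an eigenpair of $G$ with $\lambda\notin\spec{H}$. Transposing the intertwiner and using that $A$ is symmetric gives $P^{\mathsf{T}}A=B^{\mathsf{T}}P^{\mathsf{T}}$, whence $B^{\mathsf{T}}(P^{\mathsf{T}}v)=P^{\mathsf{T}}(Av)=\lambda(P^{\mathsf{T}}v)$. So $P^{\mathsf{T}}v$ is either $0$ or an eigenvector of $B^{\mathsf{T}}$ for $\lambda$; since $\spec{B^{\mathsf{T}}}=\spec{B}=\spec{H}$ does not contain $\lambda$, the latter cannot happen, and therefore $P^{\mathsf{T}}v=0$. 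The coordinate of $P^{\mathsf{T}}v$ indexed by an orbit $\sigma$ is exactly $\sum_{u\in\sigma}v_u$, so $v$ sums to zero along every orbit of $\Gamma$, as claimed.

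I do not expect a deep obstacle; the two points that require care are (i) verifying that the orbit partition is equitable, which is precisely where the hypothesis $\Gamma\le\Aut(G)$ enters, and (ii) bookkeeping around the fact that $H$ is directed and $B$ need not be symmetric, so that the argument is phrased in terms of both $B$ and $B^{\mathsf{T}}$ rather than assuming symmetry or diagonalizability from the outset (the similarity to a symmetric matrix noted above is a convenient remark but is not strictly needed for the containment).
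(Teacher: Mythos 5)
Your proof is correct. The first half --- lifting an eigenvector of $H$ to one of $G$ --- is the same argument as the paper's, just phrased with the characteristic matrix $P$: the paper's lifted vector $w^{\Gamma}$ is exactly your $Pw$, and its entrywise computation of $e_v^TA_Gw^{\Gamma}$ is precisely the intertwining relation $AP=PB$ applied to $w$. (You additionally verify that $H$ is well defined, i.e.\ that the orbit partition is equitable; the paper leaves this implicit in the word ``arbitrary''.) The second half is where you genuinely diverge. The paper symmetrizes: given an eigenpair $(\lambda,v)$ of $G$ it forms $v^{\Gamma}=\sum_{g\in\Gamma}v_{g}$, notes that each $v_{g}$ is again a $\lambda$-eigenvector because $g$ is an automorphism, and concludes that $v^{\Gamma}$ is either a $\lambda$-eigenvector constant on orbits (which descends to $H$) or the zero vector (which forces the orbit sums to vanish). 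You instead transpose the intertwiner, using symmetry of $A$ to get $P^{\mathsf{T}}A=B^{\mathsf{T}}P^{\mathsf{T}}$, hence $B^{\mathsf{T}}\paren{P^{\mathsf{T}}v}=\lambda\paren{P^{\mathsf{T}}v}$, so $\lambda\notin\spec{H}$ forces $P^{\mathsf{T}}v=0$, whose $\sigma$-coordinate is exactly the orbit sum. Your route needs only equitability of the partition plus symmetry of $A$, so it generalizes verbatim to arbitrary equitable partitions rather than orbit partitions; the paper's route leans on the group action itself and would survive even for a non-symmetric adjacency matrix as long as the permutation matrices of $\Gamma$ commute with it. Both arguments are complete, and your observation that $B$ is similar to a symmetric matrix is a convenient extra that neither conclusion actually requires.
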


\begin{proof}
Let $(\lambda,w)$ {be a right eigenpair} of $H$.  We define the vector
$w^{\Gamma}$ as follows; for any vertex $v$ in
$G$, define $e_v^Tw^{\Gamma} = e_{\sigma}^Tw$ where $\sigma$ is the
orbit containing $v$.  Now let $S$ be the collection of orbits and
suppose $v$ is in orbit $\tau$.  We then have that 
\begin{align*}
e_v^TA_G w^{\Gamma} &= \sum_j e_v^TA_Ge_je_j^Tw^{\Gamma} \\
&= \sum_{\sigma \in S} \sum_{j \in \sigma} e_v^TA_Ge_je_j^Tw^{\Gamma} \\
&= \sum_{\sigma \in S} \sum_{j \in \sigma} e_v^TA_Ge_je_{\sigma}^Tw\\
&= \sum_{\sigma \in S} e_{\tau}^TA_He_{\sigma}e_{\sigma}^Tw \\
&= \lambda e_{\tau}^Tw \\
&= \lambda e_{v}^Tw^{\Gamma}.
\end{align*}
As $v$ is arbitrary we have that $\lambda$ is also an eigenvalue of
$G$.  

Now suppose that $(\lambda,v)$ is an eigenpair for $G$ and consider
$v^{\Gamma} = \sum_{\sigma \in \Gamma} v_{\sigma}$.  Since each
$\sigma$ is an automorphism of $G$, $v_{\sigma}$ is also an
eigenvector with eigenvalue $\lambda$.  Thus $v^{\Gamma}$ is either an
eigenvector with eigenvalue $\lambda$ or it is the zero vector.  Since
$v^{\Gamma}$ is constant over orbits of $\Gamma$, we can form
$v^{\Gamma}_H$ as the vector of values over orbits.  It is clear that
$A_H v^{\Gamma}_H = \lambda v_H^{\Gamma}$ and so either $\lambda$ is
in the spectrum of $H$ or $v_H^{\Gamma}$ is zero and $v$ sums to zero
over orbits of $\Gamma$. 
\qed \end{proof}

{We illustrate an application of the Reduction Lemma to a fat tree topology in Figure \ref{fat_tree}}. We note that the Reduction Lemma is almost certainly not new.  In fact, it can be viewed as a special case of several other results on describing the interlacing of spectra of a matrix with a quotient matrix, see for instance \cite[Chapter 1]{Brualdi:GraphsandMatrices} and \cite[Chapter 2]{Brouwer:SpectraGraphs}.  

\begin{figure}[b]
  \centering
  \begin{tikzpicture}[vertex/.style={circle,draw,fill=yellow,minimum size=8}]
    \node[vertex] (root) at (3.75,4) {};
    \foreach \x in {0,...,15}
    {
      
      \node [vertex] (0\x) at (.5*\x, 0) {};
    }
    \foreach \x in {0,...,7}
    {
      \node[vertex] (1\x) at (.25 + \x,.75) {};
      \pgfmathtruncatemacro{\left}{2*\x}; 
        \pgfmathtruncatemacro{\right}{2*\x+1}; 
      \draw (0\left) -- (1\x) -- (0\right);
    }
    \foreach \x in {0,...,3}
    {
      \node[vertex] (2\x) at (.75+2*\x,1.75) {};
      \pgfmathtruncatemacro{\left}{2*\x}; 
        \pgfmathtruncatemacro{\right}{2*\x+1}; 
      \draw[line width = 0.8] (1\left) -- (2\x) -- (1\right);

    }
    \foreach \x in {0,...,1}
    {
      \node[vertex] (3\x) at (1.75+4*\x,2.75) {};
      \pgfmathtruncatemacro{\left}{2*\x}; 
        \pgfmathtruncatemacro{\right}{2*\x+1}; 
      \draw[line width = 1.6] (2\left) -- (3\x) -- (2\right);
      \draw[line width = 3.2] (3\x) -- (root);
    }

    \node[vertex] (0) at (11,0) {};
    \node[vertex] (1) at (11,.75) {};
    \node[vertex] (2) at (11,1.75) {};
    \node[vertex] (3) at (11,2.75) {};
    \node[vertex] (4) at (11,4) {};

    \draw[->,line width = 0.8] (0) to[out=45,in=-45] node[right]{1}
    (1);
    \draw[->,line width = 0.8] (1) to[out=45,in=-45] node[right]{2}
    (2);
    \draw[->,line width = 0.8] (2) to[out=45,in=-45] node[right]{4}
    (3);
    \draw[->,line width = 0.8] (3) to[out=45,in=-45] node[right]{8}
    (4);

    \draw[->,line width = 0.8] (4) to[out=-135,in=135] node[left]{16}
    (3);
    \draw[->,line width = 0.8] (3) to[out=-135,in=135] node[left]{8}
    (2);
    \draw[->,line width = 0.8] (2) to[out=-135,in=135] node[left]{4}
    (1);
    \draw[->,line width = 0.8] (1) to[out=-135,in=135] node[left]{2}
    (0);

    \node (left) at (7.75,1.75) {};
    \node (right) at (10.25,1.75) {};
    \draw[->,line width = 1.2] (left) to node[above]{reduction}
    node[below]{lemma}  (right);
  \end{tikzpicture}
  \caption{{An application of the Reduction Lemma to the fat tree.}}\label{fat_tree}
\end{figure}
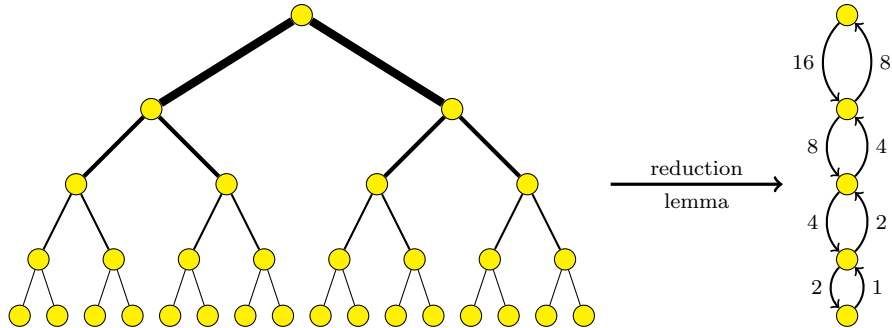

%

It is worth noting that several of the topologies we will consider
have implementations which have minor irregularities in the node
radixes.  These 
deviations from regularity have little effect on the true performance
of the network and so we will add self-loops as needed to eliminate 
irregularity and simplify the analysis. This will not change the nature of any of our results as the bisection bandwidth and diameter both are unaffected by arbitrary self-loops.

\subsection{Product (Grid-Like) Topologies}
For a number of high-dimensional supercomputing topologies, their underlying graphs can be obtained via repeated {\it graph products}. Product graphs are highly structured and possess properties which can sometimes be tightly controlled by those of their factor graphs. Below, we briefly describe three such topologies: the hypercube, torus, and generalized grid. These graphs are obtained via a particular graph product called the {\it Cartesian product}, denoted $G \Osq H$. The graph $G \Osq H$ is on vertex set $V(G) \times V(H)$ and is defined by the edge condition: $(u,u')$ and $(v,v')$ are adjacent if and only if either
\begin{itemize}
\item $u=v$ and $\{u',v'\} \in H$, or
\item $u'=v'$ and $\{u,v\} \in G$. 
\end{itemize}

We note that the adjacency matrix of $G \Osq H$, can be written succinctly in terms of those of $G$ and $H$,
\[
A_{G \Osq H}=A_G \otimes I + I \otimes A_H,
\]
where $I$ denotes the identity matrix and $\otimes$ denotes Kronecker product. Using the above characterization, it is easy to show that the adjacency (or Laplacian) eigenvalues of $A_{G \Osq H}$ consists of $\lambda(G)_i + \lambda(H)_j$ over all $1 \leq i \leq |V(G)|$ and $1 \leq j \leq |V(H)|$; hence the $d$-fold Cartesian product eigenvalues consists of all possible $d$-sums of the factor graph eigenvalues. In particular, the algebraic connectivity of $G \Osq H$ is the minimum of the algebraic connectivity of $G$ and the algebraic connectivity of $H$.


\begin{definition}[Hypercube, $Q_d$]
The $d$-dimensional hypercube, $Q_d$, is on $n=2^d$ vertices, defined by the $d$-fold Cartesian product $P_2 \Osq \dots \Osq P_2$, where $P_2$ is the path with 1 edge. 
\end{definition}

It is well-known that $Q_d$ has algebraic connectivity of $2$ and bisection bandwidth $2^{d-1}=n/2$. The hypercube is a special case of the generalized grid graph, defined below.

\begin{definition}[Generalized Grid, $G_{k_1,\dots,k_d}$]
The $d$-dimensional, generalized grid the $d$-fold Cartesian product $P_{k_1} \Osq \cdots \Osq P_{k_d}$, where $P_{k_i}$ is the path of length ${k_i-1}$. 
\end{definition}

We note that taking $d=2$, $k_1=m$, and $k_2=n$ yields what is sometimes simply referred to as a grid graph, or $m\times n$ lattice, while taking $k_1=\dots=k_d=2$ yields $Q_d$. Using the aforementioned fact relating the Cartesian product eigenvalues to those of the factor graphs, it is easy to see the algebraic connectivity of $G_{k_1,\dots,k_d}$ is $2-2\cos(\pi/\max\{k_1,\dots,k_d\})$. Finally, we define the discrete torus topology, which is given by the cartesian product of cycles. 

\begin{definition}[Torus, $C^d_k$]
The discrete torus $C^d_k$ is the $d$-fold graph box product of a $k$-cycle, i.e. $C_k \Osq \dots \Osq C_k$. This graph is regular on $n=k^d$ vertices, and has degree $2d$. 
\end{definition}

It is not difficult to show the algebraic connectivity of the torus $C^d_k$ is $2(1-\cos(\nicefrac{2\pi}{k}))$.

\subsection{Grid Variants} 
The collection of topologies we consider in this section are closely
related to topologies formed from the product operation, but with
minor twists or modifications.  Oftentimes these toplogies start from
some grid-like layout and permute the connections or add small
substructures to achieve desired properties.

\subsubsection{Butterfly}

One of the more well known grid variants is the Butterfly
topology~\cite{Leighton:Intro}.  In its most simple form the Butterfly
topology consists of a sereis of shuffling layers based on the binary
representation of the node names.  More concretely, there are 
  $p\log_2(p)$ switches arranged in a
  $\log_2(p)$-by-$p$ array of $p$ switches in one of $\log_2(p)$
  ranks. For each rank, each of the $p$ switches is
  connected to two switches in the previous rank and two switches in
  the next rank. 
  The nodes in rank $i$ and position $j$ are connected to the switch
  $j$ and switch $m$ in rank $i-1$, where $m$ is formed by flipping
  the $i^{\textrm{th}}$ bit in binary representation of $j$.  It is
  also connected to switch $j$ and $m'$ in rank $i+1$, where $m'$ is
  formed by switching the $(i+1)^{\textrm{st}}$ bit in $j$.  The
  Butterfly topology has diameter $\log_2(p)$ and bisection width
  $\nicefrac{p}{2}$

\begin{definition}[Butterfly, $\BF(k,s)$] The $k$-ary, $s$-fly
  butterfly network where there are $s$-layers of switches, and each
  switch has $k$ ``forward'' connections.  More concretely, the
  switches can be indexed by elements of $[s] \times [k]^s$.  The
  ``forward'' connections from $(i, (a_1,\ldots,a_{s}))$ to $(i+1,
  (a_1',\ldots,a_s'))$ are formed by keeping all but the
  $i^{\textrm{th}}$ component of $a$ fixed, i.e.\ $a_j = a_j'$ if $j
  \neq i$.  Depending on the application, the $s$ layers can either be
  connected linearly (no connection from layer $s$ to layer $1$), or
  cyclically (connection from layer $s$ to layer $1$).  For convenience,
  we will restrict ourselves to the cyclic arrangement.
\end{definition}  

It is straightforward to see that these networks have a diameter of
$s$ by considering two elements in the same layer, $(i,\mathbf{a})$ and
$(i,\mathbf{b})$, where no coordinate of $\mathbf{a}$ and $\mathbf{b}$
agree.

\begin{proposition}
Let $G$ be a $k$-ary, $s$-fly Butterfly network.  The bisection
bandwidth of $G$ is at most $\frac{(k+1)k^s}{2}$ and the algebraic
connectivity is at most $2k - 2k\cos\paren{\frac{2\pi}{s}}$.
\end{proposition}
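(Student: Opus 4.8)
The plan is to prove the two bounds by independent means: the spectral bound via the Reduction Lemma, and the bisection bound via an explicit cut.

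\emph{Algebraic connectivity.} I would invoke the Reduction Lemma with $\Gamma$ the subgroup of $\Aut\paren{\BF(k,s)}$ that independently permutes the $k$ symbols of $[k]$ inside each of the $s$ coordinate slots, i.e.\ $(\pi_1,\dots,\pi_s)\in\prod_{j=1}^{s}S_k$ acts by $(i,(a_1,\dots,a_s))\mapsto(i,(\pi_1a_1,\dots,\pi_sa_s))$. Each such map is an automorphism, since a forward edge out of layer $i$ constrains only the coordinates $j\neq i$ and those are permuted identically at both endpoints, and $\Gamma$ clearly acts transitively on every layer. Hence the $\Gamma$-orbits are exactly the $s$ layers, and in the reduced graph $H$ on vertex set $\Z/s$ each orbit $i$ has total edge weight $k$ to orbit $i+1$ and total weight $k$ to orbit $i-1$ (and no other edges, as $\BF(k,s)$ has no intra-layer edges), so that $A_H=k\cdot A_{C_s}$. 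By the cycle spectrum recorded in Section~\ref{sec:back} we have $\spec{H}=\set{2k\cos\paren{2\pi j/s}:j=0,\dots,s-1}$, and the Reduction Lemma places all of these in $\spec{\BF(k,s)}$. As $\BF(k,s)$ is connected (one can overwrite all coordinates by traversing the $s$ ranks once) and $2k$-regular, $\lambda_1=2k$ is a simple eigenvalue, so for $s\geq 2$ the value $2k\cos(2\pi/s)<2k$ must be at most $\lambda_2$, whence $\rho_2=2k-\lambda_2\leq 2k-2k\cos(2\pi/s)$.

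\emph{Bisection bandwidth.} I would exhibit an explicit cut by splitting on the first coordinate: put the switch $(i,(a_1,\dots,a_s))$ on side $0$ if $a_1\leq\ceil{k/2}$ and on side $1$ otherwise. The key point is that the first coordinate is altered \emph{only} along the rank transition from layer $1$ to layer $2$, since every forward edge leaving a layer $i\neq 1$ fixes coordinate $1$; thus any cut edge joins layers $1$ and $2$, and such an edge $(1,\mathbf a)\sim(2,\mathbf a')$ is cut exactly when $a_1$ and $a_1'$ fall on opposite sides of the split of $[k]$. There being $2\ceil{k/2}\floor{k/2}$ straddling ordered pairs $(a_1,a_1')$ and $k^{s-1}$ free choices for the other coordinates, the cut has at most $2\ceil{k/2}\floor{k/2}\,k^{s-1}\leq\tfrac{k(k+1)}{2}k^{s-1}=\tfrac{(k+1)k^s}{2}$ edges. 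For even $k$ the two sides are exactly equal, so this already furnishes a bisection within the bound; for odd $k$ the sides differ by $sk^{s-1}$ vertices, and to reach an exactly balanced bipartition one must either use the looser notion of ``balanced bipartition'' customary in this context, or rebalance --- e.g.\ by letting the threshold vary across ranks, or by relocating a small set of switches near the cut --- while checking that the extra crossing edges so created remain within the slack that $\tfrac{(k+1)k^s}{2}$ leaves over the even-$k$ count.

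The spectral half is routine once the Reduction Lemma is available. The one genuinely delicate point is producing an \emph{exactly} balanced bisection for odd $k$ without exceeding $\tfrac{(k+1)k^s}{2}$: the crossing count for the plain coordinate split is immediate, but the rebalancing adjustment has to be carried out carefully, and I expect it --- rather than either of the two crossing-edge counts --- to be the main obstacle.
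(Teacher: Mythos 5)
Your spectral argument is the paper's own: the coordinatewise action of $\prod_{j=1}^{s}S_k$ is indeed an automorphism group whose orbits are the $s$ layers, the reduced graph is the $s$-cycle with every edge of weight $k$, and $\rho_2\leq 2k-2k\cos\paren{\nicefrac{2\pi}{s}}$ follows exactly as you say. Your even-$k$ bisection is also the paper's cut (split on a single coordinate slot, which is touched only at one rank transition), and the crossing count $2\ceil{\nicefrac{k}{2}}\floor{\nicefrac{k}{2}}k^{s-1}$ is correct.

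The genuine gap is the odd-$k$ case, which you correctly flag as the obstacle but then leave unresolved. The imbalance of your coordinate cut is $sk^{s-1}$ vertices --- a $\nicefrac{1}{k}$ fraction of the graph, not a lower-order term --- while the slack between your crossing count $\frac{(k^2-1)k^{s-1}}{2}$ and the target $\frac{(k+1)k^s}{2}$ is only $\frac{k^s+k^{s-1}}{2}$. Relocating the $\nicefrac{sk^{s-1}}{2}$ surplus switches arbitrarily could add as many as $2k\cdot\nicefrac{sk^{s-1}}{2}=sk^s$ new cut edges, which exceeds that slack for every $s\geq 1$, so ``relocate a small set of switches near the cut'' does not close the argument without a specific choice and a verified count. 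This is precisely where the paper spends almost all of its effort: for odd $k$ it abandons the single-coordinate cut and takes the staircase set $X=\bigcup_{i=0}^{s-1}\set{\tfrac{k+1}{2}}^i\times\left[\tfrac{k-1}{2}\right]\times[k]^{s-1-i}$, which repairs the cardinality defect (giving $\size{X}=\tfrac{k^s-1}{2}$, so the two sides of $([s]\times X,[s]\times\overline{X})$ differ by only $s$) at the price of cutting along every rank transition, and then carries out a delicate count showing the total is $\frac{k^{s+1}+k^s-k-1}{2}\leq\frac{(k+1)k^s}{2}$. Some construction of this kind --- or an equally explicit rebalancing with an edge count that stays within the $\frac{k^s+k^{s-1}}{2}$ of available slack --- is needed to complete your proof.
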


\begin{proof}
To upper bound the bisection bandwidth we first consider the case
where $k$ is even and define $X = \left[\frac{k}{2}\right] \times
[k]^{s-1}$.  The bipartition  we consider is then $([s] \times X, [s]
\times \overline{X})$.   In order for $(s,x) \in  [s]\times X$ and
$(s,x') \in [s]\times
\overline{X}$ to be adjacent, it must be the case that $\set{s,s'} =
\set{1,2}$ and $x$ and $x'$ differ only in the first coordinate. This
gives that $e([s]\times X, [s]\times \overline{X}) = 2\paren{\frac{k}{2}}^2k^{s-1} =
\frac{k^{s+1}}{2}$.

When $k$ is odd the we construct a slightly more complicated
partition.  In particular, for $0 \leq i \leq s-1$ define $X_i =
\set{\frac{k+1}{2}}^i \times \left[\frac{k-1}{2}\right] \times [k]^{s-1-i}$ and
let $X = \bigcup_i X_i$.  Note that
\[ \size{X} = \sum_{i=0}^{s-1} \size{X_i} = \frac{k-1}{2}
  \sum_{i=0}^{s-1} k^{s-1-i} = \frac{k-1}{2} \frac{k^s - 1}{k-1} =
  \frac{k^s-1}{2}.\]  In particular $([s] \times X, [s] \times
\overline{X})$ is a bipartition of the vertex set of the $k$-ary,
$s$-fly Butterfly network.  Now to evaluate the bisection bandwidth we
wish to count pairs 
$(u,v) \in X_i \times \overline{X}$ such that $u$ and $v$ differ in precisely one
component.  If we fix some $x_i \in X_i$, then for all $1 \leq j \leq
i+1$, modifying the $j^{\textrm{th}}$ component to be in
$\frac{k+3}{2}, \ldots, k$ yields such a pair.  We note that modifying
any entry $j > i+1$ will preserve membership in $X_i$.  Thus the only
remaining case to consider is when index $(i+1)$ is modified to have value
$\frac{k+1}{2}$.  This takes us outside the set $X$ if and only if $x_i \not\in
\set{\frac{k+1}{2}}^i \times \left[\frac{k-1}{2}\right] \times
  \set{\frac{k+1}{2}}^j \times \left[\frac{k-1}{2}\right] \times
  [k]^{s-2-i-j}$ for some $0 \leq j \leq s- 2- i$. As there are
  $\frac{k-1}{2} \frac{k^{s-i-1}+1}{2}$ such terms $x_i$, this gives
  that the total number of pairs $
  (u,v) \in X_i \times \overline{X}$ which differ by exactly one component is given by
  \[   \paren{\frac{k-1}{2}}k^{s-i-1}(i+1)\paren{\frac{k-1}{2}} +
    \paren{\frac{k-1}{2}}\paren{\frac{k^{s-i-1}+1}{2}} =
   \frac{\paren{k-1}\paren{ (i+1)k^{s - i} - ik^{s-i-1}+1}}{4}.\]
  Thus the bisection bandwidth is at most
  \begin{align*}
\sum_{i=0}^{s-1} 2 \frac{\paren{k-1}\paren{ (i+1)k^{s - i} -
    ik^{s-i-1}+1}}{4} &= \frac{s(k-1)}{2}+ \frac{k-1}{2}
                        \sum_{i=0}^{s-1} (i+1)k^{s-i} - ik^{s-i-1} \\
    &= \frac{s(k-1)}{2} + \frac{k-1}{2} \paren{k^s +
      2\frac{k^s-k}{k-1} - (s-1)} \\
                      &= \frac{k-1}{2} +
                        \frac{k^{s+1}-k^s}{2} + k^s - k \\
    &= \frac{k^{s+1}+k^s -k - 1}{2}.
  \end{align*}

  To upper bound the algebraic connectivity, we note that there is an
  automorphism group of the Butterfly topology in which the orbits are
  given by the layers.  Thus, by applying the reduction lemma to this
  automorphism group we get an $s$-cycle with edge multiplicity $k$.
  The bound on the algebraic connectivity then follows immediately.
  \qed \end{proof}

\subsubsection{Data Vortex}
The Data Vortex topology was designed as a ``streaming" topology with the idea that all of the data is constantly in motion (i.e., it is never buffered) and the data swirls from processors in the outer ring of the vortex 
towards the processors on the inside of the topology~\cite{Hawkins2007,Shacham2005}.  This streaming methodology has allowed the Data Vortex topology to handle the transmission of high-volumes of data without suffering from signficant congestion related performance degradation (see \cite{Gioiosa2017,Gioiosa2016,Iliadis2007,Yang2002} for a more in-depth discussion of the performance benefits of the Data Vortex topology.). Formally, the topology is defined a series concentric cylinders with ``angular" transitions between them.  Within the cylinders there is a switching topology reminiscent of the layers of the 2-ary Butterfly topology.  More concretely, we have:

\begin{definition}[Data Vortex, $\DV(A,C)$]
The Data Vortex topology with parameters $A,C$ is a graph with vertex set $\Z_{A} \times \Z_{C} \times \Z_2^{C-1}$, and edge set given by:
\begin{enumerate}
\setlength{\itemsep}{.5em}
\item for all $(a,c,h) \in \Z_A \times \Z_C \times \Z_2^{C-1}$ there is an edge to $(a+1,c+1,h)$,
\item for all $(a,c,h) \in \Z_A \times \paren{\Z_C-\set{0}} \times \Z_2^{C-1}$ there is an edge to $(a+1,c,h + e_c)$ where $e_c$ denotes the unit vector for the $c^{\textrm{th}}$ component of $\Z_2^{C-1}$, and 
\item for all $(a,c,h) \in \Z_A \times \set{0} \times \Z_2^{C-1}$ there is an edge to $(a+1,c,h) = (a+1,0,h)$.
\end{enumerate}
\end{definition}
Although the Data Vortex is designed as a streaming topology (and is in particular, indirect), we will consider it as a direct topology in which each node denotes a compute node.  
    
%

\begin{proposition}
The algebraic connectivity of the Data Vortex topology with $A$ angles and height $H$ is at most $\min\set{2 - 2\cos\paren{\frac{\pi}{C}}, 2-2\cos\paren{\frac{2\pi}{A}}} = \bigOh{ \frac{1}{ \max\set{A^2,(C-1)^2}}}$.  Furthermore, the bisection bandwidth is at most $A2^{C-2}$.  
\end{proposition}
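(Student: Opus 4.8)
The plan is to treat the two claims separately: the algebraic-connectivity bound via the Reduction Lemma, and the bisection-bandwidth bound via an explicit balanced bipartition. For the spectral bound I would follow the template used above for the Butterfly, producing two small quotients of $\DV(A,C)$. For the first, note that the map $(a,c,h)\mapsto(a+s,\,c,\,h+v)$, with $s\in\Z_A$ and $v\in\Z_2^{C-1}$, carries each of the three edge rules to itself -- checking rules (1), (2), (3) in turn is a one-line verification -- so these maps form a subgroup $\Gamma_1\le\Aut\paren{\DV(A,C)}$, isomorphic to $\Z_A\times\Z_2^{C-1}$, whose orbits are exactly the $C$ cylinders $\set{(a,c,h):c\text{ fixed}}$. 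Applying the Reduction Lemma to $\Gamma_1$ yields a reduced graph on $\Z_C$ in which rule (1) links consecutive cylinders while rules (2) and (3) fix the cylinder index and hence contribute only self-loops; one reads off that its algebraic connectivity is the first claimed quantity $2-2\cos(\pi/C)$.

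For the second quotient, observe that every edge increments the angle by $1$, so the orbit partition of the angle-translation group $\Z_A$ is exactly ``angle $=$ orbit'', and applying the Reduction Lemma here produces a cycle on the $A$ angles (with an edge multiplicity coming from the two outgoing edges per vertex), whose algebraic connectivity is $2-2\cos(2\pi/A)$ up to that multiplicity factor. By the Reduction Lemma, both numbers lie in the adjacency -- equivalently, since the graph is $4$-regular, Laplacian -- spectrum of $\DV(A,C)$; and since $\DV(A,C)$ is connected (an easy direct check, using rule (1) to move between cylinders and rule (2) to flip each height coordinate), its algebraic connectivity $\rho_2$, being the least positive Laplacian eigenvalue, is at most the smaller of the two quantities, which is the asserted minimum. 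The elementary inequality $1-\cos x\le\tfrac12 x^2$ then converts $\min\set{2-2\cos(\pi/C),\,2-2\cos(2\pi/A)}$ into $\bigOh{1/\max\set{A^2,(C-1)^2}}$.

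For the bisection bandwidth I would exhibit a perfectly balanced bipartition directly. Since the height $h$ runs over all of $\Z_2^{C-1}$ at every $(a,c)$, fixing one coordinate, say $X=\set{(a,c,h):h_1=0}$, splits $V\paren{\DV(A,C)}$ into two equal halves. Rules (1) and (3) leave $h$ untouched and rule (2) changes only the coordinate $h_c$, so the only edges joining $X$ to its complement are the rule-(2) shuffle edges lying in the single cylinder $c=1$; tallying these (with a small parity adjustment to how the cylinder $c=0$ is divided, if one is needed to keep the halves exactly equal) bounds the bisection bandwidth by $A2^{C-2}$. Cutting along the angle or the cylinder index instead severs the ambient $\Z_A$- or $\Z_C$-cycle at two places and gives a strictly larger cut, so the height-coordinate cut is the efficient one.

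The step I expect to be the main obstacle is pinning down the exact constants in the spectral bound, which is entirely a matter of the bookkeeping around the distinguished cylinder $c=0$: because rule (3) replaces rule (2) there, the cylinder quotient is not a vertex-transitive cycle, and one must argue carefully that it has algebraic connectivity $2-2\cos(\pi/C)$ rather than the weaker symmetric-cycle value $2-2\cos(2\pi/C)$ -- equivalently, that the cylinder direction behaves like a path on $C$ vertices -- and the same asymmetry is what forces the small correction needed to make the bipartition perfectly balanced. Once the two automorphism subgroups are identified, computing the spectra of the path/cycle-type quotients and counting the crossing edges of the height cut are both routine.
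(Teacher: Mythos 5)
Your overall strategy (Reduction Lemma for the spectral bound, a single-height-bit cut for the bisection bandwidth) matches the paper's, and two of your three ingredients are sound: the quotient by $\Gamma_1\cong\Z_A\times\Z_2^{C-1}$ onto the $C$ cylinders does produce a path-with-loops whose algebraic connectivity is $2-2\cos(\pi/C)$, and your bisection cut along one height coordinate is exactly the paper's cut (the paper uses the leading bit, you use $h_1$; either way only the rule-(2) edges of a single cylinder cross, and no rebalancing of cylinder $0$ is needed, since every cylinder is split exactly in half by any fixed height bit).

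The genuine gap is your second quotient. The orbits of the angle-translation group $(a,c,h)\mapsto(a+s,c,h)$ are the longitude lines $\set{(a,c,h): a\in\Z_A}$ with $(c,h)$ fixed --- not the angular slices $\set{a}\times\Z_C\times\Z_2^{C-1}$ --- so the Reduction Lemma applied to this group yields a quotient on $C2^{C-1}$ vertices, not a cycle on the $A$ angles. Worse, no automorphism subgroup can have the angular slices as orbits, because that partition is not even equitable: after the self-loop correction a vertex in cylinder $0$ has two neighbours at angle $a+1$ but only one at angle $a-1$, while a vertex in a middle cylinder has two of each, so no quotient matrix on the angles is well defined. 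Consequently your argument never places $2-2\cos\paren{\frac{2\pi}{A}}$ in the spectrum, and when $A>2C$ this is the binding term of the claimed minimum, so the stated bound is not established. The paper's route avoids this: quotient only by the height bit-flips to get a graph on $\Z_A\times\Z_C$, observe that the shear $(a,c)\mapsto(a-c,c)$ straightens the diagonal rule-(1) edges so that this quotient is exactly $C_A\Osq P'_C$, and read off both $2-2\cos\paren{\frac{2\pi}{A}}$ and $2-2\cos\paren{\frac{\pi}{C}}$ from the Cartesian-product spectrum. Your $\Gamma_1$-quotient is a further collapse of this product onto its $P'_C$ factor, which is precisely why it sees only one of the two terms.
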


\begin{proof}
We begin by first noting that the vertices in the outer and inner ring of the Data Vortex have degree 3, so we will consider the topology formed by adding a self loop to each of these vertices.  Alternatively, we could add an edge between corresponding vertices in the inner and outer ring by observing that in typical use cases these vertices are connected to a common system, forming the ``input'' and ``output'' ports of the system.  However, this modification results in essentially the same asymptotic behavior, so we choose the self-loop modification as it is requires no assumptions about how the Data Vortex interacts with the processing layer of the overall system.  

We first consider the bisection bandwidth by separating the vertices based on height, specifically partitioning into vertices of height $1, \ldots, 2^{C-2}$, and those of height $2^{C-2}+1, \ldots, 2^{C-1}$.  Clearly this is a bisection.  As no edge between concentric rings changes height, it suffices to consider only those edges internal to a ring.  However, as only one ring flips the leading bit of the height vector, this gives that the bisection bandwidth is at most $A2^{C-2}$.  

In order to bound the algebraic connectivity, we will apply the reduction lemma.  Specifically we consider the automorphism group generated by the bit-flip operations on the height.  As these act uniformly on the height the edges between successive rings are clearly preserved.  Further, as the bit-wise differences are preserved by the bit-flip operations, this preserves edges on each ring.  Under this automorphism group, the Data Vortex topology reduces to $C_A \Osq P'_{C}$ where $P'_k$ is the $k$-vertex path with loops at each end.  The result bounding the algebraic connectivity follows immediately.    
\qed \end{proof}

\subsubsection{Cube Connected Cycles}\label{SS:CCC}
Loosely speaking, the Cube Connected Cycles (CCC) graph consists of a hypercube in which each vertex has been replaced by a cycle. Preparata and Vuillemin \cite{Preparata1981} proposed the Cube-Connected Cycles as a versatile network topology for connecting processors in a parallel computer, which emulates the the robust connectivity properties of the hypercube, but (due to the cycle modification) only requires three connections per processor. They conclude that ``by combining the principles of parallelism and pipelining, the CCC can emulate the cube connected machine and shuffle-exchange network with no significant degradation in performance."

As suggested in \cite{Riess2012}, CCC graph is a special case of a more general graph construction in which an arbitrary graph is connected in a hypercube structure. More precisely:

\begin{definition}[Cube Connected Cycles, $\CCC(d)$]
The {\it $d$-dimensional cube-connected graph} of a given graph $G$, denoted $\CC(G,d)$, has vertex set $V(G) \times \{0,1\}^d$ and edge condition $(v_i,x) \sim (v_j,y)$ if and only if 
\begin{itemize}
\item $v_i \sim v_j$ in $G$, or
\item $v_i=v_j$ and the hamming distance between $x$ and $y$ is 1. 
\end{itemize}

\end{definition}

Taking $G=C_d$ yields the well-known {\it Cube-Connected Cycles} graph. Riess, Strehl, and Wanka proved the following result, which relates the characteristic polynomial of $\CC(G)$ to those of loop-weighted variants of $G$:

\begin{theorem}[Riess, Strehl, Wanka \cite{Riess2012}] \label{thm:blkCirculantPoly}
Let $G$ be an $d$-vertex graph. For $\vec{s}=(s_1,\dots,s_d) \in \{-1,1\}^d$, let $G[\vec{s}]$ denote the graph obtained from $G$ by adding a loop of weight $s_i$ to each vertex $i$. Then
\[
\chi(\CC(G,d))=\prod_{\vec{s} \in \{-1,1\}^d} \chi(G[\vec{s}]),
\] 
where $\chi(G)$ denotes the characteristic polynomial of the adjacency matrix of $G$. 
\end{theorem}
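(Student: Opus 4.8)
The plan is to write the adjacency matrix of $\CC(G,d)$ as an explicit Kronecker expression exhibiting a hypercube structure, and then to simultaneously diagonalize the ``address'' part of that expression with the Walsh--Hadamard transform; the result will be block diagonal with the blocks being exactly the matrices $A_{G[\vec s]}$, and reading off the characteristic polynomial will finish the argument. Concretely, I would order the vertex set $V(G) \times \{0,1\}^d$ so that the $\{0,1\}^d$-coordinate indexes $2^d$ blocks of size $d$. From the edge condition, the $G$-edges contribute a copy of $A_G$ inside each block, while the cube-edge incident to the copy of vertex $v_i$ flips the $i$-th address bit; hence
\[
A_{\CC(G,d)} \;=\; I_{2^d} \otimes A_G \;+\; \sum_{i=1}^{d} P_i \otimes E_{ii},
\]
where $E_{ii}$ is the $d\times d$ matrix with a single $1$ in position $(i,i)$ and $P_i = I_2^{\otimes(i-1)}\otimes\sigma\otimes I_2^{\otimes(d-i)}$, with $\sigma = \left[\begin{smallmatrix}0&1\\1&0\end{smallmatrix}\right]$, is the permutation matrix on $\{0,1\}^d$ that flips bit $i$.

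Next I would use that the $P_i$ pairwise commute and are simultaneously diagonalized by $H^{\otimes d}$, where $H=\tfrac1{\sqrt2}\left[\begin{smallmatrix}1&1\\1&-1\end{smallmatrix}\right]$ satisfies $H=H^{-1}$ and $H\sigma H^{-1}=\mathrm{diag}(1,-1)$. Identifying the Walsh--Hadamard eigenbasis of $\C^{\{0,1\}^d}$ with $\{-1,1\}^d$, conjugation by $H^{\otimes d}$ turns $P_i$ into the diagonal matrix $Z_i$ whose $\vec s$-entry is $s_i$. Conjugating the whole adjacency matrix by $H^{\otimes d}\otimes I_d$ therefore gives $I_{2^d}\otimes A_G + \sum_i Z_i \otimes E_{ii}$, which (all $Z_i$ being diagonal in a common basis) is block diagonal with one block for each $\vec s\in\{-1,1\}^d$, the $\vec s$-block being $A_G + \sum_i s_i E_{ii} = A_G + \mathrm{diag}(s_1,\dots,s_d) = A_{G[\vec s]}$. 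Since the characteristic polynomial is invariant under conjugation, $\chi(\CC(G,d)) = \prod_{\vec s\in\{-1,1\}^d}\chi(G[\vec s])$, and the degrees match ($d\cdot 2^d$ on each side).

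The step I expect to be the actual content — everything else is routine linear algebra — is setting up the first displayed identity correctly: one has to pin down that the cube-edge at vertex $v_i$ flips precisely the $i$-th hypercube coordinate, and keep the Kronecker factors in an order for which the Walsh--Hadamard conjugation touches only the address slot. Once that is in place, the whole point is that flipping bit $i$ adds $\pm1$ to exactly the $(i,i)$ diagonal entry, which is what manufactures the loop-weighted graphs $G[\vec s]$. I would also note in passing that this is morally a ``twisted'' Reduction Lemma: the group $\mathbb{F}_2^d$ acts on $\CC(G,d)$ by translating hypercube addresses, the plain Reduction Lemma applied to this action already yields the single factor $\chi(G[\mathbf 1])\mid\chi(\CC(G,d))$, and splitting instead along the $2^d$ characters $\vec s$ of $\mathbb{F}_2^d$ is exactly what produces the remaining factors.
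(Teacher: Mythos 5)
The paper does not prove this statement itself; it is quoted from Riess, Strehl, and Wanka \cite{Riess2012}, whose argument is precisely the block-diagonalization of a block matrix indexed by $\mathbb{Z}_2^d$ via the characters of $\mathbb{Z}_2^d$ (i.e.\ the Walsh--Hadamard transform), so your proof is correct and follows essentially the same route as the original source. Your linear algebra checks out: $A_{\CC(G,d)} = I_{2^d}\otimes A_G + \sum_i P_i\otimes E_{ii}$, conjugation by $H^{\otimes d}\otimes I_d$ sends each $P_i$ to the diagonal sign matrix $Z_i$, and the resulting blocks are exactly $A_{G[\vec s]}$, with degrees $d\cdot 2^d$ matching on both sides. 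One remark worth keeping: you correctly identified that the theorem forces the cube edge at the copy of $v_i$ to flip \emph{only} the $i$-th address bit, whereas the paper's written edge condition (``$v_i=v_j$ and the Hamming distance between $x$ and $y$ is $1$'') literally attaches a full hypercube to each fiber $\set{v_i}\times\{0,1\}^d$, under which the conclusion would instead be the Cartesian-product formula $\prod_{\vec s}\chi\paren{A_G+(\sum_i s_i)I}$; your reading is the standard CCC convention and the one needed for the stated factorization.
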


As an immediate consequence, we have that the spectral set of $\CC(G,d)$ is the union of the spectral sets of $G[\vec{s}]$ over all $\vec{s} \in \{-1,1\}^d$. Using their result, we can derive good estimates of the spectral expansion of the CCC. To do so, we first prove the following lemma. 

\begin{lemma} \label{claim:oneNeg}
Let $G$ be a connected, $n$-vertex graph. The second largest adjacency eigenvalue of $\CC(G,d)$ is the maximum eigenvalue of $G[\vec{s}^*]$, where $\vec{s}^*=(s_1,\dots,s_d) \in \{-1,1\}^d$ is such that for some fixed $j \in [n]$, $s_j=-1$ and for all other $i\not= j$, $s_i=1$.
\end{lemma}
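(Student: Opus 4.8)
The plan is to combine Theorem~\ref{thm:blkCirculantPoly} with an extremal analysis, over the cube $\set{-1,1}^d$, of the top two eigenvalues of the loop‑perturbed matrices $A_G + \operatorname{diag}(\vec s)$, relying on Perron--Frobenius monotonicity and Cauchy interlacing. By Theorem~\ref{thm:blkCirculantPoly} the spectrum of $\CC(G,d)$, with multiplicity, is the union of the spectra of $A_G + \operatorname{diag}(\vec s)$ as $\vec s$ ranges over $\set{-1,1}^d$, so the entire question becomes one about how $\lambda_1\paren{G[\vec s]}$ and $\lambda_2\paren{G[\vec s]}$ behave as functions of $\vec s$.

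Step 1 (locate and isolate the top eigenvalue). Since $G$ is connected, $A_G + 2I$ is an irreducible nonnegative matrix, and for every $\vec s$ we have $0 \le A_G + \operatorname{diag}(\vec s) + 2I \le A_G + 2I$ entrywise. Perron--Frobenius monotonicity then gives $\lambda_1\paren{G[\vec s]} \le \lambda_1(G) + 1$, and the strict form of Perron--Frobenius (for the irreducible matrix $A_G+2I$) makes this strict whenever $\vec s \ne \vec 1$. Because $G$ is connected, $\lambda_1(G)$ is a simple eigenvalue of $A_G$, so $\lambda_1(G)+1$ occurs with multiplicity exactly one in the spectrum of $\CC(G,d)$, contributed solely by $\vec s = \vec 1$. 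Consequently the second largest eigenvalue of $\CC(G,d)$ equals $\max\set{\lambda_2(G)+1,\ \max_{\vec s \ne \vec 1}\lambda_1\paren{G[\vec s]}}$, the first term being the second‑largest eigenvalue of $G[\vec 1] = A_G + I$.

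Step 2 (reduce to a single flipped loop). If $\vec s \le \vec t$ coordinatewise then $0 \le A_G + \operatorname{diag}(\vec s) + 2I \le A_G + \operatorname{diag}(\vec t) + 2I$, so $\vec s \mapsto \lambda_1\paren{G[\vec s]}$ is monotone nondecreasing on the cube. The maximal elements of $\set{-1,1}^d \setminus \set{\vec 1}$ in the coordinatewise order are exactly the vectors $\vec s^{(j)}$ having a single coordinate (namely $j$) equal to $-1$; hence $\max_{\vec s \ne \vec 1}\lambda_1\paren{G[\vec s]} = \max_{j}\lambda_1\paren{G[\vec s^{(j)}]}$. Finally, $A_G + \operatorname{diag}\paren{\vec s^{(j)}} = (A_G + I) - 2e_je_j^\top$ is a rank‑one negative semidefinite perturbation of $A_G + I$, so by Cauchy interlacing $\lambda_1\paren{G[\vec s^{(j)}]} \ge \lambda_2(A_G + I) = \lambda_2(G)+1$. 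Thus the term $\lambda_2(G)+1$ from Step 1 is dominated, and the second largest eigenvalue of $\CC(G,d)$ is exactly $\max_j \lambda_1\paren{G[\vec s^{(j)}]}$, i.e.\ the maximum eigenvalue of $G[\vec s^*]$ for the optimal single‑flip vector $\vec s^*$.

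The main obstacle is the Perron--Frobenius bookkeeping: the matrices $G[\vec s]$ are not nonnegative, so all comparisons must be made after the harmless shift by $2I$, and one must invoke the \emph{strict} monotonicity — legitimate precisely because connectedness of $G$ makes $A_G + 2I$ irreducible — in order to guarantee that $\lambda_1(G)+1$ is a simple top eigenvalue of $\CC(G,d)$ and hence that the quantity computed really is $\lambda_2$ and not a repeat of $\lambda_1$. A secondary point is that for a general connected $G$ the value $\lambda_1\paren{G[\vec s^{(j)}]}$ can depend on $j$, so $\vec s^*$ must be taken to be a single‑flip vector attaining the maximum; when $G$ is vertex‑transitive — in particular for $G = C_d$, which yields the classical cube‑connected cycles — all choices of $j$ give isomorphic, hence cospectral, graphs and the dependence disappears.
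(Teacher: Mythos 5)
Your proof is correct and follows essentially the same route as the paper's: both arguments reduce the competition to the single-flip vectors via Perron--Frobenius monotonicity of $\lambda_1(G[\vec{s}])$ in $\vec{s}$ (the paper's Fact~\ref{fct:plusesGood}, proved by a Rayleigh-quotient comparison with the Perron vector, plays the role of your coordinatewise monotonicity) and then establish $\lambda_1(G[\vec{s}^*])\geq\lambda_2(G[\vec{1}])$ by interlacing (the paper deletes the vertex $v_j$ and applies Cauchy interlacing twice, while you use interlacing for the rank-one update $-2e_je_j^{\top}$). One slip to correct in Step~1: the entrywise bound should read $A_G+\operatorname{diag}(\vec{s})+2I\leq A_G+\operatorname{diag}(\vec{1})+2I=A_G+3I$ rather than $\leq A_G+2I$ (the latter fails on the diagonal wherever $s_i=1$); with that fix the conclusion $\lambda_1(G[\vec{s}])\leq\lambda_1(G)+1$, strict for $\vec{s}\neq\vec{1}$, follows exactly as you intend.
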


In the proof of Lemma \ref{claim:oneNeg}, we will use the following basic fact:
\begin{fact} \label{fct:plusesGood}
Let $G$ be a connected, $n$-vertex graph. Let $\vec{r},\vec{t} \in \{-1,1\}^d$, $\vec{r} \not= \vec{t}$, be such that $\vec{r}$ agrees with $\vec{t}$ on any $i \in [n]$ where $t_i=1$, and let $i_1,\dots,i_k \in [n]$ denote indices on which they differ, i.e. where $t_{i_j}=-1$ and $r_{i_j}=1$ for $j \in [k]$. Then the largest adjacency eigenvalue of $G[\vec{r}]$ is strictly greater than that of $G[\vec{t}]$.
\end{fact}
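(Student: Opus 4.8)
The plan is to prove Fact~\ref{fct:plusesGood} by a monotonicity argument on the largest eigenvalue under the operation of increasing a single diagonal loop weight from $-1$ to $+1$, and then iterate over the $k$ coordinates on which $\vec{r}$ and $\vec{t}$ differ. The key point is that $G[\vec{r}]$ is obtained from $G[\vec{t}]$ by adding a positive semidefinite (indeed rank-one, coordinate) perturbation on each differing index, so $A_{G[\vec{r}]} \succeq A_{G[\vec{t}]}$ entrywise and in the Loewner order after the shift; strictness will come from connectivity together with positivity of the Perron eigenvector.

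First I would set up notation: write $A_{\vec{t}} = A_G + \operatorname{diag}(\vec{t})$ for the adjacency matrix of $G[\vec{t}]$, and similarly $A_{\vec{r}}$. Since $G$ is connected, $A_G$ is irreducible and nonnegative, and since $A_{\vec{t}} = A_G + \operatorname{diag}(\vec{t})$ differs from $A_G$ only on the diagonal, $A_{\vec{t}} + I = A_G + \operatorname{diag}(\vec{t}) + I$ is a nonnegative irreducible matrix (each diagonal entry is $t_i + 1 \in \{0,2\}$, and the off-diagonal structure is that of the connected graph $G$). By the Perron--Frobenius theorem, its largest eigenvalue is simple with a strictly positive eigenvector $\vec{x}$, $\vec{x} > 0$; the same $\vec{x}$ is the Perron eigenvector of $A_{\vec{t}}$ with eigenvalue $\lambda_{\max}(A_{\vec{t}})$.

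Next I would establish the strict inequality. We have $A_{\vec{r}} - A_{\vec{t}} = \operatorname{diag}(\vec{r} - \vec{t})$, and by hypothesis $\vec{r} - \vec{t}$ has entries equal to $2$ on the indices $i_1,\dots,i_k$ (where $t_{i_j} = -1$, $r_{i_j} = 1$) and $0$ elsewhere; in particular $\vec{r} - \vec{t} \geq 0$ and is nonzero. Using the Rayleigh quotient characterization with the strictly positive Perron vector $\vec{x}$ of $A_{\vec{t}}$ (normalized so $\norm{\vec{x}}^2 = 1$),
\[
\lambda_{\max}(A_{\vec{r}}) \geq \vec{x}^T A_{\vec{r}} \vec{x} = \vec{x}^T A_{\vec{t}} \vec{x} + \sum_{j=1}^{k} 2 x_{i_j}^2 = \lambda_{\max}(A_{\vec{t}}) + 2\sum_{j=1}^{k} x_{i_j}^2 > \lambda_{\max}(A_{\vec{t}}),
\]
where the final inequality is strict because $\vec{x} > 0$ entrywise (so $x_{i_j} > 0$) and $k \geq 1$. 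This is exactly the claim.

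The main obstacle — really the only subtlety — is making sure the Perron--Frobenius machinery applies cleanly in the presence of the loop weights, i.e. that $A_{\vec{t}}$ (or a nonnegative shift of it) is irreducible with a strictly positive leading eigenvector; this is where connectedness of $G$ is essential, since without it the Perron vector could vanish on the component(s) not containing any $i_j$ and the inequality would only be non-strict. An alternative, should one prefer to avoid Perron--Frobenius, is to invoke strict eigenvalue monotonicity under Loewner-positive perturbations for irreducible nonnegative matrices directly, but the Rayleigh-quotient argument above is the most self-contained route.
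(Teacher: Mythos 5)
Your argument is correct and is essentially the same as the paper's: both take the strictly positive Perron eigenvector $\vec{x}$ of the shifted matrix $A_{\vec{t}}+I$ and compare quadratic forms, obtaining $\vec{x}^T A_{\vec{r}}\vec{x} - \vec{x}^T A_{\vec{t}}\vec{x} = 2\sum_{j=1}^k x_{i_j}^2 > 0$ and concluding via the Rayleigh quotient. Your writeup is slightly more careful in spelling out why $A_{\vec{t}}+I$ is nonnegative and irreducible, but the underlying proof is identical.
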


\begin{proof}
Let $A$ and $A'$ denote the adjacency matrices of $G[\vec{r}]$ and $G[\vec{t}]$, respectively, and let
 $\vec{x}=(x_1,\dots,x_n)$ denote the normalized, dominant eigenvector of $A'+I$, whose entries are all positive by the Perron-Frobenius theorem. By definition, we have $\vec{x}^T (A+I) \vec{x} \ - \ \vec{x}^T (A'+I) \vec{x} = 2\sum_{j=1}^k x_{i_j}^2>0.$
\qed \end{proof}

\begin{proof}[Proof of Claim \ref{claim:oneNeg}]

From Theorem \ref{thm:blkCirculantPoly} and Fact \ref{fct:plusesGood}, we have that the largest adjacency eigenvalue of $\CC(G,d)$ is that of $G[\vec{1}_d]$, and furthermore that if $\vec{1}_d \not=\vec{t} \in \{-1,1\}^d$ does not satisfy the property in the claim, then there exists some $\vec{s}$ that does, which we denote $\vec{s}^*$, such that $\lambda_{1}(G[\vec{t}])< \lambda_{1}(G[\vec{s}^*])$.
So, let $G$ and $G'$ denote $G[\vec{1}_d]$ and $G[\vec{s}^*]$, respectively, on vertex set $\{v_1,\dots,v_n\}$, where $v_j$ denotes the vertex in $G[\vec{s^*}]$ with a loop of weight $-1$. Labeling the adjacency eigenvalues $\lambda_1 \geq \dots \geq \lambda_{n} $, all that remains to show is that
\begin{align}
\lambda_{2}(G) < \lambda_{1}(G'). \label{eqn:strictInq}
\end{align}
By Cauchy's interlacing theorem, if we delete $v_j$ from $G$ and $G'$, we have
\begin{align*}
\lambda_{2}(G) &\leq \lambda_{1}(G\setminus v_j)  \leq \lambda_{1}(G), \\
\lambda_{2}(G') &\leq \lambda_{1}(G'\setminus v_j)  \leq \lambda_{1}(G').
\end{align*}
But since $G\setminus v_j = G'\setminus v_j$, combining the above inequalities yields
\[
\lambda_{2}(G) \leq \lambda_{\max}(G\setminus v_j)  \leq \lambda_{1}(G').
\]
To see the inequality in $(\ref{eqn:strictInq})$ is strict, assume for contradiction that $\lambda_{1}(G'\setminus v_j)= \lambda_{1}(G')$. Then if $\vec{x}=(x_1,\dots,x_j,\dots,x_n)$ denotes the dominant eigenvector of $A'$ associated with $\lambda_{1}$, this implies if we set $x_j=0$, the vector $(x_1,\dots,0,\dots,x_n)$ is still an eigenvector of $G'$ associated with $\lambda_{1}$. But applying the Perron-Frobenius theorem to $A'+I$ yields that the dominant eigenvector of $A'+I$ (and hence that of $A'$) is unique and has all entries positive, which is a contradiction. 
\qed \end{proof}

Using Lemma \ref{claim:oneNeg}, we have:

\begin{proposition} 
The algebraic connectivity of the $d$-dimensional, cube-connected cycles is at most on the order of $2\left(1-\cos\left(\frac{\pi}{d+2}\right) \right)$.
\end{proposition}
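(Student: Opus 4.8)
The plan is to route everything through the two facts already available, namely Lemma~\ref{claim:oneNeg}, which identifies the second-largest eigenvalue of a cube-connected graph, and the spectrum of the path, and then to finish with a one-vertex interlacing estimate.  Since $\CCC(d)=\CC(C_d,d)$ is connected and $3$-regular, its algebraic connectivity equals $3-\lambda_2(\CCC(d))$; indeed, by Theorem~\ref{thm:blkCirculantPoly} and Fact~\ref{fct:plusesGood} its largest eigenvalue is $\lambda_1(C_d[\vec{1}_d])=\lambda_1(A_{C_d}+I)=3$.  So the task reduces to proving a \emph{lower} bound on $\lambda_2(\CCC(d))$.

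First I would apply Lemma~\ref{claim:oneNeg} with $G=C_d$, which is connected on $d$ vertices: this gives $\lambda_2(\CCC(d))=\lambda_1(C_d[\vec{s}^*])$, where $C_d[\vec{s}^*]$ is the $d$-cycle carrying a loop of weight $-1$ at one vertex $v_0$ and a loop of weight $+1$ at each of the remaining $d-1$ vertices, so that $A_{C_d[\vec{s}^*]}=A_{C_d}+I-2e_{v_0}e_{v_0}^T$.  Deleting the row and column indexed by $v_0$ leaves the principal submatrix $A_{P_{d-1}}+I$, because $C_d-v_0\cong P_{d-1}$ and every surviving vertex keeps its weight-$1$ loop.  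By the path spectrum, $\lambda_1(A_{P_{d-1}})=2\cos(\pi/d)$, so this submatrix has largest eigenvalue $1+2\cos(\pi/d)$, and Cauchy's interlacing theorem (the same tool used in the proof of Lemma~\ref{claim:oneNeg}) forces $\lambda_1(C_d[\vec{s}^*])\ge 1+2\cos(\pi/d)$.  Consequently
\[
\rho_2\paren{\CCC(d)} \;=\; 3-\lambda_2(\CCC(d)) \;\le\; 2-2\cos\paren{\tfrac{\pi}{d}} \;=\; 2\paren{1-\cos\paren{\tfrac{\pi}{d}}},
\]
and since $2\paren{1-\cos(\pi/m)}\sim \pi^2/m^2$ as $m\to\infty$, this bound is of the same order as $2\paren{1-\cos(\pi/(d+2))}$, which is the assertion.

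The only step requiring care is the bookkeeping: verifying that deleting $v_0$ really does leave a path with a \emph{uniform} weight-$1$ loop at every vertex, and aligning the path-spectrum index so that the maximum eigenvalue is $2\cos(\pi/d)$.  This is routine.  The genuinely delicate step, which I would attempt only if one insisted on the literal constant $2\paren{1-\cos(\pi/(d+2))}$ rather than merely the matching order, is to evaluate $\lambda_1\paren{A_{C_d}+I-2e_{v_0}e_{v_0}^T}$ exactly: the matrix-determinant lemma expresses its characteristic polynomial as $\chi(C_d)+2\,\chi(P_{d-1})$, and writing the eigenvalue as $1+2\cos\theta$ converts the vanishing condition into the transcendental equation $\tan(d\theta/2)=\csc\theta$, whose least positive root lies just below $\pi/d$ and is asymptotically $\pi/(d+2)$.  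Alternatively one could apply the Reduction Lemma to the reflection automorphism of $C_d[\vec{s}^*]$ fixing $v_0$, and analyze the resulting short weighted path.  Either refinement is more technical, and is unnecessary for the statement as given.
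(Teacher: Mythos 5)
Your proof is correct, and after the shared opening move it diverges from the paper's argument in a genuinely different way. Both proofs begin by invoking Lemma \ref{claim:oneNeg} to reduce the problem to lower-bounding the largest eigenvalue of the $d$-cycle carrying one loop of weight $-1$ and loops of weight $+1$ elsewhere. From there the paper plugs the explicit test vector $x_i=\sin\!\left(\frac{\pi i}{d+2}\right)$ into the Rayleigh quotient and obtains $\lambda_1(A')\geq 1+2\cos\!\left(\frac{\pi}{d+2}\right)$ minus an explicit negative correction of lower order, which produces the stated constant essentially on the nose; you instead delete the negatively-looped vertex, observe that the surviving principal submatrix is $A_{P_{d-1}}+I$ with top eigenvalue $1+2\cos\!\left(\frac{\pi}{d}\right)$ (your index bookkeeping is right under the paper's convention that $P_n$ has $n$ vertices), and finish with Cauchy interlacing. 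Your route is more elementary and avoids the paper's unverified ``routine calculation''; the price is a slightly weaker constant, $2\left(1-\cos\!\left(\frac{\pi}{d}\right)\right)$ in place of $2\left(1-\cos\!\left(\frac{\pi}{d+2}\right)\right)$. Since both quantities are asymptotic to $\pi^2/d^2$ and the proposition only claims a bound ``on the order of'' the latter, this loss is immaterial, and you are right to flag the exact-constant refinement (via the matrix-determinant lemma or a reflection symmetry plus the Reduction Lemma) as optional extra work rather than part of the proof of the statement as given.
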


\begin{proof}
By Lemma \ref{claim:oneNeg}, it suffices to consider the largest adjacency eigenvalue $\lambda_{1}$ of the $d$-cycle with one loop of weight $-1$ on one vertex, and loops of weight 1 on other vertices. Letting $A'$ denote this graphs adjacency matrix, a routine calculation shows that for ${\bf x}=(x_1,\dots,x_n)$ defined by $x_i=\sin\left(\frac{\pi i}{d+2} \right)$,
\begin{align*}
\lambda_{1}(A')\geq \frac{\langle \vec{x}, A' \vec{x} \rangle}{\langle \vec{x},\vec{x} \rangle}&=2\cos\left(\frac{\pi}{d+2}\right)+1+\frac{\sin^2(\tfrac{\pi}{d+2})\left(2 \cos(\tfrac{\pi}{d+2})-2 \right)}{\tfrac{d+1}{2}+\cos(\tfrac{2\pi}{d+2} )}. 
\end{align*}
We note that above expression is strictly larger than the second largest adjacency eigenvalue of the d-cycle with all loops of weight 1, for $d\geq 2$.

\qed \end{proof}

It is worth mentioning that the Cube Connected Graphs are really a
specific instance of a more general technique of constructing
supercomputing topologies, which we refer to as $G$-connected-$H$.
As the generic $G$-connected-$H$ topologies are not grid-like
topologies we will defer their discussion to Section \ref{SS:GconnH}.

\subsection{Miscellaneous}
In this section we consider a few topologies that do not (necessarily) have a strong grid structure.  Typically these topologies have some sort of recursive or multi-layer structure in order to attempt to combine ``good'' properties of several types of graphs.

\subsubsection{CLEX}
``Clique-Expander" (CLEX) is a new supercomputing topology recently introduced by Lenzen and Wattenhofer \cite{Lenzen2016}. The CLEX construction is recursive, starting with a specified number of cliques that are sparsely interconnected. According to the authors, the CLEX design is motivated by a desire to ``localize the issue of an efficient communication network to much smaller systems which may reside on a single multi-core board". CLEX is touted to have superior point-to-point communication properties, particularly when compared with toroidal topologies; nonetheless, the Lenzen and Wattenhofer acknowledge ``the price we pay for these properties are [high] node degrees". 

In this section, we will define the CLEX topology, and prove new bounds on the diameter, algebraic connectivity, and bisection bandwidth. As the authors of CLEX note that that ``the high connectivity of a CLEX system could be considered an abstraction that can be replaced by any efficient local communication scheme within the cliques", we generalize our spectral analysis of CLEX accordingly. In particular, our analysis allows one to replace the cliques of the CLEX construction with other graphs.  We first begin by defining the CLEX graph, as given in \cite{Lenzen2016}.

\begin{definition}[CLEX, $C(k,\ell)$]
For given positive integers $k$ and $\ell$, a CLEX digraph, denoted $C(k,\ell)$, is on $n=k^{\ell}$ vertices with $\ell$ ``levels", and is defined recursively. The base case is $C(k,1)=K_k$, the complete graph on $k$ vertices. The vertex set of $C(k,\ell+1)$ is the $(\ell+1)$-fold cartesian product of $V(K_k)$. The edge set of $C(k,\ell+1)$ consists of all edges from $k$ copies of $C(k,\ell)$, with additional directed edges between these copies. Note the last entry in each vertex identifies which ``copy" of $C(k,\ell)$ that vertex belongs to. The additional edges between these copies of $C(k,\ell)$ are given by the set:
\[
\{((v_1,\ldots,v_l, i), \hspace{1mm} (v_1,\ldots,v_{\ell-1},j,v_{\ell})) : i,j \in [k]\}.
\]
\end{definition}

With regard to the diameter of CLEX, the authors in \cite{Lenzen2016} give an upper bound\footnote{note that there is actually a typo in their paper here, as they write $2^{1/\ell}-1$, which is non-integer} of $C(k,\ell)$ as $2^{\ell}-1$. We claim that the diameter is bounded by $\ell$. 

\begin{proposition}
The diameter of the CLEX graph $C(k,\ell)$ is at most $\ell$. Furthermore, this bound is tight. 
\end{proposition}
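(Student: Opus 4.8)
The statement has two parts, which I would handle separately: the bound $\diam(C(k,\ell)) \le \ell$ by induction on $\ell$, and tightness by exhibiting, for each $k \ge 2$, two vertices at distance exactly $\ell$. Throughout I would argue with directed distance, so the upper bound is obtained in the strongest form, and the extra self-loops permitted earlier in this section have no effect.

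For the upper bound the base case $C(k,1) = K_k$ has diameter at most $1$. For the inductive step, fix vertices $u = (u_1,\dots,u_\ell,u_{\ell+1})$ and $w = (w_1,\dots,w_\ell,w_{\ell+1})$ of $C(k,\ell+1)$. The last coordinate names one of the $k$ embedded copies of $C(k,\ell)$, so the inductive hypothesis gives a directed walk of length at most $\ell$, staying inside copy $u_{\ell+1}$ (hence keeping the last coordinate equal to $u_{\ell+1}$), from $(u_1,\dots,u_\ell)$ to $(w_1,\dots,w_{\ell-1},w_{\ell+1})$. Appending the single crossing edge from $(w_1,\dots,w_{\ell-1},w_{\ell+1},u_{\ell+1})$ to $(w_1,\dots,w_{\ell-1},w_\ell,w_{\ell+1}) = w$ — which is of the prescribed form $((v_1,\dots,v_{\ell-1},v_\ell,i),(v_1,\dots,v_{\ell-1},j,v_\ell))$ with $v_\ell = w_{\ell+1}$, $i = u_{\ell+1}$, $j = w_\ell$ — produces a directed walk of length at most $\ell+1$ from $u$ to $w$, which closes the induction.

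For tightness, assume $k \ge 2$, fix distinct vertices $a,b$ of $K_k$, and define $\phi(v) = \size{\set{i \in [\ell] : v_i \ne a}}$. Unrolling the recursion, every edge of $C(k,\ell)$ is either a base clique edge, altering only coordinate $1$, or a ``level-$m$'' crossing edge for some $2 \le m \le \ell$, which fixes every coordinate except $m-1$ and $m$, moves the old value of coordinate $m-1$ into coordinate $m$, and installs an arbitrary fresh value in coordinate $m-1$ (possibly together with some self-loops, which fix everything). In the level-$m$ case the pair of values at positions $m-1,m$ changes, as a multiset, only by replacing the old position-$m$ value with a fresh value, so $\size{\Delta\phi} \le 1$; the clique-edge case is immediate. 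Hence $\phi$ is $1$-Lipschitz for the (directed or undirected) graph metric. The all-$a$ vertex and the all-$b$ vertex have $\phi$-values $0$ and $\ell$, so they are at distance at least $\ell$; combined with the upper bound this distance equals $\ell$, whence $\diam(C(k,\ell)) = \ell$.

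The depth here is modest and the main obstacle is bookkeeping, in two places. In the upper bound one must pick the correct intermediate target inside copy $u_{\ell+1}$ so that a single crossing edge completes the route — the key observation being that one crossing edge simultaneously corrects coordinate $\ell$ and the copy index. In the tightness argument one must correctly unroll the recursive definition to enumerate the edge types of $C(k,\ell)$ and to verify that the level-$m$ crossing edges merely ``rotate'' coordinates $m-1$ and $m$; getting the direction of this rotation backwards would break both the routing and the Lipschitz estimate, so that is the step I would check most carefully.
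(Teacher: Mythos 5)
Your proposal is correct and follows essentially the same route as the paper: your induction, once unrolled, produces exactly the paper's explicit walk that installs one target coordinate per step while a crossing edge simultaneously shifts the previously carried value into place, and your $1$-Lipschitz potential $\phi$ is the paper's observation that each edge changes the count of any fixed symbol by at most one. Both verifications you flag as delicate (the direction of the coordinate rotation in a crossing edge, and the choice of intermediate target) are handled correctly.
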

\begin{proof}
We construct a walk of length $\ell$ between two arbitrary vertices of $C(k,l)$, $(v_1,\ldots,v_l)$ and $(w_1,\ldots,w_{\ell})$ as follows:
\[
(v_1,\ldots,v_{\ell}),
(w_{\ell},v_2,\ldots,v_{\ell}),(w_1,w_l,v_3,\ldots,v_{\ell}),(w_1,w_2,w_l,v_4,\dots,v_{\ell}),\dots,(w_1,\ldots,w_{\ell-1},w_{\ell}).
\]

Furthermore, this bound can seen to be tight by considering the path
between $(i,i,\ldots,i)$ and $(j,j,\ldots, j)$ for any $i \neq j$.
Specifically, although each edge can modify up to two positions in the
vector describing the vertex, it can change the count of any
particular symbol in the string by at most one. 
\qed \end{proof}

The rest of our analysis will consider the CLEX digraph as an
undirected multi-graph (potentially with loops).  Specifically, for
every directed edge $(i,j)$ in the CLEX digraph we will have an
undirected edge $\set{i,j}$ and thus the total degree of any vertex
does not change.  As our analysis only relies minimally on the
structure of $K_k$, we will consider a generalized version of CLEX,
denoted $C(G,\ell)$ where $G$ is a $t$-regular, connected graph on $k$
vertices.  We note that both the regularity and connectivity
conditions can be relaxed at various points in the following analysis,
however we make both assumptions for simplicity of presentation.  

We first note that even when $G \neq K_k$, the arguments regarding the
diameter follow exactly after accounting for the diameter of $G$ and
potentially directed nature of $G$.

\begin{lemma}
Let $G$ be a $k$-vertex graph, then 
\[ C(G,\ell) = G \otimes I_{k^{\ell-1}}  + \sum_{j=0}^{\ell-2} I_{k^j}
  \otimes M \otimes I_{k^{\ell-2-j}},\] where $M \in \Z^{k^2\times k^2}$
is given by 
\[ M_{(i,j),(a,b)} = \begin{cases} 2 & i = b, j = a \\ 1 & i = b, j
    \neq a \\ 1 & i \neq b, j = a \\ 0 & \textrm{otherwise}.\end{cases}\]
\end{lemma}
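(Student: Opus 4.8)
The plan is to induct on $\ell$, throughout identifying a graph with its adjacency matrix and fixing the ordering of $V(G)^{\ell}$ in which coordinate $1$ is the most significant and coordinate $\ell$ the least significant; under this ordering a $k^{2}\times k^{2}$ matrix $B$ acting only on coordinates $j+1,j+2$ is exactly $I_{k^{j}}\otimes B\otimes I_{k^{\ell-2-j}}$. For $\ell=1$ the right-hand side is $G\otimes I_{k^{0}}=A_{G}$ and the sum is empty, so it equals $C(G,1)$ by definition.

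For the inductive step, let $\ell\ge 2$ and read off from the recursive definition that $C(G,\ell)$ is the edge-disjoint union of (i) $k$ copies of $C(G,\ell-1)$, one per value of the last coordinate, acting on coordinates $1,\dots,\ell-1$, and (ii) the cross-copy edges $\{((v_{1},\dots,v_{\ell-1},i),(v_{1},\dots,v_{\ell-2},j,v_{\ell-1})) : i,j\in[k]\}$. In the chosen ordering the copy index is the last, least significant coordinate, so part (i) has adjacency matrix $A(C(G,\ell-1))\otimes I_{k}$; substituting the inductive hypothesis for $A(C(G,\ell-1))$ and distributing the $\otimes I_{k}$ rewrites this as $A_{G}\otimes I_{k^{\ell-1}}+\sum_{j=0}^{\ell-3} I_{k^{j}}\otimes M\otimes I_{k^{\ell-2-j}}$. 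For part (ii), each cross-edge fixes coordinates $1,\dots,\ell-2$ and alters only the last two via the pattern $(p,q)\mapsto(r,p)$ with $r$ arbitrary, so the cross-edges have directed adjacency matrix $I_{k^{\ell-2}}\otimes N$, where $N$ is the $k^{2}\times k^{2}$ matrix with $N_{(p,q),(r,s)}=1$ when $p=s$ and $0$ otherwise. Since the surrounding analysis treats the CLEX digraph as an undirected multigraph by replacing each directed edge with an undirected one (loops counted twice on the diagonal), part (ii) contributes $I_{k^{\ell-2}}\otimes(N+N^{T})$; a short case check gives $N+N^{T}=M$, since the $((p,q),(r,s))$ entry of $N+N^{T}$ equals the number of the conditions $p=s$ and $q=r$ that hold, i.e.\ $2$ if both, $1$ if exactly one, and $0$ otherwise, matching the case definition of $M$. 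Adding parts (i) and (ii) and observing that $I_{k^{\ell-2}}\otimes M=I_{k^{\ell-2}}\otimes M\otimes I_{k^{0}}$ is the $j=\ell-2$ term of the sum gives the claimed identity.

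The only real difficulty is bookkeeping rather than depth: one must keep careful track that at each level of the recursion the new layer of cross-edges acts on the last two coordinates of the current level, so that when the recursion unwinds the $M$-blocks sweep over exactly the adjacent coordinate pairs $(1,2),(2,3),\dots,(\ell-1,\ell)$ with none repeated or omitted, and one must be consistent about the loop convention so that the weight-$2$ entries of $M$ are recovered from $N+N^{T}$.
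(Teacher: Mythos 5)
Your proof is correct and follows essentially the same route as the paper: both induct on $\ell$ using the recursive edge decomposition of $C(G,\ell)$ into the $k$ copies of $C(G,\ell-1)$ (giving $C^G_{\ell-1}\otimes I_k$) plus the cross edges acting on the last two coordinates (giving $I_{k^{\ell-2}}\otimes M$), then unroll the recursion. The only difference is that you make explicit the symmetrization step $N+N^{T}=M$ under the paper's directed-to-undirected multigraph convention, which the paper asserts without detail.
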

\begin{proof}
The generic formula will follow immediately from the inductive
characterization of the CLEX graphs.  We note that the edges of
$C(G,\ell+1)$ can be partitioned in two sets, those that come from
$C(G,\ell)$ and the cross edges ``between'' copies of $C(G,\ell)$.
Letting $C^G_{\ell}$ be the adjacency matrix for $C(G,\ell)$, the edge
coming from the copies of $C(G,\ell)$ can be described by $C_{\ell}^G
\otimes I_k$.  Now note that an edge is added between
$(v_1,\ldots,v_{\ell-1},v_{\ell},v_{\ell+1})$ and
$(w_1,\ldots,w_{\ell-1},w_{\ell},w_{\ell+1})$ precisely when $v_i =
w_i$ for $1 \leq i \leq \ell_i$ and $v_{\ell+1} = w_{\ell}$ or
$w_{\ell+1} = v_{\ell}$.  Thus the cross edges are given by
$I_{k^{\ell-1}} \otimes M$ and we have that \[ C^G_{\ell+1} =
    C^G_{\ell}\otimes I_k + I_{k^{\ell-1}}\otimes M.\]  The
  non-inductive formula follows immediately from this relationship.
\qed \end{proof}

\begin{lemma}
Let $M \in \Z^{k^2\times k^2}$ be defined by \[ M_{(i,j),(a,b)} = \begin{cases} 2 & i = b, j = a \\ 1 & i = b, j
    \neq a \\ 1 & i \neq b, j = a \\ 0 &
    \textrm{otherwise}.\end{cases}\]
We then have that $\spec{M}$ is the multiset $\set{2k, k^{(k-1)},
  (-k)^{(k-1)}, 0^{(k-1)^2}}$. 
\end{lemma}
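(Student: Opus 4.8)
The plan is to first rewrite $M$ in a form that exposes a tensor structure. Observe that $M_{(i,j),(a,b)} = [i=b] + [j=a]$, where $[P]$ equals $1$ if the statement $P$ holds and $0$ otherwise (the four cases $1{+}1$, $1{+}0$, $0{+}1$, $0{+}0$ match the given definition). Identifying $\R^{k^2}$ with $\R^k\otimes\R^k$ so that the standard basis vector indexed by $(i,j)$ corresponds to $e_i\otimes e_j$, a one-line computation then gives, for all $f,g\in\R^k$,
\[
M(f\otimes g) = (\one^{T}f)\,(g\otimes\one) + (\one^{T}g)\,(\one\otimes f),
\]
where $\one\in\R^k$ is the all-ones vector: the $(i,j)$ entry of the left-hand side is $\sum_{a,b}\paren{[i=b]+[j=a]}f_a g_b = (\one^{T}f)\,g_i + (\one^{T}g)\,f_j$, which is exactly the $(i,j)$ entry of the right-hand side.

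From here I would diagonalize by choosing an orthonormal basis $\set{\tfrac{1}{\sqrt{k}}\one,\,u_2,\dots,u_k}$ of $\R^k$ with $u_2,\dots,u_k\perp\one$, and decomposing $\R^k\otimes\R^k$ into the spans of tensor products of these basis vectors. Three regimes appear. First, if $r,s\ge 2$ then $\one^{T}u_r=\one^{T}u_s=0$, so $M(u_r\otimes u_s)=0$; this is eigenvalue $0$ on a subspace of dimension $(k-1)^2$. Second, for each $s\ge 2$ the two-dimensional subspace $\linspan{\one\otimes u_s,\ u_s\otimes\one}$ is $M$-invariant, with $M(\one\otimes u_s)=k\,(u_s\otimes\one)$ and $M(u_s\otimes\one)=k\,(\one\otimes u_s)$, so $\one\otimes u_s+u_s\otimes\one$ is an eigenvector for $k$ and $\one\otimes u_s-u_s\otimes\one$ an eigenvector for $-k$; this gives multiplicity $k-1$ to each of $\pm k$. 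Third, $M(\one\otimes\one)=2k\,(\one\otimes\one)$, giving eigenvalue $2k$ with multiplicity $1$. Since $(k-1)^2+(k-1)+(k-1)+1=k^2$, the exhibited eigenvectors span $\R^{k^2}$, so $\spec{M}=\set{2k,\,k^{(k-1)},\,(-k)^{(k-1)},\,0^{((k-1)^2)}}$.

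The argument is essentially calculation-free once the tensor identity is established, so I do not anticipate a genuine obstacle; the only point requiring care is bookkeeping — assigning the two summands $(\one^{T}f)(g\otimes\one)$ and $(\one^{T}g)(\one\otimes f)$ to the correct Kronecker factors under the chosen index convention, and then checking that the listed multiplicities sum to $k^2$ so that no eigenvalue is overlooked. An alternative that avoids pure tensors is to write $M=\sum_{i}\big[(\one\otimes e_i)(e_i\otimes\one)^{T}+(e_i\otimes\one)(\one\otimes e_i)^{T}\big]$ and read the eigenspaces off this rank-structured form directly, but the route above seems cleanest.
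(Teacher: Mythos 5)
Your proof is correct. It rests on the same structural observation as the paper's --- that $M$ acts on pure tensors through the all-ones vector, equivalently $M=\sum_i\bigl[(\one\otimes e_i)(e_i\otimes\one)^T+(e_i\otimes\one)(\one\otimes e_i)^T\bigr]$ --- but the execution differs in a way worth noting. The paper works with the standard-basis vectors $e_j\otimes\one$ and $\one\otimes e_j$, computes $M(e_j\otimes\one)=k(\one\otimes e_j)+\one\otimes\one$, and extracts the $\pm k$ eigenvectors $e_j\otimes\one \mp \one\otimes e_j$ (shifted by $\tfrac{2}{k}\one\otimes\one$ in the $+k$ case); it then must argue separately that the nontrivial eigenvectors live in a $(2k-1)$-dimensional subspace and account for the linear dependence $\sum_j(e_j\otimes\one-\one\otimes e_j)=0$ to get the multiplicities right. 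Your choice of an orthonormal basis $\set{\tfrac{1}{\sqrt{k}}\one,u_2,\dots,u_k}$ with $u_s\perp\one$ sidesteps both of these: the subspaces $\linspan{u_r\otimes u_s}$, $\linspan{\one\otimes u_s,\,u_s\otimes\one}$, and $\linspan{\one\otimes\one}$ are manifestly $M$-invariant, mutually independent, and of total dimension $k^2$, so the eigenbasis is explicit and complete with no dimension-counting argument needed. What the paper's route buys is that it never leaves the standard basis (convenient for the quadratic-form computations reused later in the bisection-bandwidth proof); what yours buys is a cleaner and slightly more airtight multiplicity bookkeeping. Both are valid.
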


\begin{proof}
Let $\set{e_i}$ be the standard basis vectors for $\R^k$ and let
$\one$ be the all ones vector in $\R^k$.  We first note that 
\[ M = \sum_{i = 1}^k \paren{\one \otimes e_i}\paren{e_i\otimes
    \one}^T + \paren{e_i \otimes \one}\paren{\one \otimes e_i}^T. \]
It is easy to see at this point that $\one \otimes \one$ is an
eigenvector of $M$ with eigenvalue $2k$.  Furthermore, we can see that
the non-trivial eigenvectors must lie in $\linspan{ \set{ \one \otimes
    e_i}_i \cup \set{ e_i \otimes \one}_i}$, as a $2k-1$ dimensional
subspace of $\R^{k^2}$.   

Now consider 
\begin{align*}
M \paren{ e_j \otimes \one} &= \sum_{i=1}^k \paren{\paren{\one \otimes e_i}\paren{e_i\otimes 
    \one}^T + \paren{e_i \otimes \one}\paren{\one \otimes 
                              e_i}^T} \paren{e_j \otimes \one} \\
&= \sum_{i=1}^k \paren{\one \otimes e_i}\paren{e_i\otimes 
    \one}^T \paren{e_j \otimes \one} + \paren{e_i \otimes \one}\paren{\one \otimes 
                              e_i}^T \paren{e_j \otimes \one}\\
&= k \paren{\one \otimes e_j} + \sum_{i=1}^k e_i \otimes \one \\
&= k \paren{\one \otimes e_j} + \one\otimes \one.
\end{align*}
Similarly, we have that $M \paren{\one \otimes e_j} = k \paren{e_j
  \otimes \one} + \one \otimes \one$.  From this it easy to see that 
\[ M \paren{  e_j \otimes \one - \one \otimes e_j }= -k \paren{  e_j
    \otimes \one - \one \otimes e_j }, \] for all $j$.  Noting that
$\sum_j  \paren{  e_j
    \otimes \one - \one \otimes e_j } = 0$, we have that this yields a
  $k-1$-dimensional eigenspace associated with the eigenvalue $-k$.
  Finally, we note that $M \paren{ e_j \otimes \one + \one \otimes e_j
    - \frac{2}{k} \one \otimes \one} = k \paren{ e_j \otimes \one +
    \one \times e_j - \frac{2}{k}\one \otimes \one}$, we similarly
  observe a $k-1$ dimensional eigenspace associated with the
  eigenvalue $k$.  As the dimension of the non-trivial eigenspaces is
  at most $2k-1$, this provides a complete characterization of the
  spectrum.  
\qed \end{proof}

\begin{proposition} \label{L:clex_spec}
Let $G$ be a $t$-regular, connected graph on $k$ vertices.  The
algebraic connectivity of $C(G,\ell)$ is at most $t + 3k - 1$. 
\end{proposition}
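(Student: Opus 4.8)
The plan is to combine the two preceding structural lemmas with the fact that $C(G,\ell)$ is a regular (multi-)graph. First observe that $M$ has every row sum equal to $2k$ — indeed $\one\otimes\one$ is its leading eigenvector, as established above — so the identity $C(G,\ell)=G\otimes I_{k^{\ell-1}}+\sum_{j=0}^{\ell-2}I_{k^{j}}\otimes M\otimes I_{k^{\ell-2-j}}$ shows that $C(G,\ell)$ is regular of degree $d:=t+2k(\ell-1)$ (the loops contributed by $M$ are harmless). Consequently its Laplacian is $dI-A$, where $A$ denotes the adjacency matrix of $C(G,\ell)$, and the algebraic connectivity equals $d-\lambda_{2}(A)$, with $\lambda_{2}(A)$ the second largest adjacency eigenvalue. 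So it suffices to prove a lower bound on $\lambda_{2}(A)$.

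I would get that lower bound by testing the Rayleigh quotient of $A$ on $x:=v\otimes\one\otimes\cdots\otimes\one$, with $\ell-1$ trailing all-ones vectors, where $v\in\R^{k}$ is a unit eigenvector of $G$ for $\lambda_{2}(G)$. Since $G$ is connected and $t$-regular, $\lambda_{2}(G)<t$, so $v\perp\one$ and hence $x\perp\one_{k^{\ell}}$; since $\one_{k^{\ell}}$ is a leading eigenvector of the regular graph $C(G,\ell)$, Courant--Fischer applied to $\SPAN\set{x,\one_{k^{\ell}}}$ yields $\lambda_{2}(A)\ge x^{T}Ax/x^{T}x$. Evaluating this quotient block by block: the summand $G\otimes I_{k^{\ell-1}}$ sends $x$ to $(Gv)\otimes\one\otimes\cdots\otimes\one=\lambda_{2}(G)\,x$, contributing $\lambda_{2}(G)$; every summand $I_{k^{j}}\otimes M\otimes I_{k^{\ell-2-j}}$ with $j\ge 1$ acts only on all-ones tensor factors, so it multiplies $x$ by $2k$ and contributes $2k$, and there are $\ell-2$ such summands; and the single summand with $j=0$ maps $v\otimes\one$ to $M(v\otimes\one)=k(\one\otimes v)$ — this uses the identity $M(e_{i}\otimes\one)=k(\one\otimes e_{i})+\one\otimes\one$ from the preceding lemma together with $\sum_{i}v_{i}=0$ — and since $\one\otimes v\perp v\otimes\one$ this summand contributes $0$. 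Hence $x^{T}Ax/x^{T}x=\lambda_{2}(G)+2k(\ell-2)$, so $\lambda_{2}(A)\ge\lambda_{2}(G)+2k(\ell-2)$ and the algebraic connectivity is at most $d-\lambda_{2}(G)-2k(\ell-2)=t+2k-\lambda_{2}(G)$ (this last bound also holds, trivially, in the degenerate case $\ell=1$, where $C(G,1)=G$ and the algebraic connectivity is $t-\lambda_2(G)$).

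To finish, I would use two elementary facts: the adjacency matrix of a $t$-regular graph has all eigenvalues in $[-t,t]$, so $\lambda_{2}(G)\ge -t$; and $G$ has $k$ vertices, so $t\le k-1$. Therefore the algebraic connectivity of $C(G,\ell)$ is at most $t+2k-\lambda_{2}(G)\le t+2k+t=2t+2k\le t+(k-1)+2k=t+3k-1$. I expect the only genuinely delicate step to be the block-by-block computation of the Rayleigh quotient: one must track which tensor coordinate each Kronecker summand acts on, and in particular notice that the $j=0$ summand contributes nothing because it shifts the ``perturbation'' $v$ onto a fresh coordinate that was all-ones in $x$, so that only the interior blocks ($j\ge 1$) and the base graph $G$ need accounting. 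The remaining steps — regularity of $C(G,\ell)$, the Courant--Fischer comparison, and the two closing inequalities — are routine, and the bound is deliberately crude, with the slack concentrated in $\lambda_{2}(G)\ge -t$ and $t\le k-1$.
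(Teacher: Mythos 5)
Your proof follows the same route as the paper's: both are Rayleigh-quotient arguments using the test vector $v\otimes\one^{\otimes(\ell-1)}$ built from the second eigenvector of $G$, together with the Kronecker decomposition of $C(G,\ell)$ and the regularity $d=t+2k(\ell-1)$. The differences are in execution, and they matter. The paper bounds the $j=0$ cross term $(v\otimes\tfrac{1}{\sqrt k}\one)^TM(v\otimes\tfrac{1}{\sqrt k}\one)$ crudely below by $-k$ (the smallest eigenvalue of $M$) and then substitutes $\lambda_2(G)=-1$, which is only valid for the original $K_k$ instantiation; its arithmetic then lands on $t+3k+1$, which does not match the stated $t+3k-1$. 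You instead evaluate the cross term exactly --- it is $0$, via $M(v\otimes\one)=k(\one\otimes v)$ and $\one\otimes v\perp v\otimes\one$ --- keep $\lambda_2(G)$ general, and close with $\lambda_2(G)\geq -t$ and $t\leq k-1$, which yields the intermediate bound $t+2k-\lambda_2(G)$ and hence exactly $t+3k-1$. Your handling of the degenerate case $\ell=1$ (where the $j=0$ block is absent) is also a detail the paper skips. In short, the approach is the paper's, but your version is the one that actually proves the proposition as stated.
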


\begin{proof}
First we note that since $G$ is $t$-regular, $C(G,\ell)$ is $t +
2k\paren{\ell-1}$ regular.  Now let $(\lambda,v)$ be the eigenpair associated with the second largest
eigenspace of $G$ such that $\norm{v}=1$ and let $w = v
\otimes \paren{\frac{1}{\sqrt{k}}\one}^{\otimes \ell - 1}$.   Since $G$ is $t$-regular, we have that $\inner{v}{\one} =
0$ and thus $\inner{ w}{\one^{\otimes \ell}} = 0$.  Furthermore, since
$\norm{v} = 1$ and $\norm{\frac{1}{\sqrt{k}}\one} = 1$, we have that
$\norm{w} = 1$.  Thus $w^TMw$ is a lower bound on the second largest
eigenvalue of $C(G,\ell)$.   We now note that
\begin{align*}
w^T M w &= w^T G \otimes I_{k^{\ell-1}} w + \sum_{i=0}^{\ell-2} w^TI_i
          \otimes M \otimes I_{\ell-2-i} w \\
&= \lambda + (v\otimes \one)^T M (v \otimes M) + \sum_{i=1}^{\ell-2}
  2k \\
&\geq \lambda - k + 2k(\ell-2) \\
 &= -1 -k +2k(\ell-2).
\end{align*}
Thus the spectral gap is at most $t + 2k(\ell-1) - \paren{-1-k +
  2k(\ell-2)} = t + 3k + 1$.
\qed \end{proof}

\begin{proposition} \label{L:clex_bw}
Let $G$ be a $t$-regular connected graph.  If $\ell \geq 3$, the
bisection bandwidth of $C(G,\ell)$ is at most $k^{\ell+1}$.
\end{proposition}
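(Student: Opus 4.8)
The plan is to exhibit an explicit balanced bipartition whose cut we can bound, exploiting the recursive picture of $C(G,\ell)$ as $k$ disjoint copies of $C(G,\ell-1)$ joined by the cross-edges carried by $M$ --- that is, $C^G_\ell = C^G_{\ell-1}\otimes I_k + I_{k^{\ell-2}}\otimes M$, or, unrolled, $C(G,\ell) = G\otimes I_{k^{\ell-1}} + \sum_{j=0}^{\ell-2} I_{k^j}\otimes M\otimes I_{k^{\ell-2-j}}$. Two bookkeeping facts drive both cases. First, reading off the entries of $M$, any two distinct top-level copies of $C(G,\ell-1)$ are joined by exactly $2k^{\ell-1}$ edges (with multiplicity), so the total number of cross-edges incident to one copy is $2(k-1)k^{\ell-1}$. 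Second, in the unrolled sum only the term $j=\ell-2$ acts on the last coordinate; the $G$-term and the terms with $j<\ell-2$ live on coordinates $1,\dots,\ell-1$.

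If $k$ is even I would bisect along the last coordinate, taking $\set{v: v_\ell\le k/2}$ as one side. By the second fact the cut lies entirely inside $I_{k^{\ell-2}}\otimes M$, i.e.\ inside $k^{\ell-2}$ disjoint copies of the graph with adjacency matrix $M$ on $[k]^2$, in each of which the last coordinate is split into two equal halves; a direct count against the adjacency rule of $M$ (an edge joins $(i,j)$ to $(a,b)$ iff $i=b$ or $j=a$, with weight $2$ when both) shows each copy contributes exactly $k^3/2$ cut edges. Hence $\BW(C(G,\ell)) \le k^{\ell-2}\cdot k^3/2 = k^{\ell+1}/2 \le k^{\ell+1}$, with no hypothesis on $\ell$ needed.

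If $k$ is odd, so $k\ge 3$, no single coordinate splits evenly, and I would induct on $\ell$. Put $\tfrac{k-1}{2}$ of the copies of $C(G,\ell-1)$ on each side and split the remaining, ``middle'' copy by a minimum bisection; since $k^{\ell-1}$ is odd the two sides then have sizes $\tfrac{k^\ell-1}{2}$ and $\tfrac{k^\ell+1}{2}$, a legitimate bisection. The cut edges are of three kinds: edges between an opposite pair of full copies ($\tfrac{(k-1)^2}{4}$ such pairs, $2k^{\ell-1}$ edges each, by the first fact), cross-edges incident to the middle copy (at most $2(k-1)k^{\ell-1}$, bounding the worst case that every one of them crosses), and edges internal to the middle copy that get cut (at most $\BW(C(G,\ell-1))$). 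With the inductive bound $\BW(C(G,\ell-1))\le k^\ell$ this gives
\[ \BW(C(G,\ell)) \le \frac{(k-1)^2 k^{\ell-1}}{2} + 2(k-1)k^{\ell-1} + k^\ell = \frac{k^{\ell-1}\paren{k^2+4k-3}}{2}, \]
which is at most $k^{\ell+1}$ exactly when $4k-3\le k^2$, i.e.\ for $k\ge 3$. The base case $\ell=3$ reduces to $C(G,2)$; running the same three-way split one more level down, together with $\BW(G)\le \size{E(G)}=kt/2$ and $t\le k-1$, yields $\BW(C(G,2))\le \tfrac{k(k-1)(k+4)}{2} < k^3$, which closes the induction. (The cases $\ell\le 2$ are not needed but also follow, e.g.\ from the first-moment bound $\BW\le\tfrac12\size{E}$.)

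The individual steps are routine; the one delicate point is the odd case, and specifically the cross-edges incident to the bisected middle copy. Estimating these by ``all $2(k-1)k^{\ell-1}$ of them cross'' wastes roughly a factor of two, but it is exactly what the statement can afford: $4k-3\le k^2$ holds with equality at $k=3$, so the induction is tight at the smallest odd $k$ and a cruder estimate would fail there --- this is where the generous constant in $k^{\ell+1}$ (rather than the $\sim k^{\ell+1}/2$ obtained for even $k$) is consumed. A non-inductive alternative for odd $k$, in the spirit of the odd case of the Butterfly proposition above, is to take the single lexicographic balanced set $X=\bigcup_{i=0}^{\ell-1}X_i$ with $X_i=\set{v: v_{\ell-i+1}=\dots=v_\ell=\tfrac{k+1}{2}\text{ and }v_{\ell-i}<\tfrac{k+1}{2}}$ and tally the cut contributed by each of the $\ell$ matrix terms in turn; the main obstacle in that version is the combined accounting over the $\ell-1$ copies of $M$.
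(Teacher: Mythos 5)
Your even case is correct and in fact cleaner than the paper's: splitting on the last coordinate confines the cut to the single term $I_{k^{\ell-2}}\otimes M$, and the count of $k^{\ell+1}/2$ checks out. The gap is in the odd case, in your third category of cut edges. The subgraph induced on a top-level copy $\set{v : v_\ell = c}$ is \emph{not} $C(G,\ell-1)$: the term $I_{k^{\ell-2}}\otimes M$ has nonzero diagonal blocks, since $M_{(x,c),(y,c)}$ is nonzero whenever $x=c$ or $y=c$, so each copy additionally contains, for every prefix $\mathbf{v}\in[k]^{\ell-2}$, a star from the ``diagonal'' vertex $(\mathbf{v},c,c)$ to all $(\mathbf{v},y,c)$ --- that is, $(k-1)k^{\ell-2}$ extra non-loop edges. (This is consistent with your own bookkeeping: the total $M$-degree of a copy is $2k\cdot k^{\ell-1}=2k^{\ell}$, while only $2(k-1)k^{\ell-1}$ of it goes to other copies; the difference is exactly these internal edges.) Hence the internal cut of the bisected middle copy is bounded only by $\BW(C(G,\ell-1))$ \emph{plus} up to $(k-1)k^{\ell-2}$ further cut edges. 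Since your inequality $4k-3\le k^2$ holds with equality at $k=3$, there is no slack to absorb them: the corrected recursion at $k=3$ converges to roughly $10\cdot 3^{\ell-1}$ rather than the required $3^{\ell+1}=9\cdot 3^{\ell-1}$, and the induction does not close. A repair is available but requires real work on your second category --- for instance, bisect the middle copy along its $(\ell-1)$-st coordinate compatibly with the assignment of the full copies, so that the $k$ outgoing cross-edges of each non-diagonal vertex $(\mathbf{v},x,c)$ (all of which go to the single copy indexed by $x$) do not cross --- rather than charging all $2(k-1)k^{\ell-1}$ of them to the cut.

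For contrast, the paper avoids this trap by bisecting on the parities of the last \emph{two} coordinates, taking $X=[k]^{\ell-2}\times\paren{A\times A'\cup A'\times A}$, recursing on $\ell-2$ rather than $\ell-1$ in the odd case, and explicitly tallying the diagonal contributions such as $\paren{\one_{[k-1]}\otimes e_k}^TM\paren{e_k\otimes e_k}=k-1$; that bookkeeping is precisely what your copy-based decomposition elides.
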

\begin{proof}
We may assume without loss of generality that the vertices of $G$ are
given by $[k] = \set{1,\ldots,k}$ and thus the vertex set of
$C(G,\ell)$ is given by $[k]^{\ell}$.  In order to upper bound the
bisection bandwidth we will provide two explicit partitions of the
vertex set, one for the case when $k$ is even and a modification
construction for when $k$ is odd.  To that end, define $A$ to be the
set of odd integers in $[k]$ if $k$ is even, and in $[k-1]$ if $k$ is
even.  Similarly define $A'$ to be the set of even integers in $[k]$.
We note that if $k$ is even then $[k]$ is a disjoint union of $A$ and
$A'$, while if $k$ is odd $[k]$ is a disjoint union of $A$, $A'$, and
$\set{k}$.

We first consider the case where $k$ is even and define the sets $X =
[k]^{\ell - 2} \times \paren{ A \times A' \cup A' \times A}$ and
$\overline{X} = [k]^{\ell-2} \times \paren{A \times A \cup A' \times A'}$.
Since $\size{A} = \size{A'}$ and $[k]$ is a disjoint union of $A $ and
$A'$, it is clear that $(X,\overline{X})$ is a bisection of the
$C(G,\ell)$.

  Now let $A_{\ell}$ be the adjacency matrix of $C(G,\ell)$ and let
  $\one_X$ (respectively $\one_{\overline{X}}$) be the indicator
  vector for the set $X$ (respectively $\overline{X}$).  By definition
  \[ \BW\paren{C(G,\ell)} = \one_X^T A_{\ell} \one_{\overline{X}} =
    \one^T_X \paren{G \otimes I_{k^{\ell-1}}  + \sum_{j=0}^{\ell-2} I_{k^j}
  \otimes M \otimes I_{k^{\ell-2-j}}} \one_{\overline{X}}.\]
Noting that for any set $S$,  we have that$\one_S^T I_{\size{S}}
\one_{\overline{S}} = 0$ as $S$ and $\overline{S}$ are disjoint, this
can be simplified to
\begin{align*}
  \BW\paren{C(G,\ell)}&= \one_{X}^T\paren{ I_{k^{\ell-3}} \otimes M \otimes 
    I_{k} +  I_{k^{\ell-2}} \otimes M}
                          \one_{\overline{X}} \\
 &= k^{\ell-3} \paren{\one_{[k] \times A \times A'} + \one_{[k] \times A' \times 
   A}}^T 
   \paren{ M \otimes 
    I_{k} +  I_{k} \otimes M} \paren{\one_{[k] \times A 
   \times A} + \one_{[k] \times A' \times A'}} \\
  &= k^{\ell-3} \paren{2\frac{k}{2}\one_{[k] \times A}^TM\one_{[k]
    \times A'} +4 k \one_{ A \times A'}M \one_{A \times A}},
\end{align*}
where the last line comes from the symmetry of $A$ and $A'$ and the
symmetry of $M$ in terms of the Kronecker product.  Substituting in
the defintion for $M$ we get
\begin{align*}
\paren{\one \otimes \one_A}^T M \paren{\one \otimes
  \one_{A'}}&= \sum_{i=1}^{k} \paren{\one \otimes
              \one_A}^T\paren{ (e_i \otimes \one)(\one \otimes e_i)^T
              + (\one \otimes e_i)(e_i \otimes \one)^T}\paren{\one \otimes
              \one_{A'}} \\
  &= \sum_{i=1}^k \size{A} k \paren{e_i^T\one_{A'}} + k \paren{e_i^T\one_A}\size{A'}
  \\
            &= 2 k \size{A}\size{A'} \\
              &= \frac{k^3}{2}
\end{align*}
and
\begin{align*}
\paren{\one_A \otimes \one_{A'}}^T M \paren{\one_A \otimes \one_A} &=
                                                                     \sum_{i=1}^k
                                                                     \paren{\one_A
                                                                     \otimes \one_{A'}}\paren{(e_i \otimes \one)(\one \otimes e_i)^T
              + (\one \otimes e_i)(e_i \otimes \one)^T}\paren{\one_A
                                                                     \otimes
                                                                     \one_A}
  \\
 &= \sum_{i=1}^k \paren{\one_A^Te_i} \frac{k}{2} \frac{k}{2}
  \paren{e_i^T\one_A} +
  \frac{k}{2}\paren{\one_{A'}^Te_i}\paren{e_i^T\one_A}\frac{k}{2} \\
  &= \frac{k^3}{8}.
\end{align*}
Thus we have that if $k$ is even, the bisection bandwidth is $k^{\ell+1}$.

We now turn to the case where $k$ is odd. Because of the parity issues
in this case, it will be convient to define the bipartition
inductively.  To that end, let $(B,\overline{B})$ be a 
bipartition of $C(G,\ell-2)$ which witnesses the bandwidth such that
$\size{B} +1 = \size{\overline{B}}$.  Now define the sets
\begin{align*}
  Y &= [k]^{\ell-2} \times \paren{ (A \times A') \cup (A' \times A) 
    \cup (\set{k} \times [k-1]) } \cup \paren{ B \times 
      \set{k} \times \set{k}} \\
      \overline{Y} &= [k]^{\ell-2} \times \paren{ (A \times A) \cup (A' \times A') 
    \cup ([k-1] \times \set{k})} \cup \paren{ \overline{B} \times 
          \set{k} \times \set{k}}.
\end{align*}
It is clear that since $\size{A} = \size{A'}$ and $\abs{\size{B} -
  \size{B'}} = 1$, that $(Y,\overline{Y})$ is a bipartition of
$[k]^{\ell}$.  Abusing notation slightly, and we denote the set $[k]^{\ell-2} \times \paren{ (A \times A')
  \cup (A' \times A)}$ by $X$ and the set $[k]^{\ell-2} \times \paren{(A
  \times A) \cup (A' \times A')}$ as $\overline{X}$.  If we again let
$A_{\ell}$ denote the adjacency matrix of $C(G,\ell)$, we have that
\[ \one_X^T A_{\ell} \one_{\overline{X}} = (k-1)^3k^{\ell-2} \] by similar arguments as
above.  Additionally, we note that we have that
\begin{align*}
\paren{\one \otimes \one_A}^T M \paren{\one \otimes \one_{[k-1]}} &=
                                                                    k(k-1)^2
  \\
  \paren{\one \otimes \one_A}^T M \paren{\one \otimes e_k} &= k(k-1)
  \\
  \paren{\one \otimes \one_{[k-1]}}^T M \paren{\one \otimes e_k} &=
                                                                   2k(k-1)
  \\
  \paren{\one_A \otimes \one_{A'}}^T M \paren{e_k \otimes
  \one_{[k-1]}} & = \frac{(k-1)^2}{4}
  \\
  \paren{\one_A \otimes \one_{A'}}^T M \paren{e_k \otimes e_k} & = 0
  \\
    \paren{\one_{[k-1]} \otimes e_k}^T M \paren{e_k \otimes e_k} & = k-1. 
  \\
    \paren{\one_{[k-1]} \otimes e_k}^T M \paren{e_k \otimes
  \one_{[k-1]}} & = k(k-1)
  \\
\end{align*}
Putting these calculations together, we get that the bandwidth of the
partition $(Y,\overline{Y})$ is
\[  (k-1)k^{\ell} 
  + \one_{[k]^{\ell-2} \times[k-1] }^T
  \paren{I_{k^{\ell-3}}\otimes M}\one_{B
    \times \set{k}}+\one_{B
    \times \set{k} \times \set{k}}^T A_{\ell}\one_{B
    \times \set{k} \times \set{k}}.\]
Observing that $A_{\ell} = A_{\ell-2}\otimes I_{k^2} + I_{k^{\ell-3}}
  \otimes M \otimes I_k + I_{k^{\ell-2}} \otimes M$, it is easy to see
  that
  \[ \one_{B
    \times \set{k} \times \set{k}}^T A_{\ell}\one_{\overline{B}
    \times \set{k} \times \set{k}}= \BW(C(G,\ell-2)) +\one_{B
    \times \set{k}}^T  \paren{I_{k^{\ell-3}}
  \otimes M} \one_{\overline{B}
  \times \set{k} } \]
Now we note that terms involving $\one_{B \times \set{k}}$
sum to
\begin{align*}
 \paren{\one_{[k]^{\ell-1}} - \one_{B
                      \times \set{k}}}^T
  \paren{I_{k^{\ell-3}}\otimes M}\one_{B
    \times \set{k}} &= \one_{[k]^{\ell-1}}
                      \paren{I_{k^{\ell-3}}\otimes M}
                      \one_{B\times \set{k}}  -
                      \one_{B
                      \times \set{k}}^T
  \paren{I_{k^{\ell-3}}\otimes M}\one_{B\times \set{k}} \\
  &= 
                      2k\size{B} -
                      \one_{B
                      \times \set{k}}^T
    \paren{I_{k^{\ell-3}}\otimes M}\one_{B \times \set{k}} \\
                    &\leq
                      k\paren{k^{\ell-2}-1}
\end{align*}
                    Thus we have that
                    \begin{align*}
                      \BW\paren{C(G,\ell)} &\leq
                      (k-1)k^{\ell} + k^{\ell-1} 
                                             - k +
                                             \BW\paren{C(G,\ell-2)}  \\
                      &= k^{\ell+1}  -  k^{\ell} + k^{\ell-1} -  k + 
                        BW\paren{C(G,\ell-2)}.
                    \end{align*}
 Now, as $\BW\paren{C(G,1)} \leq k^2$ and
 $\BW\paren{C(G,2)} \leq k^3$, it is easy to see that by induction
 $\BW\paren{C(G,\ell)} \leq k^{\ell+1}$.
\qed \end{proof}

\subsubsection{$G$-connected-$H$}\label{SS:GconnH}
The $G$-connected-$H$ construction generalizes several different
constructions,  such as the Peterson Torus and Dragonfly topologies
discussed in this section as well as the Cube Connected Cycle topology
discussed in Section \ref{SS:CCC}.    To see this, we first formally
define what we mean by a $G$-connected-$H$ topology.

\begin{definition}[$k$-fold $G$-connected-$H$, $G \conn_k H$]
 A $k$-fold $G$-connected-$H$ topology, $\mathcal{G} = G \conn_k H$, is constructed from a $d$-regular
$G$ and a $r$-regular $td$-vertex graph $H$.  The
 vertex set of $\mathcal{G}$ is $V_G \times V_H$ and $G[ \set{g}
 \times V_H]$ is isomorphic to $H$ for all vertices $g \in V_G$.  The
 remaining edges form a $k$-regular graph on $V_G \times V_H$ satisfying that
 \[
e\!\paren{ \set{v} \times V_H, \set{v'} \times V_H} = \begin{cases} kt &
  \set{v,v'} \in E_G \\ 0 & \textrm{otherwise}\end{cases}.\]
When $k=1$ will surpress the subscript and simply write $G \conn H$.
\end{definition}

Oftentimes, $G$ is a Cayley graph and so the $k$-regular graph on $V_G \times
V_H$ can be defined by a mapping from the generators of $G$ to ordered
pairs in $V_H^2$.  For example, we denote by $Q_k$ the $k$-dimensional
hypercube, we can view the Cube Connecte Cycle topology of Section
\ref{SS:CCC} as a $1$-fold $Q_k$-connected-$C_k$.  More concretely, we note
that $Q_k$ can be represented as the Cayley graph on $\Z_2^k$
generated by the standard basis vectors, $\set{e_1,\ldots, e_k}$.
Since the generators of $Q_k$ have order two, the matching edges can
be formed by associating each generator with a fixed vertex of $C_k$.

This viewpoint can be extend to more complicated topologies, such as
the Peterson Torus~\cite{JungHyun2008}.
\begin{definition}[Peterson Torus, $\PT(a,b)$]
Let $a, b \geq 2$ such that at least one of $a$ or $b$ is odd.  Define
the vertex set of 
the Peterson Torus Topology, $\PT(a,b)$, as the set of ordered triples $(x,y,p)$
where $0 \leq x < a$, $0 \leq y < b$, and $0 \leq p < 10$.  Fixing the labels of the Peterson graph as given in Figure \ref{F:Peterson} the edge relationship is defined as:
\begin{itemize}
      \item \textbf{internal edge} $\paren{x,y,p}$ is adjacent to $\paren{x,y,q}$ if
      $p$ and $q$ are adjacent in the Petersen graph.
      \item \textbf{longitudinal edge} $\paren{x,y,6}$ is adjacent to $\paren{x,y+1,9}$.
      \item \textbf{latitudinal edge} $\paren{x,y,1}$ is adjacent to
      $\paren{x+1,y,4}$.
      \item \textbf{diagonal edge} $\paren{x,y,2}$ is adjacent to
      $\paren{x+1,y+1,3}$.
      \item \textbf{reverse diagonal edge} $\paren{x,y,7}$ is adjacent to
      $\paren{x-1,y+1,8}$.
      \item \textbf{diameter edge} $\paren{x,y,0}$ is adjacent to
      $\paren{x+\floor{\nicefrac{a}{2}},y+\floor{\nicefrac{b}{2}},5}$.
    \end{itemize}
  \end{definition}
This can be seen as a $1$-fold $G$-connected-$H$ graph where $G$ being the
Cayley graph on $Z_a \times \Z_b$ with generator set $\set{ \pm (0,1),
  \pm (1,0) , \pm (1,1), \pm (-1,1), \pm
  (\floor{\nicefrac{a}{2}},\floor{\nicefrac{b}{2}})}$  and $H$ being
the Peterson graph. We note that the condition that one of $a$ or $b$
is odd, is simply to ensure that the generator  $(\floor{\nicefrac{a}{2}},\floor{\nicefrac{b}{2}})$ is not it's own inverse and so $G$ has degree 10.  By allowing multiple edges in $G$, this restriction can be eliminated.  

\begin{figure}
\centering
\subfloat[Peterson Graph]{\label{F:Peterson}
      \begin{tikzpicture}
        \node[] (0) at (2,5) {0};
        \node[] (1) at (4,3.5) {1};
        \node[] (2) at (4,0) {2};
        \node[] (3) at (0,0) {3};
        \node[] (4) at (0,3.5) {4};
        \node[] (5) at (2,3.5) {5};
        \node[] (6) at (2.9,1.25) {6};
        \node[] (7) at (1,2.5) {7};
        \node[] (8) at (3,2.5) {8};
        \node[] (9) at (1.1,1.25) {9};
        \draw[-] (0.south) -- (5.north);
        \draw[-] (0.south west) -- (4.north east);
        \draw[-] (0.south east) -- (1.north west);
        \draw[-] (4.south) -- (3.north);
        \draw[-] (4.south east) -- (7.north west);
        \draw[-] (5.south) -- (9.north east);
        \draw[-] (5.south) -- (6.north west);
        \draw[-] (1.south west) -- (8.north east);
        \draw[-] (1.south) -- (2.north);
        \draw[-] (7.east) -- (8.west);
        \draw[-] (7.east) -- (6.north west);
        \draw[-] (8.west) -- (9.north east);
        \draw[-] (9.south west) -- (3.north east);
        \draw[-] (6.south east) -- (2.north west);
        \draw[-] (3.east) -- (2.west);
      \end{tikzpicture}
}
\hspace{1in}
\subfloat[Reduced $\PT(a,b)$]{\label{F:PetersonTorus}
      \begin{tikzpicture}
        \node[] (0) at (2,5) {0};
        \node[] (1) at (4,3.5) {1};
        \node[] (2) at (4,0) {2};
        \node[] (3) at (0,0) {3};
        \node[] (4) at (0,3.5) {4};
        \node[] (5) at (2,3.5) {5};
        \node[] (6) at (2.9,1.25) {6};
        \node[] (7) at (1,2.5) {7};
        \node[] (8) at (3,2.5) {8};
        \node[] (9) at (1.1,1.25) {9};
        \draw[-] (0.south) -- (5.north);
        \draw[-] (0.south west) -- (4.north east);
        \draw[-] (0.south east) -- (1.north west);
        \draw[-] (4.south) -- (3.north);
        \draw[-] (4.south east) -- (7.north west);
        \draw[-] (5.south) -- (9.north east);
        \draw[-] (5.south) -- (6.north west);
        \draw[-] (1.south west) -- (8.north east);
        \draw[-] (1.south) -- (2.north);
        \draw[-] (7.east) -- (8.west);
        \draw[-] (7.east) -- (6.north west);
        \draw[-] (8.west) -- (9.north east);
        \draw[-] (9.south west) -- (3.north east);
        \draw[-] (6.south east) -- (2.north west);
        \draw[-] (3.east) -- (2.west);
        \draw[red, bend right] (0.south) to (5.north);
        \draw[red, bend left] (4.east) to (1.west);
        \draw[red, bend left] (7.east) to (8.west);
        \draw[red, bend right] (9.east) to (6.west);
        \draw[red, bend left] (3.east) to (2.west);
      \end{tikzpicture}
}
\caption{Peterson Graph and the Peterson Graph with the Symmetric Function induced by the Peterson Torus}
\end{figure}
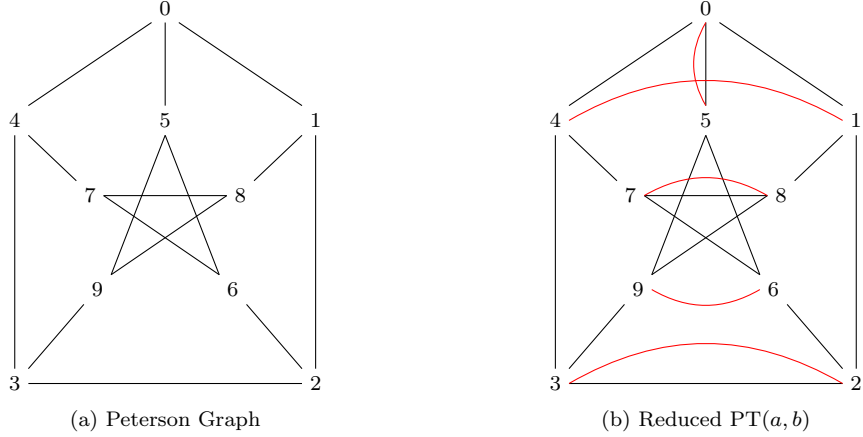

We first consider the bisection bandwidth of $G \conn_k H$.  As this will depend explicitly
on the number of nodes and edges in $H$, it is helpful to recall some
standard notation first.  Following the notation of West~\cite{West:GraphTheory}, the
number of nodes in a network $G$ will be denoted by $\size{G}$ and the
number of edges will be denoted $\order{G}$.

\begin{proposition}\label{L:GconnH_BW}
Let $\mathcal{G} = G \conn_k H$, then the bisection bandwidth of
$\mathcal{G}$ is at most $\frac{\size{G}\size{H}}{2\order{G}}k \BW(G)
+ \BW(H)$.
\end{proposition}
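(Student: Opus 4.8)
The plan is to build an explicit balanced bipartition of $V(\mathcal{G}) = V_G\times V_H$ from minimum bisections of $G$ and of $H$, and count crossing edges. The arithmetic that drives everything is the observation that, since $G$ is $d$-regular, $\order{G} = d\size{G}/2$, and since $H$ has $\size{H} = td$ vertices, $\frac{\size{G}\size{H}}{2\order{G}} = \frac{\size{G}\cdot td}{d\size{G}} = t$; so the claimed bound is precisely $kt\cdot\BW(G) + \BW(H)$, where $kt$ is the number of connecting (non-copy) edges between any two adjacent copies of $H$ inside $\mathcal{G}$. Thus a cut of $\mathcal{G}$ that ``respects copies'' pays $kt$ for every pair of adjacent copies it separates, so lifting a minimum bisection of $G$ should cost $kt\,\BW(G)$, with the extra $\BW(H)$ being the cost of repairing parity when $\size{G}$ is odd.

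Fix a minimum bisection $(S,\overline S)$ of $G$ and a minimum bisection $(T,\overline T)$ of $H$. If $\size{G}$ is even, take $X = S\times V_H$: this is balanced, every copy of $H$ lies wholly on one side (so no edge internal to a copy is cut), and the cut edges are exactly the connecting edges joining a copy $g\in S$ to a copy $g'\in\overline S$ with $g\sim_G g'$, of which there are $kt$ for each of the $\BW(G)$ cut edges of $G$. Hence $e(X,\overline X) = kt\,\BW(G) = \frac{\size{G}\size{H}}{2\order{G}}k\BW(G)$, and the bound holds with the $\BW(H)$ term to spare.

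If $\size{G}$ is odd, take $\size{S} = \lfloor\size{G}/2\rfloor$, choose a copy $g_0\in\overline S$, and set $X = (S\times V_H)\cup(\{g_0\}\times T)$, which is balanced. The cut now splits into: (i) connecting edges between an $S$-copy and an $(\overline S\setminus\{g_0\})$-copy, contributing $kt\,e_G(S,\overline S\setminus\{g_0\}) = kt(\BW(G)-d_S(g_0))$, where $d_S(g_0)$ counts $G$-neighbours of $g_0$ in $S$; (ii) edges of copy $g_0$ straddling $T$ and $\overline T$, contributing exactly $e_H(T,\overline T)=\BW(H)$; and (iii) connecting edges at copy $g_0$ with their far endpoint on the opposite side. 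Writing $a(w)$ for the number of connecting edges from a vertex $w$ of copy $g_0$ into $S$-copies (so $\sum_w a(w) = kt\,d_S(g_0)$ and each $w$ has $k$ connecting edges), a short count gives $\text{(iii)} = \tfrac{k\size{H}}{2} + \sum_{w\in\overline T}a(w) - \sum_{w\in T}a(w)$, and adding (i) collapses the needed inequality $\text{(i)}+\text{(iii)}\le kt\,\BW(G)$ to $\sum_{w\in T}a(w)\ge \tfrac{k\size{H}}{4}$. This is where the real work lies: \emph{a priori} the connecting edges at the split copy could inflate the cut by as much as $\tfrac{k\size{H}}{2}$, dwarfing the $\BW(H)$ slack, so one must pick $g_0$ to be a copy incident in $G$ to the cut $(S,\overline S)$ and pick the minimum bisection $T$ of $H$ so that $T$ captures at least half of the connecting edges of $g_0$ pointing into $S$-copies. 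Showing such a compatible choice of $(S,\overline S)$, $g_0$, and $T$ always exists (or otherwise dispatching the leftover parity case) is the one genuinely delicate point; the remaining steps are routine edge counting.
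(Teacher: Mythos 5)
Your construction is the same one the paper uses: lift a minimum bisection of $G$ to $V_G\times V_H$, and when $\size{G}$ is odd repair the imbalance by splitting a single copy of $H$. The even case is complete and correct, and your identification of $\frac{\size{G}\size{H}}{2\order{G}}k=kt$ as the per-edge multiplicity matches the paper. The problem is the odd case, and you have diagnosed it accurately yourself: after correctly reducing the required estimate to $\sum_{w\in T}a(w)\ge k\size{H}/4$, you leave open whether a compatible choice of $(S,\overline S)$, $g_0$, and $T$ exists --- and in fact it need not. Since $\sum_{w}a(w)=kt\,d_S(g_0)$ and $d_S(g_0)\le\sum_{g\in\overline S}d_S(g)=e_G(S,\overline S)=\BW(G)$, whenever $\BW(G)<d/4$ one has $\sum_{w\in T}a(w)\le kt\,d_S(g_0)<ktd/4=k\size{H}/4$ for every admissible $g_0$ and every $T$, so the inequality you need fails identically; and $d$-regular graphs of odd order with bisection width below $d/4$ certainly exist for large $d$. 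So the construction as set up cannot close the argument in general: in that regime one either needs a genuinely different cut of $\mathcal{G}$, or the additive term must be enlarged (a crude count of the connecting edges at the split copy gives an unconditional bound of the form $kt\BW(G)+\BW(H)+\tfrac{k\size{H}}{2}$).

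For what it is worth, the paper's own treatment of the odd case is a single sentence --- the imbalance ``can be corrected by splitting one of the copies of $H$, yielding the extra $\BW(H)$ term'' --- which does not account for the connecting edges incident to the split copy at all. Your accounting is more careful and exposes that this step, as stated, does not suffice. The verdict: same approach as the paper, rigorous in the even case and in the setup of the odd case, but the step you flag as delicate is a genuine gap, and it is a gap in the source as well.
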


\begin{proof}
We note that if $\size{G}$ is even, then the bipartition of $G$
yielding $\BW(G)$, lifts naturally to a bipartition of $G \conn_k H$.
As each edge in $G$ is represented by
$\frac{\size{G}\size{H}}{2\order{G}}k$ edges in $G \conn_k H$, this gives an upper
bound of $\frac{\size{G}\size{H}}{2\order{G}}k \BW(G)$.   If instead,
$\size{G}$ is odd, the natural lift of the minimal bipartition doesn't
yield a bipartition of $G \conn_k H$.  However, this can be corrected
by splitting one of the copies of $H$, yielding the extra $\BW(H)$
term.  
  \qed \end{proof}

We now turn the algebraic connectivity of $G \conn_k H$.  Because of the general structure of the matching edges and the
potentially unstructured nature of $G$ and $H$, the reduction lemma
can not be applied in general to $G$-connected-$H$ graphs.  However,
there is still a natural symmetry formed by the $G$-connected-$H$
structure, specifically the identification of vertices by common $G$
labels or common $H$ labels.  However, because of the lack of
automorphism structure we must turn to eigenvalue interlacing results
such as the following by Haemmers.

\begin{lemma}{\cite[Corollary 1.8]{Brualdi:GraphsandMatrices}}\label{L:Brualdi}
Let $A$ be an $n \times n$ real-symmetric matrix with eigenvalues
$\lambda_1 \geq \lambda_2 \geq \cdots \geq \lambda_n$.  Let $\alpha_1,\ldots,
\alpha_m$ be a partition of the integers $\set{1,2,\ldots,n}$ into $m$
nonempty consecutive sets of integers, where $\size{\alpha_i} = n_i$.
Let $A_{ij}$ be the submatrix of $A$ defined by the entries whose row
is in $\alpha_i$ and column is in $\alpha_j$.  Define $B$ as the $m\times m$
real symmetric matrix with \[ b_{ij} =
  \frac{\one_{n_i}^TA_{ij}\one_{n_j}}{n_i}.\]  The eigenvalues of $B$
interlace the eigenvalues of $A$, in particular, $\lambda_2(B) \leq
\lambda_2$.  
  \end{lemma}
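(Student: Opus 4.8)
The plan is to realize $B$, up to a spectrum–preserving similarity, as the compression of $A$ to the $m$-dimensional subspace spanned by the (normalized) block-indicator vectors, and then read off the interlacing from the Courant–Fischer min–max formula. First I would introduce the $n\times m$ matrix $S$ whose $i$-th column is $\tfrac{1}{\sqrt{n_i}}\mathbf{1}_{\alpha_i}$, the normalized indicator vector of the block $\alpha_i$ inside $\mathbb{R}^n$. Since the blocks $\alpha_1,\dots,\alpha_m$ are pairwise disjoint, these columns are pairwise orthogonal unit vectors, so $S^TS = I_m$; that is, $S$ is an isometric embedding of $\mathbb{R}^m$ into $\mathbb{R}^n$.

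Next I would compute the compression $\tilde B := S^TAS$. Because $A_{ij}$ is exactly the submatrix of $A$ with rows in $\alpha_i$ and columns in $\alpha_j$, one has
\[
\tilde B_{ij} = \tfrac{1}{\sqrt{n_in_j}}\,\mathbf{1}_{\alpha_i}^T A\, \mathbf{1}_{\alpha_j} = \tfrac{1}{\sqrt{n_in_j}}\,\mathbf{1}_{n_i}^T A_{ij}\,\mathbf{1}_{n_j} = \sqrt{\tfrac{n_i}{n_j}}\; b_{ij}.
\]
The matrix $\tilde B = S^TAS$ is manifestly symmetric (indeed $(S^TAS)^T = S^TA^TS = S^TAS$), and conjugating by $D = \mathrm{diag}(\sqrt{n_1},\dots,\sqrt{n_m})$ gives $D^{-1}\tilde B D = B$, so $B$ and $\tilde B$ are similar and share the same real eigenvalues. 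This also reconciles the statement's description of $B$ as ``real symmetric'': the asymmetric normalization $b_{ij} = \mathbf{1}_{n_i}^TA_{ij}\mathbf{1}_{n_j}/n_i$ still yields a matrix conjugate to a genuinely symmetric one.

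Finally I would invoke the Poincaré separation theorem (the ``inclusion principle''), which is a one-line consequence of Courant–Fischer: for a subspace $U\subseteq\mathbb{R}^m$ of dimension $i$, its image $SU\subseteq\mathbb{R}^n$ again has dimension $i$ since $S$ is injective, and $\tfrac{x^T\tilde Bx}{x^Tx} = \tfrac{(Sx)^TA(Sx)}{(Sx)^T(Sx)}$ because $S^TS = I_m$; hence, writing $\mu_1\ge\cdots\ge\mu_m$ for the eigenvalues of $\tilde B$ (equivalently of $B$),
\[
\mu_i = \max_{\dim U = i}\ \min_{0\ne x\in U}\frac{x^T\tilde Bx}{x^Tx} \ \le\ \max_{\dim W = i}\ \min_{0\ne y\in W}\frac{y^TAy}{y^Ty} = \lambda_i,
\]
and the lower bound $\mu_i\ge\lambda_{n-m+i}$ follows symmetrically from the dual min–max expression. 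Taking $i=2$ gives $\lambda_2(B) = \mu_2 \le \lambda_2$, which is the asserted special case. The computations are entirely routine; the only place that needs any care is the bookkeeping in the second step — recognizing that the quotient matrix as written is conjugate to the symmetric compression $S^TAS$, so that interlacing legitimately applies to its spectrum. I note that the hypothesis that the blocks be \emph{consecutive} plays no role in this argument.
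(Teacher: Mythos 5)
Your proof is correct and is the standard argument for quotient-matrix interlacing; the paper itself gives no proof of this lemma, citing it as a known result (Corollary 1.8 of the Brualdi reference, due to Haemers), and your compression-plus-Poincar\'e-separation argument via $S^TAS$ with $S^TS=I_m$ is essentially the proof found in that source. Your side observations are also accurate: as literally written $B$ need only be \emph{similar} to the symmetric compression (it is symmetric itself only when all $n_i$ coincide), and the hypothesis that the blocks be consecutive plays no role.
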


  \begin{proposition}\label{L:GconnH}
Let $G$ be a connected $d$-regular graph and let $H$ be a
connected $r$-regular, $td$-vertex graph, and let $\mathcal{G} = G
\conn_k H$ be a $k$-fold  $G$-connected-$H$ graph.  Let $\lambda_2$ be the
second largest eigenvalue of $G$, then the algebraic connectivity of
$\mathcal{G}$ is at most $k - \frac{k\lambda_2}{d}$.  
\end{proposition}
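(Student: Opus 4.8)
The plan is to bound the second largest adjacency eigenvalue of $\mathcal G$ from below and then convert this to a bound on the algebraic connectivity using regularity. Since $G$ is $d$-regular and $H$ is $r$-regular, every vertex of $\mathcal G = G \conn_k H$ sits in exactly one copy of $H$ (contributing degree $r$) and is incident to exactly $k$ of the remaining ``matching'' edges, so $\mathcal G$ is $(r+k)$-regular on $\size{G}\size{H} = td\,\size{G}$ vertices; as $G$ and $H$ are connected, so is $\mathcal G$, hence its largest adjacency eigenvalue is $r+k$ with eigenvector $\one$, and $\rho_2(\mathcal G) = (r+k) - \lambda_2(\mathcal G)$. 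It therefore suffices to show $\lambda_2(\mathcal G) \ge r + \tfrac{k}{d}\lambda_2$.

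For this I would invoke the Haemers interlacing bound of Lemma \ref{L:Brualdi} with the partition of $V(\mathcal G) = V_G \times V_H$ into the $\size{G}$ copies $\set{g}\times V_H$ for $g \in V_G$ (relabeling vertices so that each copy is a consecutive block of indices, which does not change any spectrum). Each part has size $td$, so the quotient matrix $B$ is the symmetric $\size{G}\times\size{G}$ matrix with entries $B_{g,g'} = \tfrac{1}{td}\,\one^T (A_{\mathcal G})_{[g],[g']}\,\one$. The diagonal entries are $\tfrac{1}{td}\,\one^T A_H \one = \tfrac{1}{td}(r\,td) = r$, since the subgraph induced on each copy is isomorphic to $H$ and there are no matching edges inside a copy. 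For the off-diagonal entries, $\one^T (A_{\mathcal G})_{[g],[g']}\one$ is exactly the number of edges between the two copies, which by the definition of $G \conn_k H$ equals $e\!\paren{\set{g}\times V_H, \set{g'}\times V_H} = kt$ when $\set{g,g'}\in E_G$ and $0$ otherwise; hence $B_{g,g'} = \tfrac{k}{d}$ if $\set{g,g'}\in E_G$ and $0$ otherwise. In other words $B = rI + \tfrac{k}{d}A_G$.

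Because $x \mapsto r + \tfrac{k}{d}x$ is strictly increasing, the eigenvalues of $B$ are precisely $r + \tfrac{k}{d}\lambda_i(G)$ in the same order, so $\lambda_2(B) = r + \tfrac{k}{d}\lambda_2$. Lemma \ref{L:Brualdi} then yields $\lambda_2(\mathcal G) \ge \lambda_2(B) = r + \tfrac{k}{d}\lambda_2$, and therefore $\rho_2(\mathcal G) = (r+k) - \lambda_2(\mathcal G) \le (r+k) - r - \tfrac{k}{d}\lambda_2 = k - \tfrac{k\lambda_2}{d}$, which is the claimed bound.

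I do not expect a genuine obstacle here, since the argument is just a quotient-matrix computation; the one place that needs care is the off-diagonal entry of $B$, where one must combine the fact that the number of matching edges between two $G$-adjacent copies is $kt$ with the fact that each copy has $td$ vertices, so that the normalization in Lemma \ref{L:Brualdi} collapses $kt$ to exactly $k/d$ — keeping the roles of $t$, $d$, $r$, and $k$ straight is the only subtlety. Equivalently, one can bypass Lemma \ref{L:Brualdi} altogether and evaluate the Rayleigh quotient of $A_{\mathcal G}$ against the vector $v \otimes \one$, where $v$ is a $\lambda_2$-eigenvector of $A_G$: this vector is orthogonal to $\one$ (as $v \perp \one_{V_G}$), and a short computation splitting $A_{\mathcal G}$ into its within-copy and matching parts gives the same value $r + \tfrac{k}{d}\lambda_2$.
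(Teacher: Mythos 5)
Your proof is correct and follows essentially the same route as the paper's: both apply the Haemers interlacing bound of Lemma \ref{L:Brualdi} to the partition into copies $\set{g}\times V_H$, compute the quotient matrix $B = rI + \tfrac{k}{d}A_G$, and convert $\lambda_2(\mathcal G) \ge r + \tfrac{k}{d}\lambda_2$ into the stated bound via $(r+k)$-regularity. Even your closing remark about bypassing the lemma with the Rayleigh quotient of $v\otimes\one$ mirrors the paper's own observation immediately following its proof.
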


\begin{proof}
Let $A$ be the adjacency matrix of $\mathcal{G} = G \conn_k H$.  We will
proceed to show that $\lambda_2(\mathcal{G}) \geq
\frac{1}{d}\lambda_2+ r$ and then the desired result follows
immediately from the $(r+k)$-regularity of $\mathcal{G}$.  To this end
we will apply Lemma \ref{L:Brualdi} to the partion of the vertices
given by $\set{ \set{v} \times V_H}_{v \in V_G}$.    Abusing notation,
for any $v, v' \in V_G$ we will denote by $A_{vv'}$
the submatrix induced by the rows $\set{v} \times V_H$ and columns
$\set{v'} \times V_H$.  Noting that $\size{ \set{v} \times V_H} = td$
for all $v \in V_G$, we have that
\[
\one_{td}^T A_{vv'} \one_{td} = \begin{cases} rtd & v = v' \\ tk &
  \set{v,v'} \in E_G \\  0 & \set{v,v'} \not\in E_G \end{cases}
\]
and thus $B = rI + \frac{k}{d}A_G$ where $A_G$ is the adjacency matrix of
the graph $G$. The interlacing of the eigenvalues of $B$ and
$\mathcal{G}$ provides the result immediately.
  \qed \end{proof}

The strong dependence on the spectrum of $G$ is unsurprising as the $G$-connected-$H$ graphs
implicitly inherent the connectivity structure of $G$, while increasing the relative degrees in a way that
doesn't improve the spectral behavior of $G$.  In particular, we note
that another way of deriving Lemma \ref{L:GconnH} is to apply the
Raleigh-Ritz formulation of $\lambda_2(\mathcal{G})$ and use the
vector $\frac{1}{\sqrt{td}}\one \otimes w_2$ where $(\lambda_2,w_2)$
is the second largest eigenpair of $A_G$.

It is natural to consider the implications of Lemma \ref{L:Brualdi}
when partitioning on the $H$-coordinate instead of the
$G$-coordinate.  Unfortunately, because of the unstructured nature of
the $k$-regular graph relatively little can be said.  However, if the graph $G$
is Cayley graph and the matching edges are tied to the generator set
then the automorphisms of $G$ (specifically, those that follow from
vertex transitivity of Cayley graphs) imply that there is an
automorphism of $G \conn H$ such that the orbits are given by $V_G
\times \set{h}$ for $h \in V_H$.  Then the Reduction Lemma yields
that there is multi-graph whose spectrum is a subset of the
spectrum of $G \conn_k H$.  Specifically, taking the graph $H$ plus a
$k$-regular graph (allowing self-loops) coming from the structure of
the $k$-regular graph in $G \conn_k H$.  As an example, the Peterson
Torus can be reduced with the reduction lemma to the graph illustrated
with in Figure \ref{F:PetersonTorus} with the red edges corresponding
to the matching edges.  Computing the algebraic connectivity of the
reduced graph yields that the $\rho_2$ for the Peterson torus is at
most 2.  While this is small, it can be reduced further by applying
Proposition \ref{L:GconnH}.

\begin{corollary}
Let $a \geq b \geq 2$ such that at least one of $a$ or $b$ is odd.
The algebraic connectivity of $\PT(a,b)$ is at most $\frac{4 -
  3\cos\paren{\frac{4\pi}{a}} - \cos\paren{\frac{2\pi}{a}}}{5}$ and the
bisection bandwidth is at most $6b + ab + 5$.  
  \end{corollary}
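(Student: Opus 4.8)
The plan is to obtain both bounds from the two general $G$-connected-$H$ results, Propositions~\ref{L:GconnH} and~\ref{L:GconnH_BW}, applied to the identification recorded above: $\PT(a,b)$ is the $1$-fold $G\conn H$ in which $G$ is the Cayley graph of $\Z_a\times\Z_b$ with connection set $\set{\pm(0,1),\pm(1,0),\pm(1,1),\pm(-1,1),\pm(\floor{a/2},\floor{b/2})}$ and $H$ is the Petersen graph. Thus $G$ is connected and $10$-regular on $ab$ vertices (so $\order{G}=5ab$), and $H$ is connected, $3$-regular, on $10$ vertices; in the notation of the definition of $G\conn_kH$ this means $d=10$, $t=1$, $r=3$, $k=1$. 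The hypothesis that at least one of $a,b$ is odd is exactly what forces $(\floor{a/2},\floor{b/2})$ to have order greater than $2$ (so that $G$ genuinely has degree $10$), and $a\ge b\ge 2$ together with that parity condition forces $a\ge 3$, which I use so that the character index $(2,0)$ is nontrivial.

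For the algebraic connectivity, Proposition~\ref{L:GconnH} gives immediately $\rho_2(\PT(a,b))\le k-\tfrac{k}{d}\lambda_2(G)=1-\tfrac{1}{10}\lambda_2(G)$, so it suffices to show $\lambda_2(G)\ge 2+6\cos(\nicefrac{4\pi}{a})+2\cos(\nicefrac{2\pi}{a})$. Since $G$ is a Cayley graph of an abelian group, its eigenvalues are the character sums $\lambda_{p,q}=\sum_{s}e^{2\pi i(ps_1/a+qs_2/b)}$ over connection elements $s=(s_1,s_2)$, with $\lambda_{0,0}=10=\lambda_1(G)$ and $\lambda_{p,q}\le\lambda_2(G)$ for every other $(p,q)$. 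I would evaluate this at the nontrivial character $(p,q)=(2,0)$, where only the first coordinates of the generators enter, obtaining $\lambda_{2,0}=2+6\cos(\nicefrac{4\pi}{a})+2\cos\!\paren{4\pi\floor{a/2}/a}$. A short parity check finishes it: if $a$ is odd then $4\pi\floor{a/2}/a=2\pi-\nicefrac{2\pi}{a}$, so $\lambda_{2,0}$ equals the target exactly; if $a$ is even then $4\pi\floor{a/2}/a=2\pi$, so $\lambda_{2,0}=4+6\cos(\nicefrac{4\pi}{a})$, which is at least the target. Substituting $\lambda_2(G)\ge\lambda_{2,0}$ into the bound from Proposition~\ref{L:GconnH} yields precisely $\tfrac{1}{5}\paren{4-3\cos(\nicefrac{4\pi}{a})-\cos(\nicefrac{2\pi}{a})}$; note that no exact evaluation of $\lambda_2(G)$ over all characters is needed, only this one-sided estimate.

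For the bisection bandwidth, Proposition~\ref{L:GconnH_BW} gives $\BW(\PT(a,b))\le\tfrac{\size{G}\size{H}}{2\order{G}}k\,\BW(G)+\BW(H)$, and the coefficient simplifies to $\tfrac{\size{G}\size{H}k}{2\order{G}}=\tfrac{ab\cdot 10}{2\cdot 5ab}=1$, so $\BW(\PT(a,b))\le\BW(G)+\BW(H)$. I would bound $\BW(H)\le 5$ by the standard bipartition of the Petersen graph into its outer $5$-cycle and its inner $5$-cycle, which is crossed only by the five spokes. For $\BW(G)$ I would slice $\Z_a\times\Z_b$ along the first coordinate into two arcs of sizes as equal as possible (times $b$): the only generators with nonzero first coordinate are the three pairs $\pm(1,0),\pm(1,1),\pm(-1,1)$, each of which shifts the first coordinate by $\pm1$ and hence crosses the slice only at its two boundary layers (at most $2b$ edges apiece), and the pair $\pm(\floor{a/2},\floor{b/2})$, which crosses at most all $ab$ of its edges. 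This gives $\BW(G)\le ab+6b$, hence $\BW(\PT(a,b))\le ab+6b+5$.

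The step I expect to require the most care is the bisection bound when $a$ is odd (which may additionally force $b$ odd, and hence $ab$ odd): then the arc slice is not exactly balanced and must be repaired by reassigning part of one boundary layer, and I will need to check that this repair creates no crossing edges beyond those already allotted in the $ab+6b$ budget — the boundary-layer edges are already counted, and the diameter-generator term carries enough slack. Everything else is routine: the only mildly delicate point on the spectral side is recognizing that the character $(2,0)$ is exactly the one reproducing the stated closed form.
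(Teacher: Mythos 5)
Your proposal is correct and follows essentially the same route as the paper: the spectral bound via Proposition~\ref{L:GconnH} combined with the character-sum evaluation of the abelian Cayley graph at the character $(2,0)$ (the paper likewise only uses the resulting one-sided estimate $\lambda_2(G)\ge 2+6\cos(\nicefrac{4\pi}{a})+2\cos(\nicefrac{2\pi}{a})$), and the bandwidth bound via Proposition~\ref{L:GconnH_BW} with an arc-slice bisection of $\Z_a\times\Z_b$ contributing $6b+ab$ and a $5$-edge bisection of the Petersen graph. The odd-$a$ repair you flag is handled in the paper by the explicit set $[\floor{\nicefrac{a}{2}}]\times[b]\cup\set{\ceil{\nicefrac{a}{2}}}\times[\floor{\nicefrac{b}{2}}]$, which crosses $6b+2\floor{\nicefrac{ab}{2}}\le 6b+ab$ edges, confirming your budget.
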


  \begin{proof}
By Proposition \ref{L:GconnH} to bound the algebraic connectivity it
suffices to find the second largest eigenvalue of the Cayley graph $G$ on
the group $\Gamma = \Z_a \times \Z_b$ generated by
\[S = \set{ \pm (1,0), \pm (0,1),
  \pm (1,1), \pm (1,-1), \pm \paren{
    \floor{\nicefrac{a}{2}},\floor{\nicefrac{b}{2}}}}.\]   Let $\chi
\colon \Gamma \rightarrow \C^{\size{\Gamma}}$ be the character table
for $\Gamma$.  We recall that the spectrum of $G$ is explicitly given
by the multiset
\[ \set{\sum_{s \in S} \chi_g(s) \mid g \in \Gamma}, \]
see for instance \cite{Brouwer:SpectraGraphs}.  As $\Gamma$ is the product
of two cyclic groups, it is straightforward to explicitly determine the
character table and get that the spectrum is given by the multiset
\[ \set{ \sum_{(s,t) \in S} e^{\frac{2\pi i x}{a} s} e^{\frac{2 \pi i
        y}{b} t} \mid (x,y) \in \Z_a \times Z_b}.\]
In particular, this gives that $\lambda_2$ for $G$ is given by
\[ 2\max_{ (x,y) \not\equiv
      (0,0)}
\cos\!\paren{\frac{2\pi x}{a}} + \cos\!\paren{\frac{2 \pi y}{b}} +
2\cos\paren{\frac{2\pi x}{a}}\cos\!\paren{ \frac{2\pi y}{b}} + \cos\paren{\frac{2\pi\floor{\nicefrac{a}{2}}x}{a} + \frac{2\pi\floor{\nicefrac{b}{2}}y}{b}}.\] 
It is relatively straightforward to see that the maximum is achieved
when $(x,y) = (2,0)$, yielding that $\lambda_2$ is at least
\[ 2+6\cos\paren{\frac{4\pi}{a}} + 2
  \cos\paren{\frac{4\pi\floor{\frac{a}{2}}}{a}} \geq 2 +
  6\cos\paren{\frac{4\pi}{a}} + 2\cos\paren{\frac{2\pi}{a}}.\]

The upper bound on the bisection bandwidth will follow from Lemma
\ref{L:GconnH_BW}.  Specifically, as the Peterson torus is a $G \conn
H$ with $G$ being the Cayley graph on $\Z_a \times \Z_b$ with
generators $S$ and $H$ the Peterson graph, the bisection bandwidth is
upper bounded by $\BW(G) + \BW(H)$.  As the girth of the Peterson
graph is 5, any collection of 5 vertices induces at most 5 edges.
Thus there are at least 5 edges crossing the cut, and this lower bound
is achieved exactly by taking any of the 5-cycles in the Peterson
graph.

For the bisection bandwidth of the graph on $\Z_a \times
\Z_b$, we will denote the vertices by $[a] \times [b]$.  If $a$ is even, then
the set $T = [\frac{a}{2}] \times [b]$ induces a bipartition with $6b +
ab$ edges crossing the cut, that is, the edges corresponding to
elements $\set{\frac{a}{2}} \times [b]$ and the generators
$\set{(1,-1),(1,0),(11)}$, the edges corresponding to the elements
$\set{1} \times [b]$ and the generators $\set{(-1,-1),(-1,0),(-1,1)}$, and the edges corresponding to an arbitrary
vertex of $T$ and the generators $\set{
 \paren{\floor{\frac{a}{2}},\floor{\frac{b}{2}}},
 \paren{-\floor{\frac{a}{2}},-\floor{\frac{b}{2}}}}$.  In the case
that $a$ is odd, we consider the set $[\floor{\frac{a}{2}}] \times [b]
\cup \set{\ceil{\frac{a}{2}}} \times [\floor{\frac{b}{2}}]$ and in a
similar manner get that there are $6b + 2\floor{\frac{ab}{2}}$ edges
crossing the cut, completing the proof.
\qed \end{proof}

\subsubsection{DragonFly}
As we will see, the DragonFly topology will end up being a specific
class of $G$-connected-$H$ topologies and can be understood in terms
of the results of Section \ref{SS:GconnH}, however, due to their
recent importance in ``readily'' available supercomputing topologies~\cite{Cray:Slingshot,Cray:XC}
we address them separately in this section.  The motivating idea
behind the DragonFly topology is to maximize the performance of a
supercomputing topology while minimizing the overall cost of the
system.  To that end, Kim, Dally, Scott, and Abts designed the
DragonFly topology around a two-level hierarchy~\cite{Kim2008}.  The
top level network employs an optical network to
communicate over long distances (i.e.\ across the physical layout of
the supercomputer), while the second layer employs an electrical
network to communicate short distances (i.e.\ intrarack communication)
and reduce the overall cost.   While the specifications of Kim, et
al.\ allow for arbitrary topologies for both the optical and
electrical portions of the topology, the typically implementation uses
a fully-connected optical network combined with some other network for
the electrical network, oftentimes either fully-connected or
a Butterfly variant.  For example, the Cray Slingshot interconnect
(which is being used for NSERC's Perlmutter system) uses 64
port switches to build a DragonFly topology based on all-to-all
connections for both the optical and electrical networks.
\begin{definition}[DragonFly, $\DF(H)$]
  If $H$ is an $n$-vertex, $r$-regular graph, then the DragonFly
  topology with parameter $H$ consists of $n+1$ copies of $H$ together
  with a matching such that each edge goes between distinct copies of
  $H$.  Alternatively, $\DF(H)$ may be thought of as a $1$-fold $K_n
  \conn H$.  
\end{definition}

We note that since the DragonFly topology can be represented as $G
\conn H$ topology we immediately have bounds on the algebraic
connectivity and bisection bandwidth. 

\begin{corollary}
Let $H$ be a connected graph and let $D$ be the DragonFly topology
generated by $H$.  The algebraic connectivity of $D$ is at most $1 +
\frac{\size{H}}{2\order{H}}$ and the bisection bandwidth is at most
\[ \paren{\frac{\size{H}+1}{2}}^2 + \BW(H).\]
  \end{corollary}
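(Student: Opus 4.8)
The plan is to recognize the DragonFly topology $D = \DF(H)$ as a $1$-fold $G$-connected-$H$ graph with $G$ a complete graph, and then read off both estimates directly from Propositions~\ref{L:GconnH_BW} and~\ref{L:GconnH}. Writing $m$ for the number of copies of $H$ (so $m = \size{H}+1$ by the definition of $\DF(H)$), the definition says precisely that $D = K_m \conn_1 H$: the $m$ copies of $H$ are indexed by $V(K_m)$, the graph $K_m$ is $\size{H}$-regular, and the global matching contributes exactly one edge between each pair of copies, so in the notation of the $G \conn_k H$ construction we have $k = t = 1$ (here $d = \size{H}$ and $H$ has $td = \size{H}$ vertices). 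Since $K_m$ and $H$ are both connected and regular, the hypotheses of Propositions~\ref{L:GconnH_BW} and~\ref{L:GconnH} are satisfied.

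For the bisection bandwidth I would substitute these parameters into Proposition~\ref{L:GconnH_BW}. The edge-multiplicity factor collapses to $1$:
\[
\frac{\size{G}\size{H}}{2\order{G}}\,k \;=\; \frac{m\,\size{H}}{2\binom{m}{2}} \;=\; \frac{m\,\size{H}}{m(m-1)} \;=\; 1,
\]
because $m - 1 = \size{H}$. Hence Proposition~\ref{L:GconnH_BW} gives $\BW(D) \le \BW(K_m) + \BW(H)$, and it remains to bound $\BW(K_m)$: any balanced bipartition of $K_m$ cuts $\floor{m/2}\ceil{m/2}$ edges, and $\floor{m/2}\ceil{m/2} \le (m/2)^2$ by AM--GM, so $\BW(K_m) \le \paren{\tfrac{\size{H}+1}{2}}^2$, which yields the claimed bound.

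For the algebraic connectivity I would apply Proposition~\ref{L:GconnH} with $G = K_m$, $d = \size{H}$, $k = 1$. The only extra input is the spectrum of $K_m$: its adjacency matrix is $J - I$, so its eigenvalues are $m-1$ with multiplicity $1$ and $-1$ with multiplicity $m-1$, giving $\lambda_2(K_m) = -1$. Proposition~\ref{L:GconnH} then yields
\[
\rho_2(D) \;\le\; k - \frac{k\,\lambda_2(K_m)}{d} \;=\; 1 + \frac{1}{\size{H}}.
\]
Finally, since $H$ is a simple graph on $\size{H}$ vertices we have $\order{H} \le \binom{\size{H}}{2} < \tfrac{1}{2}\size{H}^2$, whence $\tfrac{1}{\size{H}} \le \tfrac{\size{H}}{2\order{H}}$, and the bound above implies the stated (slightly weaker) form $\rho_2(D) \le 1 + \tfrac{\size{H}}{2\order{H}}$.

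The argument is essentially bookkeeping once Propositions~\ref{L:GconnH_BW} and~\ref{L:GconnH} are available; the one point requiring care is the parameter identification --- in particular that the relevant complete graph is $K_{\size{H}+1}$, not $K_{\size{H}}$, which is exactly what makes the edge-multiplicity factor equal $1$ --- together with the elementary spectral fact $\lambda_2(K_m) = -1$ and the even/odd parity check when evaluating $\BW(K_m)$. I do not anticipate any genuine obstacle beyond getting these indices right.
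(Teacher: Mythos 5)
Your proposal is correct and follows essentially the same route as the paper: identify $D = K_{\size{H}+1} \conn H$, apply Proposition~\ref{L:GconnH} with $\lambda_2(K_{\size{H}+1}) = -1$ for the algebraic connectivity, and use the lifted equipartition of the copies (which is exactly the content of Proposition~\ref{L:GconnH_BW}) for the bisection bandwidth. You are in fact slightly more careful than the paper on one point: Proposition~\ref{L:GconnH} directly yields the stronger bound $1 + \frac{1}{\size{H}}$, and you explicitly verify that this implies the stated $1 + \frac{\size{H}}{2\order{H}}$, a step the paper glosses over with ``follows immediately.''
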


  \begin{proof}
Noting that $D = K_{\size{H}+1} \conn H$ and the second largest
adjacency eigenvalue of the complete graph is $-1$, the bound on the
algebraic connectivity follows immediately from Proposition
\ref{L:GconnH}.  To provide the upper bound on the bisection
bandwidth, consider a equipartition of the $\size{H}+1$ copies of
$H$.  If $\size{H}$ is odd, then the only edges crossing the partition
are ``matching'' or ``optical'' edges and there are
$\paren{\frac{\size{H}+1}{2}}^2$ of them.  However, if $\size{H}$ is
even, then one of the copies of $H$ must also be partitioned yielding
\[ \paren{\frac{\size{H}}{2}}^2 + \frac{\size{H}}{2} + \BW(H) \leq
  \paren{\frac{\size{H}+1}{2}}^2 + \BW(H)\] edges crossing the
partition.  
    \qed \end{proof}

\subsubsection{SlimFly} In \cite{Besta:SlimFly}, Besta and Hoefler suggested that it
would be advantageous to consider topologies that have close to the
maximum number of nodes for a given radix and diameter.  The upper
bound on the number of nodes of a $k$-regular graph of diameter $d$ is
given by $1 + k\sum_{i=0}^{d-1}(k-1)^i$ and is referred to as the
Moore bound.  The class of graphs exactly achieving this bound, known
as Moore graphs, has been extensively studied and shown to have
significant limitation on both the radix and size, see \cite{Miller:MooreSurvey}.

In this context, Besta and Hoefler propose the SlimFly topology based on the
construction of McKay, Miller, and \v{S}ir\'{a}n~\cite{McKay:MMS} which is close to achieving the
Moore bound.  These SlimFly topologies have a single parameter $q$,
which is a prime power such that $q \equiv 1 \pmod{4}$ and results in
a topology on $2q^2$ nodes with degree $\frac{3q-1}{2}$.

\begin{definition}[SlimFly, $\SF(q)$] Let $\zeta$ be a primitive
  $q^{\textrm{th}}$-root of unity over the Galois field $\mathbb{F}_q$.  The vertices are then elements of
$\set{0,1} \times \mathbb{F}_q \times \mathbb{F}_q$.  The edge set is
broken into three sets:
\begin{enumerate}
\item $\set{(0,x,y),(0,x,y')}$ where $y - y' = \zeta^i$ and $i \equiv
  0 \pmod{2}$,
\item $\set{(1,m,c),(1,m,c')}$ where $c-c' = \zeta^j$ and $j \equiv 1
  \pmod{2}$, and
\item $\set{(0,x,y),(1,m,c)}$ where $y = mx + c$.
\end{enumerate}
\end{definition}

\begin{proposition}\label{L:slimfly_alpha}
Let $q$ be a prime-power such that $q \equiv 1 \pmod{4}$.  The
algebraic connectivity of the SlimFly topology with parameter $q$ is 
$q$.
\end{proposition}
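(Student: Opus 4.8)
The plan is to compute the full adjacency spectrum of $\SF(q)$ by exploiting a large abelian group of automorphisms, namely the ``vertical'' translations $\tau_t\colon (0,x,y)\mapsto(0,x,y+t)$, $(1,m,c)\mapsto(1,m,c+t)$ for $t\in\mathbb{F}_q$. One checks directly from the three edge types that each $\tau_t$ is an automorphism: it preserves the differences $y-y'$ and $c-c'$ within a column and preserves the incidence $y=mx+c$. Writing $\zeta$ for a generator of $\mathbb{F}_q^\times$, the even powers of $\zeta$ are exactly the nonzero squares and the odd powers the nonsquares; since $q\equiv 1\pmod 4$ both sets are closed under negation, so the graph induced on each column $\set{(0,x,y):y\in\mathbb{F}_q}$ (resp.\ $\set{(1,m,c):c\in\mathbb{F}_q}$) is $\tfrac{q-1}{2}$-regular, with all-ones eigenvector of eigenvalue $\tfrac{q-1}{2}$.

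Next I would decompose $\R^{V(\SF(q))}$ into isotypic components for the $\mathbb{F}_q$-action: for each additive character $\psi$ of $\mathbb{F}_q$ set $W_\psi=\set{v:v(0,x,y)=\psi(y)a(x),\ v(1,m,c)=\psi(c)b(m)}$, a $2q$-dimensional $A$-invariant subspace, and $\R^{V}=\bigoplus_\psi W_\psi$. A short computation using orthogonality of additive characters shows that in the coordinates $(a,b)\in\C^q\oplus\C^q$ the restriction is the Hermitian block matrix $\left(\begin{smallmatrix}\eta_\psi I & F_\psi\\ F_\psi^* & \widetilde\eta_\psi I\end{smallmatrix}\right)$, where $\eta_\psi=\sum_{d\text{ square}}\psi(d)$, $\widetilde\eta_\psi=\sum_{d\text{ nonsquare}}\psi(d)$, and $(F_\psi)_{x,m}=\psi(-mx)$; the key identity is $F_\psi F_\psi^*=qI$. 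For the trivial character $\eta_\psi=\widetilde\eta_\psi=\tfrac{q-1}{2}$ and $F_\psi=J$ (all ones), so this block contributes $\tfrac{3q-1}{2}$ and $-\tfrac{q+1}{2}$ (each simple) and $\tfrac{q-1}{2}$ with multiplicity $2q-2$ --- this is exactly the quotient produced by the Reduction Lemma applied to the translation group, whose $2q$ orbits are the columns. For a nontrivial $\psi$, the identity $F_\psi F_\psi^*=qI$ forces every eigenvalue $\lambda$ of the block to satisfy $(\lambda-\eta_\psi)(\lambda-\widetilde\eta_\psi)=q$; using $\eta_\psi+\widetilde\eta_\psi=-1$ and the Gauss-sum evaluation $(\eta_\psi-\widetilde\eta_\psi)^2=g(\chi,\psi)^2=q$ (here $q\equiv 1\pmod 4$ makes the quadratic Gauss sum real), this quadratic simplifies to $\lambda^2+\lambda+\tfrac{1-5q}{4}=0$, with roots $\tfrac{-1\pm\sqrt{5q}}{2}$, each of multiplicity $q$ within $W_\psi$.

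Assembling the pieces, $\SPEC(\SF(q))$ consists of $\tfrac{3q-1}{2}$ (simple), $\tfrac{q-1}{2}$ (multiplicity $2q-2$), $\tfrac{-1\pm\sqrt{5q}}{2}$ (multiplicity $q(q-1)$ each), and $-\tfrac{q+1}{2}$ (simple), which one can verify sums to $2q^2$. Since $\tfrac{-1+\sqrt{5q}}{2}\le\tfrac{q-1}{2}$ for every admissible $q\ge 5$ and the top eigenvalue is simple (hence $\SF(q)$ is connected and $\tfrac{3q-1}{2}$-regular), the second largest adjacency eigenvalue is $\lambda_2=\tfrac{q-1}{2}$, so $\rho_2=\tfrac{3q-1}{2}-\tfrac{q-1}{2}=q$, as claimed. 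I expect the main obstacle to be the nontrivial-character block: correctly identifying $\eta_\psi,\widetilde\eta_\psi$ as partial Gauss sums, checking $F_\psi F_\psi^*=qI$, and invoking the evaluation of the quadratic Gauss sum --- and it is precisely here, together with the claim that the induced column-graphs are Paley graphs, that the hypothesis $q\equiv 1\pmod 4$ is used. (If only the upper bound $\rho_2\le q$ were needed, one could avoid Gauss sums entirely: the vectors that are a nonzero constant summing to zero over a single column and vanish elsewhere are eigenvectors of eigenvalue $\tfrac{q-1}{2}$, which already gives $\lambda_2\ge\tfrac{q-1}{2}$.)
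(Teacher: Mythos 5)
Your argument is correct, and while it rests on the same symmetry as the paper's proof---the translations $(i,x,y)\mapsto(i,x,y+t)$, whose orbits are the $2q$ ``columns''---it handles the decisive step quite differently. Both proofs get the easy half, $\rho_2\le q$, from the trivial isotypic component: your trivial-character block is literally the quotient graph $K_{q,q}$ with $\tfrac{q-1}{2}$ loops produced by the paper's Reduction Lemma, contributing the eigenvalue $\tfrac{q-1}{2}$ with multiplicity $2q-2$. The difference lies in showing that no eigenvalue from a nontrivial character block exceeds $\tfrac{q-1}{2}$. The paper bounds the Rayleigh quotient on the complementary space by the top eigenvalue of a $2\times 2$ auxiliary optimization and then estimates the second eigenvalue of the Paley-type column graph by an edge-transitivity/diameter argument ($\mu\le\tfrac{3(q-1)}{8}$), obtaining only an upper bound on those eigenvalues. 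You instead diagonalize each block exactly: the orthogonality relation $F_\psi F_\psi^{*}=qI$ forces $(\lambda-\eta_\psi)(\lambda-\widetilde{\eta}_\psi)=q$, and the Gauss-sum evaluation $\eta_\psi+\widetilde{\eta}_\psi=-1$, $(\eta_\psi-\widetilde{\eta}_\psi)^2=q$ (real precisely because $q\equiv 1\pmod{4}$) pins the remaining eigenvalues to $\tfrac{-1\pm\sqrt{5q}}{2}$, each of multiplicity $q(q-1)$. This costs you the Gauss-sum input but buys the complete spectrum---which correctly specializes to the Hoffman--Singleton spectrum $7^{1}\,2^{28}\,(-3)^{21}$ at $q=5$---and makes the comparison $\tfrac{-1+\sqrt{5q}}{2}\le\tfrac{q-1}{2}$ (with equality exactly at $q=5$) transparent, where the paper must settle for a strict but looser estimate. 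The only blemish is your closing parenthetical: a vector that is a nonzero constant on a single column cannot sum to zero there, and a vector supported on a single column is not an eigenvector, since the cross edges $y=mx+c$ propagate it to the opposite part. The correct elementary witness for $\lambda_2\ge\tfrac{q-1}{2}$ is a vector constant on columns, equal to $+1$ on one column and $-1$ on another column of the same part and zero elsewhere---i.e.\ the lift of a quotient eigenvector---so nothing in your main argument is affected.
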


\begin{proof}
In order to bound the algebraic connectivity, we will use the
Reduction Lemma.  To that end, let $\zeta$ be a primitive root of the
Galois field $\mathbb{F}_q$ and define $\gamma_{\zeta}$ by
$(0,x,y) \mapsto (0,x,y + \zeta)$ and $(1,m,c) \mapsto
(1,m,c+\zeta)$.  It is easy to see that this is an automorphism of the
SlimFly topology and that the orbits of the group generated by this automorphism are given by
$\set{0}\times\set{x}\times \mathbb{F}_q$ and $\set{1}\times \set{m}
\times\mathbb{F}_q$ for $x,m \in \mathbb{F}_q$.  As an arbitrary
element $(0,x,y) \in \set{0}\times \mathbb{F}_q \times \mathbb{F}_q$,
has precisely one neighbor in the orbit $\set{1} \times \set{m} \times
\mathbb{F}_q$ for any $m \in \mathbb{F}_q$, namely $(1,m,y-mx)$, we
  have that the reduction graph $H$ is a complete bipartite graph
  $K_{q,q}$ with $\frac{q-1}{2}$ self-loops at every vertex.  As the
  algebraic connectivity of this graph is $q$, by the Reduction Lemma
  we have that the algebraic connectivity of the is at most $q$.  

Now we will show that the algebraic connectivity is exactly $q$.  To
this end, recall that the eigenspace associated to any eigenvalue that is not present in the
spectrum of the reduced graph has the property that the entries sum to
zeros over all of the orbits.  That is, if $v$ is such an eigenvector
and $\one_{\sigma}$ is the indicator function for the orbit $\sigma$,
then $v^T\one_{\sigma} = 0$.  Furthermore, since the orbits of the
automorphism are Cayley graphs on $\mathbb{F}_q$, the eigenvectors
can be expressed
in terms of the characters of $\paren{\mathbb{F}_q,+}$.
Additionally, the eigenvalues associated to $\chi_f$ are given by the
character sums over the generators.  Specifically, the eigenvalue
associated to the non-trivial character $\chi$ on the Cayley graph generated by $\zeta^{2j}$ is $\mu =
\sum_{j=1}^{\nicefrac{q-1}{2}} \chi(\zeta^{2j})$, while the
eigenvalue associated to $\chi$ on the Cayley graph generated by
$\zeta^{2j-1}$ is $\sum_{j=1}^{\nicefrac{q-1}{2}} \chi(\zeta^{2j-1})
= -1 - \mu$.  Thus let $\set{\chi_f}_{f = 1}^{q-1}$ be the set of
non-trivial characters of $\paren{\mathbb{F}_q,+}$.  The eigenvector
$v$ can then be expressed as 
\[ \sum_{i \in \set{0,1}} \sum_{x \in \mathbb{F}_q} \sum_{f \in
    [q-1]} \frac{a_{i,x,f}}{\sqrt{q}} e_i \otimes e_x \otimes
  \chi_f \]
where $\sum_{i,x,f} \alpha_{i,x,f}^2 = 1$

Now letting $A$ be the adjacency matrix of the SlimFly topology, we
consider the quadratic form $\overline{v}^TAv$ in three parts.  The
portion corresponding to edges induced by $\set{0}\times \mathbb{F}_q
\times \mathbb{F}_q$, the portion corresponding
to edges induced by $\set{1}\times \mathbb{F}_q
\times \mathbb{F}_q$, and the portion corresponding to edges between
these two sets.  It is easy to see that the contribution of the edges
internal to these two sets are given by $\sum_{f = 1}^{q-1} \sum_{x \in
  \mathbb{F}_q} \overline{\alpha_{0,x,f}}\alpha_{0,f,x}
\mu_f$ and $\sum_{f = 1}^{q-1} \sum_{m \in
  \mathbb{F}_q} \overline{\alpha_{0,m,f}}\alpha_{1,m,f}
(-1-\mu_f)$, respectively.  Recalling the edges between the two
sets are governed by the relationship $y = mx +c$ for $(x,y), (m,c)
\in \mathbb{F}_q^2$, we have that the contribution of those edges to
the quadratic form is
\[ \sum_{f,g = 1}^{q-1} \sum_{(x,y) \in \mathbb{F}_q^2} \sum_{(m,c)
    \in \mathbb{F}_q^2}
  \frac{\overline{\alpha_{0,x,f}}\alpha_{1,m,g}}{q} \overline{\chi_f(y)}\chi_g(c)
  \one_{y = mx+c} + \frac{\alpha_{0,x,f}\overline{\alpha_{1,m,g}}}{q} \chi_f(y)\overline{\chi_g(c)}
  \one_{y = mx+c} .\]  
Now we note that the non-zero entries in the sum occur when $c =
y-mx$.  Furthermore, $\chi_g$ is a homomorphism into $(\C,\times)$ so
$\chi_g(y-mx) = \chi_g(y)\chi_g(-mx)$.  By additionally recalling that
$\set{\chi_f}_{f=1}^{q-1}$ is an orthogonal basis, this sum simplifies
to
\[ \sum_{f=1}^{q-1} \sum_{x \in \mathbb{F}_q} \sum_{y \in
    \mathbb{F}_q} \sum_{m \in \mathbb{F}_q}
  \overline{\alpha_{0,x,f} \vphantom{ ()}}\alpha_{1,m,f}\chi_f(-mx) + 
\overline{\alpha_{1,m,f} \vphantom{
    ()}}\alpha_{0,x,f}\overline{\chi_f(-mx)}.\]
Thus, letting $M$ be the diagonal matrix formed from
$\set{\mu_f}_{f=1}^{q-1}$, we have that the norm of the quadratic form
is bounded above by the largest eigenvalue of 
\[\mathcal{M} =  \left[ \begin{matrix} M & I \\ I & -M -I\end{matrix}\right].\]  
Motivated by this formulation we consider the auxiliary problem
\[ \max_{x^2+y^2 = 1} \mu x^2 + 2xy - (\mu+1)y^2.\]   
Noting that we may assume that $x,y \geq 0$, this can be
reparameterized as 
\[ \max_{\delta \in [-1,1]} \mu\paren{\frac{1}{2} +
    \frac{1}{2}\delta} + 2 \sqrt{\frac{1}{2} +
    \frac{1}{2}\delta}\sqrt{\frac{1}{2} - \frac{1}{2}\delta}
  - \paren{\mu+1}\paren{\frac{1}{2} - \frac{1}{2}\delta}.\]
The derivative of the objective function is $\frac{2\mu+1}{2} -
\frac{\delta}{\sqrt{1-\delta^2}}$ with roots $\pm
\frac{2\mu+1}{\sqrt{\paren{2\mu+1}^2 +4}}$.  Thus the largest eigenvalue
of $\mathcal{M}$ is 
\[\max_{f \in [q-1]}\max\set{ \mu_f, -\mu_f-1, -\frac{1}{2} +
    \frac{1}{2} \sqrt{\paren{2\mu_f+1}^2 + 4}, -\frac{1}{2} - \frac{1}{2}
  \frac{\paren{2\mu_f+1}^2- 4}{\sqrt{\paren{2\mu_f+1}^2+4}}}.\]
Using the fact that the Cayley graph generated by the odd powers of
$\zeta$ and the Cayley graph generated by the even powers of $\zeta$
are isomorphic (via $x \mapsto \zeta x$), this reduces to
$-\frac{1}{2} + \frac{1}{2}\sqrt{\paren{2\mu+1}^2 + 4}$ where $\mu$ is
the second largest eigenvalue of the Cayley graph generated by the
even powers of $\zeta$.  Using the fact that this Cayley graph is edge
transitive and has diameter 2,  we get that $\mu \leq \frac{q-1}{2} -
\frac{1}{4}\frac{q-1}{2}$ (see \cite[Section 7.3]{Chung:spectral}).
Combining these results we have that the largest eigenvalue not
represented in the reduced graph is at most 
\[ -\frac{1}{2} + \frac{1}{2} \sqrt{\paren{\frac{3}{4}q +
      \frac{1}{4}}^2 + 4} < \frac{q-1}{2}\] for $q \geq 5$.
\qed \end{proof}

\begin{proposition}
Let $q$ be a prime-power such that $q \equiv 1 \pmod{4}$.  The
bisection bandwidth of the SlimFly topology with parameter $q$ is at
most 
$\frac{q(q^2+1)}{2}$ and at least $\frac{q^3}{2}$.
\end{proposition}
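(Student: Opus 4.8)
The plan is to prove the two bounds independently; both follow from material already established. For the lower bound I would simply combine Fiedler's theorem (Theorem \ref{T:lower_bw}) with the exact value of the algebraic connectivity computed in Proposition \ref{L:slimfly_alpha}. Since $\SF(q)$ has $n = 2q^2$ vertices and algebraic connectivity $\rho_2 = q$, Theorem \ref{T:lower_bw} yields $\BW(\SF(q)) \geq \tfrac{\rho_2 n}{4} = \tfrac{q \cdot 2q^2}{4} = \tfrac{q^3}{2}$.

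For the upper bound I would exhibit an explicit balanced bipartition. The crucial structural observation is that every intra-layer edge of $\SF(q)$ joins two vertices agreeing in their first \emph{two} coordinates: an edge of type (1) joins $(0,x,y)$ to $(0,x,y')$, and an edge of type (2) joins $(1,m,c)$ to $(1,m,c')$. Hence if the bipartition is defined purely in terms of the second coordinate --- the abscissa $x$ on layer $0$ and the slope $m$ on layer $1$ --- then \emph{no} intra-layer edge is cut, and the cut consists entirely of type-(3) edges $\set{(0,x,y),(1,m,c)}$ with $y = mx+c$. Concretely, fix $X_0, M_0 \subseteq \mathbb{F}_q$ with $\size{X_0} + \size{M_0} = q$ and set
\[ A = \paren{\set{0} \times X_0 \times \mathbb{F}_q} \cup \paren{\set{1} \times M_0 \times \mathbb{F}_q}, \qquad B = \paren{\set{0,1}\times\mathbb{F}_q\times\mathbb{F}_q} \setminus A. \]
Then $\size{A} = q\paren{\size{X_0}+\size{M_0}} = q^2$, so $(A,B)$ is a genuine bisection.

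To bound the cut, observe that for each fixed pair $(x,m)\in\mathbb{F}_q^2$ there are exactly $q$ type-(3) edges between $\set{0}\times\set{x}\times\mathbb{F}_q$ and $\set{1}\times\set{m}\times\mathbb{F}_q$ (choose $y$ freely; then $c = y-mx$ is forced), and all $q$ of them are cut precisely when exactly one of $x\in X_0$, $m\in M_0$ holds. Therefore the size of the cut equals
\[ q\paren{\size{X_0}\paren{q-\size{M_0}} + \paren{q-\size{X_0}}\size{M_0}} = q\paren{\size{X_0}^2 + \size{M_0}^2}, \]
using $\size{M_0} = q-\size{X_0}$. Since $q\equiv 1\pmod{4}$ is odd, taking $\size{X_0} = \tfrac{q-1}{2}$ and $\size{M_0} = \tfrac{q+1}{2}$ minimizes $\size{X_0}^2 + \size{M_0}^2$, giving a cut of size $q\cdot\tfrac{(q-1)^2 + (q+1)^2}{4} = \tfrac{q(q^2+1)}{2}$, which bounds $\BW(\SF(q))$ from above.

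There is no serious obstacle in this argument: the only points requiring care are verifying that type-(1) and type-(2) edges really do preserve the second coordinate (immediate from the definition of $\SF(q)$), checking that $(A,B)$ is exactly balanced (forced by $\size{X_0}+\size{M_0}=q$), and the one-line integer optimization of $a^2 + (q-a)^2$. The genuinely hard problem, were one to pursue it, would be to close the gap between $\tfrac{q^3}{2}$ and $\tfrac{q(q^2+1)}{2}$; a matching lower bound would require a discrepancy-type estimate using the full spectrum of $\SF(q)$ rather than just $\rho_2$, which is beyond what is claimed here.
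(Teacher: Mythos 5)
Your proposal is correct and follows essentially the same route as the paper: the lower bound is Fiedler's inequality applied to $\rho_2 = q$ on $n = 2q^2$ vertices, and the upper bound comes from the same cut (the paper takes $X_0 = X$ with $\size{X} = \tfrac{q-1}{2}$ and $M_0 = \overline{X}$, which is exactly your optimal choice), counting only the $q$ inter-layer edges per pair $(x,m)$ since intra-layer edges preserve the second coordinate. Your version is marginally more general in parametrizing the bipartition and verifying optimality of $\size{X_0} = \tfrac{q-1}{2}$, but the construction and counting are identical.
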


\begin{proof}
Let $X \subset \mathbb{F}_q$ such that $\left| X \right| =
\frac{q-1}{2}$ and let $\overline{X}$ be the complement of $X$.  We
consider the bipartition $\set{0} \times X \times \mathbb{F}_q
  \cup \set{1} \times \overline{X} \times \mathbb{F}_q$.  We note that
  there are no edges between $\set{0} \times X \times \mathbb{F}_q$
  and $\set{0} \times \overline{X} \times \mathbb{F}_q$ and similarly there
  are no edges between $\set{1} \times X \times \mathbb{F}_q$
  and $\set{1} \times \overline{X} \times \mathbb{F}_q$.  Now, as
  $\set{0}\times\set{x}\times \mathbb{F}_q$ has exactly one edge to
  $\set{0} \times \set{m} \times \mathbb{F}_q$ for every $x,m \in
  \mathbb{F}_q$.  Thus the bisection bandwidth of the SlimFly
  topology is at most $q \paren{\frac{q-1}{2}}^2 +
  q \paren{\frac{q+1}{2}}^2 = \frac{q(q^2+1)}{2}$.  

The lower bound follows from Lemma \ref{L:slimfly_alpha} and the lower
bound on the bandwidth based on the algebraic connectivity.
\qed \end{proof}

It is worth mentioning that the gap between the bisection bandwidth achieved by a $\frac{3q-1}{2}$ regular graph on $2q^2$ vertices and the bisection bandwidth of the SlimFly topology could be attributed to fact that the SlimFly topology is not a Moore graph.  In fact, it is straightforward to construct a bisection of a Moore graph whose bisection bandwidth asymptotically matches the known lower bounds on the bisection bandwidth of a similar Ramanujan graph.  

\begin{proposition}
Let $G$ be a Moore graph with regularity $q$ and girth $2d+1$.  The bisection bandwidth of $G$ is at most $\frac{q}{2} + \frac{q^2}{4}\paren{q-1}^{d-1}$ if $q$ is even and $q + \frac{q^2-1}{4}\paren{q-1}^{d-1}$ if $q$ is odd.
\end{proposition}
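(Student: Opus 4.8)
The plan is to exploit the rigid, almost-tree structure that a $q$-regular Moore graph of odd girth $2d+1$ is forced to have, and then simply exhibit one balanced bipartition whose cut matches the stated bound (bisection bandwidth being a minimum over balanced bipartitions). Write $T = \sum_{i=0}^{d-1}(q-1)^i$, so $n := \size{G} = 1 + qT$ is the Moore bound. Fix any vertex $r$ and run breadth-first search from it. Since the girth is $2d+1$, no two vertices at a common distance $j \le d-1$ from $r$ are adjacent (such an edge would close a cycle of length at most $2(d-1)+1 < 2d+1$), so the $q$ neighbours $v_1,\dots,v_q$ of $r$ root vertex-disjoint subtrees $S_1,\dots,S_q$; each $S_i$ is a tree rooted at $v_i$ with branching factor $q-1$, hence has $T$ vertices, exactly $(q-1)^{d-1}$ of them at distance $d$ from $r$ — call that leaf-set $P_i$. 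Every non-tree edge of $G$ therefore lies inside distance class $d$. A second use of the girth bound pins these edges down: no two vertices of the same $P_i$ are adjacent (a chord of the tree $S_i$ closes a cycle of length at most $2d-1$), and a vertex of $P_i$ has \emph{exactly one} neighbour in $P_j$ for every $j \ne i$ — two would close a cycle of length at most $2d$, while a leaf has degree $q$, one edge to its parent and the remaining $q-1$ forced out to the $q-1$ other subtrees. Hence $G$ induces a perfect matching of $(q-1)^{d-1}$ edges between $P_i$ and $P_j$ for every $i \ne j$. I would isolate this as a short structural lemma; it carries essentially all the content.

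Given the structure, the even case is a one-line construction: put $r$ with $S_1,\dots,S_{q/2}$ on one side and $S_{q/2+1},\dots,S_q$ on the other. The sides have sizes $1+\tfrac q2 T$ and $\tfrac q2 T$, so the bipartition is balanced. The crossing edges are precisely the $\tfrac q2$ tree edges from $r$ to the far subtrees, together with, for each of the $\tfrac q2 \cdot \tfrac q2$ pairs $(S_i,S_j)$ with $i \le \tfrac q2 < j$, the $(q-1)^{d-1}$ matching edges between $P_i$ and $P_j$; tree edges internal to a subtree and matching edges between subtrees on the same side do not cross. The total is $\tfrac q2 + \tfrac{q^2}{4}(q-1)^{d-1}$.

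For odd $q$, the case $d=1$ is immediate ($G=K_{q+1}$, and the even split crosses $\tfrac{(q+1)^2}{4} \le q + \tfrac{q^2-1}{4}$ edges), so assume $d \ge 2$. Put $S_1,\dots,S_{(q-1)/2}$ on side $A$ and $S_{(q+1)/2},\dots,S_{q-1}$ on side $B$, and split $S_q$ by sending its root $v_q$ and $r$ to $A$ along with $\tfrac{q-1}{2}$ of the $q-1$ sub-subtrees hanging from $v_q$, the other $\tfrac{q-1}{2}$ going to $B$; a size count gives $\size{A} = \size{B}+2$, so finally move one leaf $\ell \in A$ of $S_q$ across to $B$ to balance. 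Counting the cut as before: the tree edges at $r$ and at $v_q$ contribute $2\cdot\tfrac{q-1}{2}$; the matchings between the $\tfrac{q-1}{2}$ $A$-subtrees and the $\tfrac{q-1}{2}$ $B$-subtrees among $S_1,\dots,S_{q-1}$ contribute $\tfrac{(q-1)^2}{4}(q-1)^{d-1}$; and since exactly half of $P_q$ sits on each side, the matchings from $S_q$ to the other $q-1$ subtrees contribute $(q-1)\cdot\tfrac{(q-1)^{d-1}}{2}$. Relocating $\ell$ adds its single parent edge to the cut but is edge-neutral on the $q-1$ matching edges at $\ell$ (it gains $\tfrac{q-1}{2}$ and loses $\tfrac{q-1}{2}$). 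Summing, $(q-1) + \big(\tfrac{(q-1)^2}{4} + \tfrac{q-1}{2}\big)(q-1)^{d-1} + 1 = q + \tfrac{q^2-1}{4}(q-1)^{d-1}$.

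The main obstacle is entirely in the odd case, and it is a parity obstruction rather than a conceptual one: the natural ``$\tfrac{q-1}{2}$ versus $\tfrac{q+1}{2}$ subtrees'' split is badly unbalanced, so one is forced to cut a single subtree roughly in half, and the reason the final count still lands exactly on the stated value is the ``edge-neutrality'' of moving one distance-$d$ vertex across the partition (it perturbs the cut by precisely one parent edge). Beyond that, the structural lemma must be stated carefully, and the degenerate small cases — $d=1$, and $d=2$ where each sub-subtree of $S_q$ is a single vertex — should be checked to behave as the general argument predicts.
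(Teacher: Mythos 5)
Your proof is correct and follows essentially the same route as the paper: a BFS decomposition from a fixed vertex into $q$ subtrees with a perfect matching between each pair of distance-$d$ leaf sets, the even split of subtrees for even $q$, and a split of one subtree at its depth-$2$ sub-subtrees for odd $q$. You work out the odd case (balancing via a single leaf move, and the $d=1$ degenerate case) in more detail than the paper does, but the underlying construction and count are the same.
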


\begin{proof}
Fix an arbitrary vertex $v$ in $G$ and let its neighbors be $w_1, \ldots, w_q$.  Since the girth of Moore graph is $2d+1$, the diameter is $d$.  For $i \in [q]$ define $W_i$ as the set of vertices whose shortest path to $v$ goes through $w_i$ and define $S_i \subset W_i$ as the vertices are at distance precisely $d$ from $v$.  Note that since $G$ is a Moore graph, for any vertex $s \in S_i$ all the neighbors of $s$ must be in distinct sets $S_j$ where $j \neq i$. 

Suppose first that $q$ is even and consider the bipartition $\paren{ \paren{\cup_{i=1}^{\frac{q}{2} W_i} }\cup \set{v}, \cup_{i=\frac{q}{2}+1}^q W_i}$.   Now clearly each edge in each of the $W_i'$'s does not cross the bipartition, and so the only edges we need concern ourselves with are those adjacent to $v$ and those adjacent to vertices of $S_i$.  Now as each vertex in $S_i$ is adjacent to a vertex in each of the $S_j$'s except $S_i$, this implies that there are $\frac{q}{2} + \frac{q}{2} \sum_{i=1}^{\frac{q}{2}} \size{S_i} = \frac{q}{2}\paren{1+\frac{q}{2}\paren{q-1}^{d-1}}$ edges crossing the bisection.  

The construction for $q$ odd is similar to the one for $q$ even, except rather than placing all of $W_{\frac{q+1}{2}}$ on one side of the partition, the partitioning procedures is done of the trees rooted at the vertices of distance 2 from $v$ in $W_{\frac{q+1}{2}}$.  
\qed \end{proof}


\section{Conclusion}

\begin{figure}
\centering
\includegraphics[width=\textwidth]{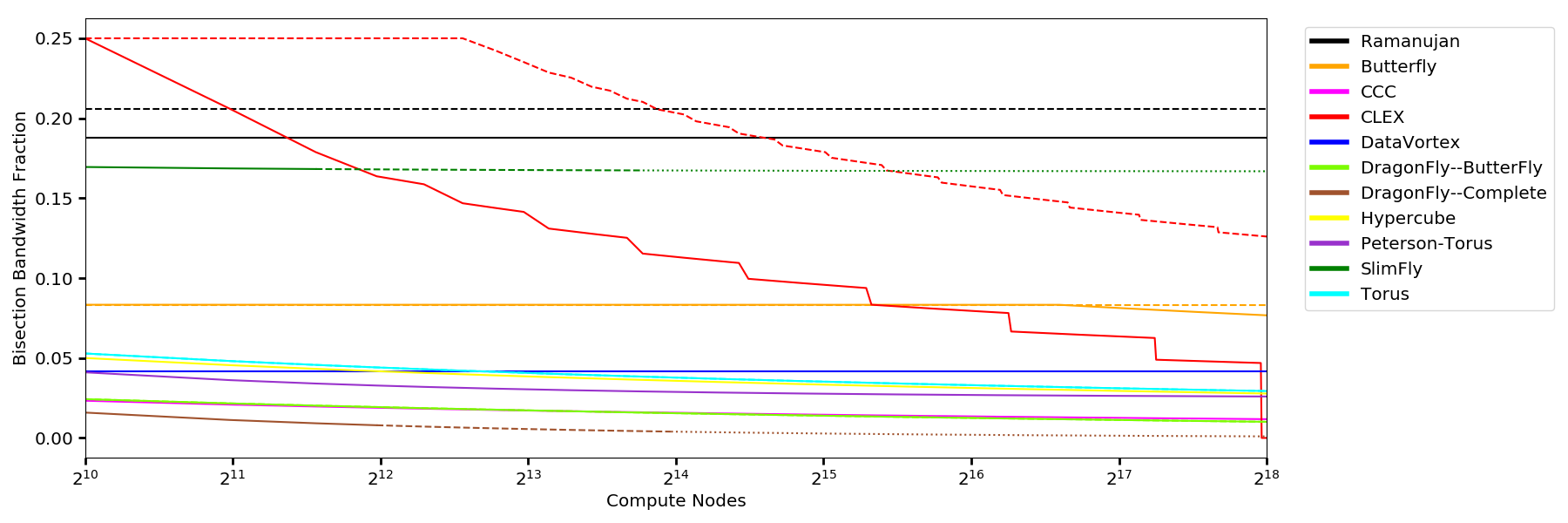}
\caption{Proportional Bisection Bandwidth for supercomputing topologies by number of compute nodes.} \label{fig:dens}
\end{figure}

We provide in Table \ref{T:boundsBW} a summary of the results on the
bisection bandwidth and algebraic connectivity of the topologies
considered in this work. Additionally, for comparison we provide
bounds on the bisection bandwidth and algebraic connectivity for a
similarly sized Ramanujan topology.  We focus on bisection bandwidth
in our comparison, although we remind the reader the spectral results
summarized in Table \ref{T:boundsBW} also provide bounds on a plethora
of other salient interconnection network properties (such as diameter,
average distance, and fault tolerance) via the theorems mentioned in
Section \ref{sec:back}. As closer inspection of the table makes clear,
for each of these topologies there is a significant gap between the
achieved value and the minimum guaranteed to be achievable in an
equivalent Ramanujan topology. However, assessing these results {\it
  across} families is more challenging due to different input
parameters and parameter multiplicities for each topology. To better
enable such a comparison, in Figure \ref{fig:dens} we plot the
proportional\footnote{Relative to sum of the graph degrees, or twice
  the number of links} bisection bandwidth by number of compute nodes
for each topology, as well as the minimum guaranteed by a Ramanujan
topology.  In general the solid lines represent those topologies
with switches comparable to current topologies (that is, having radix
at most 64 as in the Cray Slingshot Topology~\cite{Cray:Slingshot,Cray:XC} while the dashed
lines represent the proportional bisection bandwidth achievable with
next generation switches (radix at most 128), and the dotted line
represents those topologies that would require even higher radix
switches.  We note that even the limitations on the radix are not
sufficient to uniquely determine the highest bisection bandwidth
proportion for some topologies.  Thus we will also impose following
additional assumptions on the topologies with an aim of avoiding
trivial instantiations of the topology:
\begin{itemize}
\item \textbf{Butterfly:} for the Butterfly topology we assume that
  there are at least 3 ranks of switches, i.e.\ $s \geq 3$,

\item \textbf{CLEX:}  for the CLEX topology we assume that there are
  at least two layers $\ell \geq 2$ and that the initial generating
  graph is the complete graph on at least 3 vertices,

\item \textbf{Data Vortex:} for the Data Vortex we assume that there
  are at least 3 ``cylinders'', i.e.\ $C \geq 3$,

\item \textbf{DragonFly -- Butterfly:} similarly to the Butterfly
  topology, for the DragonFly topology where the electrical network is
  given by a Butterfly network, we assume that $s \geq 3$, and

  \item \textbf{Torus:} for the torus topology we assume that all the
    cycles are non-degnerate, i.e. that $k \geq 3$.
  \end{itemize}
Even as we compare these upper bounds on the best-possible bisection bandwidth for each topology against the worst-possible in a Ramanujan topology, we still observe a sizable gap, with the 128 radix SlimFly and CLEX topologies the closest to the Ramanujan lower bound. We suspect the region where CLEX outperforms Ramanujan graphs is an artifact of the looseness of the analysis of CLEX for small parameter settings, rather than a true reflection of the relative sizes of the bisection bandwidths.

\afterpage{
  \newgeometry{left=3cm,right=3cm,top=3cm,bottom=3cm}
  \begin{landscape}
  \vspace*{\fill}\mbox{}
   \begin{table}[h]
     \centering
\renewcommand{\arraystretch}{2}
\begin{tabular}{l | c | c | c |c  || c|   c}
& nodes & radix &  $\rho_2$ upper bound &$\BW$ upper bound  & Ramanujan $\rho_2$ & Ramanujan $\BW$  \\ \hline
Butterfly$(s,k)$ & $sk^s$ & $2k$ & $\frac{4\pi^2
                                   k}{n^2}$&
                                                    $\frac{\paren{k+1}k^{s}}{2}$ & $2k - 2\sqrt{2k-1}$& $\frac{s k^{s+1}}{2} - \bigOh{sk^{s+\nicefrac{1}{2}}}$ \strut\\[2pt] \hline
CCC$(d)$& $d2^d$ & 3 &  $\frac{\pi^2}{\paren{d+2}^2}$& $2^{d-1}$&
                                                                  $3-2\sqrt{2}$
                                                                                 &$\frac{3-2\sqrt{2}}{4}
                                                                                   d2^d\paren{1
                                                                                   +
                                                                                   \lilOh{1}}$ \strut\\[2pt] \hline
CLEX$(k,\ell)$&$k^{\ell}$&$2\ell k -k -1$&$4k-2$&$
                                                                 k^{\ell+1}$&
                                                                               $\paren{1-\lilOh{1}}\paren{2\ell k - 2\sqrt{2\ell k}}$\strut & $ \frac{\ell k^{\ell+1}}{2} - \bigOh{k^{\ell}\sqrt{k\ell}}$ \strut  \\[2pt] \hline
$\DV(A)$&$AC2^{C-1}$&4&$A2^{C-1}$\strut&$A2^{C-2}$\strut
                                    & $ 4-2\sqrt{3}$ &$(1-\nicefrac{\sqrt{3}}{2})AC2^{C-1}\paren{1+\lilOh{1}}$\\[2pt] \hline
$\DF(H)$ & $\size{H}^2 + \size{H}$ & $\frac{2\order{H}}{\size{H}}+1$ & $1 +
                                                \frac{1}{\size{H}}$
                            &                                                               $
                                                               \paren{\frac{\size{H}+1}{2}}^2
                         + \BW\paren{H}$ &   $\frac{2\order{H}}{\size{H}}+1-2\sqrt{\frac{2\order{H}}{\size{H}}}$
  \strut&
                                                  $\frac{\paren{\size{H}+1}\paren{2\order{H}+\size{H}}}{4}
                                                  - \bigOh{\sqrt{\size{H}^3\order{H}}}$
                                                                 \strut
  \\[2.5pt] \hline
  $G \conn_k H$ & $\size{G}\size{H}$ & $\frac{2\order{H}}{\size{H}}+k$ &
                            $k -
                                                                       \frac{k\size{G}}{2\order{G}}\lambda_2(G)$&
                                                                                                                          $\frac{k\size{G}\size{H}}{2\order{G}}\BW(G)+
                                 \BW(H)$ &
                                           $\frac{2\order{H}}{\size{H}}+k-2\sqrt{\frac{2\order{H}}{\size{H}}

                                           + k -1}$
                                                         &
                                                           $\frac{\order{G}\paren{2\size{H}
                                                           +
                                                           k\order{H}}}{4}
  - \bigOh{\size{G}\sqrt{\order{H}\size{H}}}$\\[2pt] \hline
  Hypercube$(d)$ & $2^d$ & $d$ &$2$& $2^{d-1}$ &$d-2\sqrt{d-1}$ &
                                                                  $d2^{d-2}-\bigOh{\sqrt{d2^d}}$  \strut \\[2pt] \hline
  $\PT(a,b), a \geq b$ & $10ab$ & $4$& $\frac{4 -
                                       3\cos\paren{\frac{4\pi}{a}} -
                                       \cos\paren{\frac{2\pi}{a}}}{5}$
                            & $6b + ab + 5$ & $4-2\sqrt{3}$ &
                                               $ab\paren{10-5\sqrt{3}}\paren{1+\lilOh{1}}$ \\[2pt] \hline
  SlimFly$(q)$ & $2q^2$ & $\frac{3q-1}{2} $&$q$& $\frac{q^3+q}{2}$ &$\frac{3q-1}{2} - \sqrt{6q-6} $ & $\frac{3q^3}{4}- \bigOh{q^{\nicefrac{5}{2}}}$ \strut \\[2pt]\hline 
Torus$(k,d)$& $k^d$ & $2d$ &$2(1-\cos(\frac{2 \pi}{k}))$ \strut& $2k^{d-1}$ &$2d-\sqrt{8d-4}$ & $dk^{d-1}-\bigOh{\sqrt{dk^d}}$\strut \\[2pt] \hline 
\end{tabular}
\caption{Table of bounds on algebraic connectivity and bisection bandwidth for common supercomputing topologies}\label{T:boundsBW}
\end{table}
\vspace*{\fill}
\end{landscape}
\restoregeometry}

In light of the beneficial structural properties of random graphs, it is natural to ask whether any potential utility of Ramanujan supercomputing topologies is already offered by randomized constructions, such as the well-known Jellyfish topology. Indeed, such topologies are touted for their low diameter, short average path lengths, and high bisection bandwidth \cite{singla2012jellyfish}. Although random regular graphs are not quite Ramanujan, it is true that random $d$-regular graphs have good spectral expansion. Notably, Friedman's celebrated proof \cite{Friedman2003} of Alon's second eigenvalue conjecture \cite{Alon:EigenvaluesExpanders} showed that if $G$ is a random $k$-regular graph on $n$ vertices then with probability going to 1 as $n \to \infty$, we have $\lambda(G) \leq 2\sqrt{k-1}+o(1)$.  Thus, in the limiting sense, random regular graphs are ``almost Ramanujan." Nonetheless, randomized constructions are also limited as interconnection topologies in that they pose serious challenges for routing, physical layout, and wiring \cite{singla2012jellyfish}. In these regards, structured topologies offer advantages.

Consequently, one may ask whether more structured families, such as Cayley graphs, might serve as a more amenable alternative to random constructions.  Since many of the popular topologies can be phrased as Cayley graphs (e.g. the torus and hypercube topologies) or have a strong connections to Cayley graphs (e.g. the SlimFly and Peterson torus topologies) it is natural to speculate that a Cayley graph could serve as the basis of a strong supercomputing topology. Indeed, work \cite{Akers1987a} investigating Cayley graphs as interconnection networks dates back to at least the 1980's, see \cite{Heydemann1997} for a survey. In particular, {\it abelian} Cayley graphs may seem particularly promising because the classification of abelian groups gives a natural means of easily performing efficient routing.  However, abelian Cayley graphs do not offer the spectral expansion of Ramanujan graphs: as a consequence of a result of Cioab\u{a}~\cite{Cioaba:CayleySpectra} there is a constant $C(k,\epsilon)$ such that if the group has more elements than $C(k,\epsilon)$, then any Cayley graph generated by a $k$-element set has algebraic connectivity at most $\epsilon$.  Thus, for any fixed radix $k$, there does not exist an infinite family of radix $k$ abelian Cayley graphs which are Ramanujan.

Given these tradeoffs between randomized designs and highly structured Cayley graph designs, we believe the explicit Ramanujan construction by Lubotsky, Phillips, and Sarnak warrants further investigation as a candidate for supercomputing interconnection networks. By virtue of their optimal spectral expansion, LPS graphs offer many of the same (if not better) structural properties exhibited by random regular graphs. Yet, as highly structured Cayley graphs, LPS graphs may be more amenable to practical considerations and easier to develop efficient routing schemes for than random constructions. Indeed, recent work by Sardari \cite{Sardari2017}, as well as Pinto and Petit \cite{Pinto2018} investigating short paths in LPS graphs shows that, while sometimes challenging to analyze, the local structure of these topologies may be exploited for the purposes of routing. While the work we've done here attests to the structural benefits of LPS graphs over other supercomputing topologies, additional work is needed to better assess the benefits of utilizing LPS graphs as interconnection networks in practice.

\begin{acknowledgements}
We would like to thank Andres Marquez, Kevin Barker, and Carlos Ortiz-Marrero for helpful discussions. This work was supported by the High Performance Data Analytics program at PNNL. Information Release PNNL-SA-147472. 
\end{acknowledgements}

\bibliographystyle{siam}
\bibliography{scRefs}

\end{document}